\newif\if@restonecol
\tikzstyle{locr}=[draw,circle,minimum size=5mm] 
\tikzstyle{min}=[draw,circle,minimum size=5mm] 
\tikzstyle{max}=[draw,rectangle,minimum size=5mm] 
\tikzstyle{triangle}= [fill=purple!50,regular polygon, regular polygon sides=3,minimum size=5pt, inner sep=+0pt, minimum height=0pt]
\tikzstyle{widget}=[draw=gray,rectangle, rounded rectangle=10pt,dashed,minimum size=6mm]
\tikzset{every loop/.style={looseness=7},trans/.style={font=\footnotesize,above=2mm},}
\tikzstyle{locLabel}=[node distance=5mm,color=black]
\tikzstyle{del}=[node distance=5mm,color=blue]
 \tikzstyle{cdel}=[node distance=5mm,color=white]
\newcommand{\set}[1]{\left\{ #1 \right\}}
\newcommand{\seq}[1]{\langle #1 \rangle}
\newcommand{\Nat}{\mathbb N}
\newcommand{\mypm}{\mathbin{\mathpalette\@mypm\relax}}
\newcommand{\@mypm}[2]{\ooalign{%
  \raisebox{.1\height}{$#1+$}\cr
  \smash{\raisebox{-.6\height}{$#1-$}}\cr}}
\tikzstyle{dashedl}=[dotted]
\tikzstyle{diminish}=[dashed,thick]
\tikzstyle{fun1}=[thick,blue]
\tikzstyle{fun2}=[thick,magenta!50!blue]
\tikzstyle{fun3}=[thick,red]
\tikzstyle{cf}=[thick,purple!50]
\tikzstyle{rep}=[black]
\newcommand{\thesis}[1]{}
\title{Revisiting Robustness in Priced Timed Games}
\author[1]{S. Guha} 
\author[2]{S. N. Krishna}
\author[2]{L. Manasa} 
\author[2]{A. Trivedi}
\affil[1]
{
Department of Computer Science \& Engineering, IIT Delhi, India.\\
\texttt{shibashis@cse.iitd.ernet.in}}
\affil[2]{
  Department of Computer Science \& Engineering, IIT Bombay, India.\\
    \texttt{krishnas,manasa,trivedi@cse.iitb.ac.in}} 
\authorrunning{Guha, Krishna, Manasa and Trivedi}
\begin{document}
\maketitle

\begin{abstract}
  Priced timed games are optimal-cost reachability games played between
  two players---the controller and the environment---by moving a token along the
  edges of infinite graphs of configurations of priced timed automata. 
  The goal of the controller is to reach a given set of target locations as
  cheaply as possible, while the goal of the environment is the opposite. 
  Priced timed games are known to be undecidable for timed automata with $3$ or
  more clocks, while they are  known to be decidable for automata  with $1$
  clock. 
  In an attempt to recover decidability for priced timed games  Bouyer,
  Markey, and Sankur studied robust priced timed games  where the environment has the power to
  slightly perturb delays proposed by the controller. 
  Unfortunately, however, they showed that the natural problem of deciding the
  existence of optimal limit-strategy---optimal strategy of the
  controller where the perturbations tend to vanish in the limit---is undecidable
  with $10$ or more clocks.  
  In this paper we revisit this problem and improve our understanding of the
  decidability of these games. 
  We show that the limit-strategy problem is already undecidable for a subclass of robust
  priced timed games with $5$ or more clocks.  
  On a positive side, we show the decidability of the existence of almost optimal
  strategies for the same subclass of one-clock robust priced timed games  by adapting a classical construction by
  Bouyer at al. for one-clock priced timed games. 
\end{abstract}

\newpage
\section{Introduction}
\label{sec:robust-intro}
Two-player zero-sum games on priced timed automata provide a mathematically
elegant modeling framework for the control-program synthesis problem in
real-time systems.  
In these games, two players---the \emph{controller} and
the \emph{environment}---move a token  along the edges of the infinite graph of
configurations of a timed automaton to construct an infinite execution of the
automaton in order to optimize a given performance criterion. 
The optimal strategy  of the controller in such game then corresponds to
control-program with the optimal performance.   
By priced timed games (PTGs) we refer to such games on priced timed automata with
optimal reachability-cost objective. 
The problem of deciding the existence of the optimal controller strategy in PTGs
is undecidable~\cite{BBR05}  with $3$ or more clocks, while  it is known to be
decidable~\cite{BouLar06} for automata with $1$ clock.  
Also, the $\varepsilon$-optimal strategies can be computed for priced timed
games under the non-Zeno assumption~\cite{ABM04,BCFL04}. 
Unfortunately, however, the optimal controller strategies obtained as a result
of solving games on timed automata may not be physically realizable due to
unrealistic assumptions made in the modeling using timed automata, regarding the
capability of the controller in enforcing precise delays.  
This severely limits the application of priced timed games in control-program
synthesis for real-time systems. 

In order to overcome this limitation, Bouyer, Markey, and Sankur~\cite{ocan}
argued the need for considering the existence of robust optimal strategies and
introduced  two different robustness semantics---\emph{excess} and
\emph{conservative}---in priced timed games.
The key assumption in their modeling is that the controller may not be able to
apply an action at the exact time delays suggested by the optimal strategy.
This phenomenon is modeled as a \emph{perturbation} game where the time delay
suggested by the controller can be perturbed by a bounded quantity. 
Notice that such a perturbation may result in the guard of the corresponding
action being disabled. 
In the conservative semantics, it is the controller's responsibility to make
sure that the guards are satisfied after the perturbation. 
On the other hand, in the excess semantics, the controller is supposed to make
sure that the guard is satisfied before the perturbation: an action can be
executed even when its guard is disabled (``excess'') post perturbation and the
valuations post perturbation will be reflected in the next state. 
The game based characterization for robustness in timed
automata under ``excess'' semantics was first proposed by Bouyer, Markey, and
Sankur~\cite{BMS12} where they study the parameterized robust (qualitative)
reachability problem and show it to be EXPTIME-complete.
The ``conservative'' semantics were studied for reachability and B\"uchi objectives
in~\cite{BMS13a} and shown to be PSPACE-complete.
For a detailed survey on robustness in timed setting we refer to an excellent
survey by Markey~\cite{Mar11}. 

Bouyer, Markey, and Sankur~\cite{ocan} showed that the problem for
deciding the existence of the optimal strategy is undecidable for priced timed
games with $10$ or more clocks under the excess semantics. 
In this paper we further improve the understanding of the decidability of these
games. 
However, to keep the presentation simple, we restrict our attention to
turn-based games under excess semantics.  
To further generalize the setting, we permit both positive and negative price
rates with the restriction that the accumulated cost in any cycle is
non-negative (akin to the standard no-negative-cycle restriction in shortest path game
problems on finite graphs).  
We improve the undecidability result of~\cite{ocan} by proving that optimal
reachability remains undecidable for robust priced timed automata with 5 clocks.
Our second key result is that, for a fixed $\delta$,  the cost optimal
reachability problem for one clock priced timed games with no-negative-cycle
restriction is decidable for robust priced timed games with given bound on
perturbations. To the best of our knowledge, this is the first decidability result 
known for robust timed games under the excess semantics. A 
closely related result is~\cite{prabhu},    
where decidability is shown for robust timed games under the conservative semantics 
 for a fixed $\delta$.


\section{Preliminaries}
\label{sec:prelims}
We write $\Real$ for the set of reals and $\Int$ for the set of integers.
Let $\pclocks$ be a finite set of real-valued variables called \emph{clocks}. 
A \emph{valuation} on $\pclocks$ is a function $\nu : \pclocks \to \Real$.
We assume an arbitrary but fixed ordering on the clocks and write $x_i$
for the clock with order $i$. 
This allows us to treat a valuation $\nu$ as a point $(\nu(x_1), \nu(x_2),
\ldots, \nu(x_n)) \in \Real^{|\pclocks|}$. 
Abusing notations slightly, we use a valuation on $\pclocks$ and a point in
$\Real^{|\pclocks|}$ interchangeably. 
For a subset of clocks $X \subseteq \pclocks$ and valuation $\nu \in \V$, we
write $\nu[X{:=}0]$ for the valuation where $\nu[X{:=}0](x) = 0$ if $x \in X$, and
$\nu[X{:=}0](x) = \nu(x)$ otherwise.
The valuation $\zero \in \V$ is a special valuation such that
$\zero(x) = 0$ for all $x \in \pclocks$.
A clock constraint over $\pclocks$ is a subset of $\Real^{|\pclocks|}$.
We say that a constraint is \emph{rectangular} if it is a conjunction of a finite set of constraints of the form 
$x  \bowtie k,$  where $k \in \Int$, $x \in \pclocks$, and 
$\bowtie \in \{<,\leq, =, >, \geq\}$.    
 For a constraint $g \in \rect(\pclocks)$, we write $\sem{g}$ for the set of valuations in
 $\Real^{|\pclocks|}$ satisfying $g$.  
 We write $\rect(\pclocks)$ for the set of rectangular constraints over
 $\pclocks$. We use the terms constraints and guards interchangeably. 
 
 Following~\cite{BouLar06} we introduce priced timed games with external cost
 function on target locations (see Appendix \ref{app:cost-eg}). 
 For this purpose, we define a \textit{cost function}\cite{BouLar06} as a
 piecewise affine continuous function $f: \Rplus^n \to \R \cup \set{+\infty,-\infty}$. 
 We write $\costfun$ for the set of all cost functions. 

\begin{definition}[Priced Timed Games]
  A turn-based two player \emph{priced timed game} is a tuple
  $\ptg = (\locations_1,\locations_2, L_{init}, \pclocks, \trans, \prices,\rgoals,f_{goal})$ where 
   $\locations_i$ is a finite set of \emph{locations} of Player~$i$, 
     $L_{init} \subseteq L_1 \cup L_2$(let $L_1\cup L_2=L$) is a set of initial locations,
  $\pclocks$ is an (ordered) set of \emph{clocks}, 
  $\trans \subseteq \locations \times {\rect}(\pclocks)  \times 2^{\pclocks}
  \times (\locations \cup \rgoals)$ 
  is the \emph{transition relation}, 
  $\prices \colon \locations \rightarrow \Z$ is the price
  function,
  $\rgoals$ is the set of target locations, $\rgoals \cap \locations =\emptyset$; 
      and
 $f_{goal} : \rgoals \to \costfun$ assigns external cost functions to target locations. 
  \end{definition}
We refer to Player~1 as the controller and Player~2 as the environment. 
A priced timed game begins with a token placed on some initial location $\ell$ with  
valuation $\zero$ and cost accumulated being so far being $0$.  
At each round, the player who controls the current location $\ell$ chooses a delay
$t$ (to be elapsed in  $l$) and an outgoing transition $e =(\ell, g, r, \ell') \in
\trans$ to be taken after $t$ delay at $\ell$. 
The clock valuation is then updated according to the delay $t$, the
reset $r$,  the cost is incremented by $\prices(\ell) \cdot t$ and the token is moved 
to the location $\ell'$.   
The two players continue moving the token in this fashion, and give rise to a
sequence of locations and transitions called a \emph{play} of the game. 
A configuration or state of a PTG is a tuple $(\ell, \val, \cost)$ where $\ell \in
\locations$ is a location,  $\val \in \V$ is a valuation, and  $\cost$ is the
cost accumulated from the start of the play.
We assume, w.l.o.g~\cite{BFHLPRV01}, that the clock valuations are bounded. 
\begin{definition}[PTG semantics]
The semantics of a PTG $\ptg$ is a labelled state-transition game arena 
$\sem{\ptg}$ = $(\nSR = \states_1 \uplus \states_2, S_{init}, A, \nTR, \nP, \nC)$
where 
\begin{itemize}[topsep=0pt,itemsep=-1ex,partopsep=1ex,parsep=1ex]
\item 
  $\states_j = \locations_j \times \V $ are the Player~$j$ states with 
  $\nSR = \states_1 \uplus \states_2$, 
\item 
  $S_{init} \subseteq \nSR$ are initial states s.t. $(\ell, \nu)
  \in S_{init}$ if $\ell \in L_{init}$, $\nu = \zero$,
\item 
  $A = \Rplus \times \trans$ is
  the set of \emph{timed moves}, 
\item
  $\nTR: (\nSR \times A) \to \nSR$ is the transition function s.t. 
  for $s = (\ell, \nu), s' = (\ell', \nu') {\in} \nSR$ and
  $ \tau = (t, e) \in A$ the function $\nTR(s, \tau)$ is defined if 
  $ e = (\ell, g, r, \ell')$ is a transition of the PTG and $\nu \in \sem{g}$;
  moreover $\nTR(s, \tau) = s'$ if $\nu' = (\nu+t)[r{:=}0]$ (we write $s
  \xrightarrow{\tau} s'$ when $\nTR(s, \tau) = s'$);  
\item
  $\nP: \nSR \times A \to \Real$ is the price function such that $\nP((\ell,
  \nu), (t, e)) = \prices(\ell)\cdot t$; and 
\item
  $\nC: \nSR \to \Real$ is an external cost function such that $\nC(\ell, \nu)$  is
  defined when $\ell \in T$  such that $\nC(\ell, \nu) = 
  f_{goal}(\ell)(\nu)$. 
\end{itemize}\end{definition}

A \emph{play} $\rho = \seq{s_0, \tau_1, s_1, \tau_2, \ldots, s_n}$ is a
finite sequence of states and actions s.t.  $s_0 \in S_{init}$ and $s_i
\xrightarrow{\tau_{i+1}} s_{i+1}$ for all $0 \leq i < n$.
The infinite plays are defined in an analogous manner. 
For a finite play $\rho$ we write its last state as $\last(\rho) = s_n$. 
For a (infinite or finite) play $\rho$ we write $\istop(\rho)$ for the index of
first target state and if it doesn't visit a target state then $\istop(\rho) = \infty$.
We denote the set of plays as $\runs_{\ptg}$. 
For a play $\rho = \seq{s_0, (t_1, a_1), s_1, (t_2, a_2), \ldots}$ if
$\istop(\rho) = n < \infty$ 
then $\costp_{\ptg}(\rho) = \nC(s_n) + \sum_{j{=}1}^{n} \nP(s_{i-1}, (t_i,
a_i))$ else $\costp_{\ptg}(\rho) = +\infty$. 

A \emph{strategy} of player $j$ in $\ptg$ is a function $\strat : \runs_{\ptg}
\to A$ such that for a play $\rho$ the function $\strat(\rho)$  is defined if
$\last(\rho) \in \states_j$.  
We say that a strategy $\strat$ is memoryless if $\strat(\rho) = \strat(\rho')$ when 
$\last(\rho) = \last(\rho')$, otherwise we call it memoryful. 
We write $\mminstrats$ and $\mmaxstrats$ for the set of strategies of player $1$ and
$2$, respectively. 

A play $\rho$ is said to be \emph{compatible to a strategy} $\strat$ of player
$j \in \set{1, 2}$ if for every state $s_i$ in $\rho$ that belongs to Player~$j$,   
$s_{i+1} = \strat(s_i)$. 
Given a pair of strategies $(\sigma_1, \sigma_2) \in \mminstrats \times
\mmaxstrats$, and a state $s$,  the outcome of $(\sigma_1, \sigma_2)$ from
$s$ denoted  $\outcome(s, \strat_1, \strat_2)$ is the unique play that starts at $s$ and is
compatible with both strategies.  
Given a player $1$ strategy $\sigma_1 \in \mminstrats$ we define its cost
$\costp_{\ptg}(s,\strat_1)$  as $\sup_{\strat_2 \in
  \mmaxstrats}(\costp(\outcome(s,\strat_1,\strat_2)))$.
We now define the  \emph{optimal reachability-cost} for Player~1 from a state
$s$ as 
\[
\optcost{\ptg}(s) = \inf_{\strat_1 \in \mminstrats} \sup_{\strat_2 \in
  \mmaxstrats} (\costp(\outcome(s,\strat_1,\strat_2))).
\]   
A strategy $\strat_1 \in \mminstrats$ is said to be optimal from $s$ if 
$\costp_{\ptg}(s,\strat_1) = \optcost{\ptg}(s)$. 
Since the optimal strategies may not always exist~\cite{BouLar06} we define
$\epsilon$ optimal strategies.  
For $\epsilon  > 0$ a strategy $\sigma_\epsilon \in \mminstrats$
is called $\epsilon$-optimal if  
$\optcost{\ptg}(s)\leq \costp_{\ptg}(s,\strat_\epsilon) < \optcost{\ptg}(s)+\epsilon$.  
Given a PTG $\ptg$ and a bound $K \in \Z$, the \emph{cost-optimal} reachability
problem for PTGs is to  decide whether there exists a strategy for player~1 such
that $\optcost{\ptg}(s) \leq K$ from some starting state $s$. 
\begin{theorem}[\cite{BBM06}]
  Cost-optimal reachability problem is undecidable for PTGs with $3$ clocks. 
\end{theorem}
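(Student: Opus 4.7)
The plan is to reduce the halting problem for two-counter Minsky machines to the cost-optimal reachability problem for PTGs. Since the two-counter machine halting problem is undecidable, exhibiting a faithful simulation using only three clocks establishes the theorem. The construction follows the general pattern of encoding counter values into clock values and using cost accumulation as the mechanism for enforcing honest simulation, with the environment's adversarial power serving to punish any cheating by the controller.

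First, I would fix an encoding in which a configuration $(q, c_1, c_2)$ of the two-counter machine is represented by a PTG state whose location encodes the control state $q$ and whose clocks carry the counter values through a geometric encoding such as $\nu(x_1) = 1/2^{c_1}$ and $\nu(x_2) = 1/2^{c_2}$. The third clock $x_3$ is used as a unit-length timer that is reset at the start of every simulated instruction, so that each step of the counter machine corresponds to one time unit of delay in the PTG and the widget for each instruction can constrain delays via guards of the form $x_3 = 1$ or $x_3 < 1$.

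Next, I would build widgets for increment, decrement, and zero-test of each counter. In each widget the controller proposes delays that should leave the two encoding clocks in valuations corresponding to the updated counter values (e.g., halving $\nu(x_1)$ for an increment of $c_1$), and the environment is granted a branch in which it can challenge the controller's claim. The prices are chosen so that an honest simulation accumulates cost exactly $K$ along the challenge branch via a cancellation between positive and negative price rates, whereas any deviation from the exact encoding of the counter values (or a wrong answer to a zero-test) leaves a strictly positive residue that the environment can exploit to force cost strictly above $K$. The \emph{no-negative-cycle} hypothesis still allows such cancelling widgets because the negative-cost locations lie on acyclic paths within each widget. The reduction is completed by showing that the machine halts if and only if the controller has a strategy achieving optimal cost at most $K$ from the designated initial state.

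The main obstacle is the three-clock budget: the natural encoding wants one clock per counter plus auxiliary clocks to schedule arithmetic checks and zero-tests, which would consume four or more clocks. The delicate step is to show that the single auxiliary clock $x_3$ can be reset at carefully chosen moments so as to time the cost-verification widget without destroying the encoded values on $x_1, x_2$; in particular, the zero-test widget must extract information about one encoding clock being $1$ (i.e., $c_i = 0$) through a cost comparison, without relying on a dedicated scratch clock. Verifying that this reuse of $x_3$ yields a tight cost analysis, so that cheating is always strictly penalised, is the heart of the proof.
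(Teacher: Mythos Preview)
The paper does not prove this theorem at all: it is stated with a citation to~\cite{BBM06} and then used as background to motivate the paper's own contributions on robust priced timed games. There is therefore no proof in the paper against which to compare your proposal.

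Your sketch is a reasonable high-level outline of the standard two-counter reduction that appears in the cited literature (encoding $c_i$ as $1/2^{c_i}$ in a clock, using a third clock as a unit timer, and letting the environment branch to a cost-checking widget that punishes inexact simulation). However, what you have written is a plan, not a proof: the entire technical content lies in the precise design of the increment/decrement/zero-test widgets and the verification that three clocks suffice, none of which you carry out. In particular, the sentence identifying the ``main obstacle'' correctly names the hard part but does not resolve it. If the intent were to reproduce the result rather than cite it, you would need to exhibit the actual gadgets and the cost accounting; as a citation-level summary your proposal is fine, but it is not a self-contained argument.
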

\begin{theorem}[\cite{BouLar06,HIM13,Rut}]
\label{thm:1ptg_pat}
The $\epsilon$-optimal strategy is computable for $1$ clock PTGs.
\end{theorem}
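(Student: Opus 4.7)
The plan is to follow the classical symbolic value-iteration approach of Bouyer and Larsen~\cite{BouLar06} and extend it to allow external cost functions $f_{goal}$ at target locations. The central structural fact to establish is that for a $1$-clock PTG $\ptg$ the optimal cost-to-target function $\mathit{Val}_\ell(x) = \optcost{\ptg}(\ell,x)$, viewed as a function of the single clock value $x$ over the bounded interval $[0,M]$, is piecewise affine and continuous with finitely many breakpoints, all rationals whose denominators are bounded in terms of the constants in $\ptg$. This follows because region equivalence in one clock produces intervals with integer endpoints, the guards in $\rect(\pclocks)$ are finite unions of such intervals, and the terminal cost functions $f_{goal}$ are themselves piecewise affine.

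First I would set up a Bellman-style operator $F$ acting on tuples $(\mathit{Val}_\ell)_{\ell \in L}$ of piecewise affine functions. For a controller location $\ell \in L_1$ we define
\[
(F\,\mathit{Val})_\ell(x) \;=\; \inf_{(\ell,g,r,\ell')\in\trans,\; t\ge 0,\; x+t\in\sem{g}} \bigl(\prices(\ell)\cdot t + \mathit{Val}_{\ell'}((x+t)[r{:=}0])\bigr),
\]
and symmetrically with $\sup$ for environment locations $\ell \in L_2$. The inner optimisation over the real parameter $t$ of a piecewise affine function is itself piecewise affine in $x$, and a routine geometric argument bounds the number of breakpoints of $F\,\mathit{Val}$ polynomially in the number of breakpoints of its input and the size of $\ptg$. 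Initialising $\mathit{Val}_\ell \equiv +\infty$ on non-targets and $\mathit{Val}_\ell = f_{goal}(\ell)$ on targets, we iterate $F$ until an $\varepsilon$-approximation of the least fixed point is obtained.

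Two technical points then need care. Since the infimum in $F$ may be attained only in the limit, optimal strategies may fail to exist; from the approximate fixed point we instead extract an $\varepsilon$-optimal memoryless strategy by choosing, on each sub-interval of the final breakpoint partition and at each location, a delay-transition pair witnessing the infimum up to $\varepsilon/|L|$. Secondly, the iteration need not stabilise in a number of steps bounded in $|\ptg|$ alone, but the no-negative-cycle hypothesis rules out unbounded accumulation of negative cost along loops and thereby bounds both the maximum useful delay in any location and the total cost of an optimal play; these in turn bound the number of Bellman iterations sufficient to reach precision~$\varepsilon$.

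The main obstacle is the clean bookkeeping of breakpoints and slopes across iterations, together with a proof that after finitely many applications of $F$ the computed values stay uniformly within $\varepsilon$ of $\optcost{\ptg}$. Following~\cite{HIM13,Rut}, I would separate progress on the finite discrete structure (reducing the set of locations whose value has not yet stabilised) from the quantitative bookkeeping on the piecewise affine representation, obtaining an algorithm whose running time is polynomial in $|\ptg|$, $M$, $\max_\ell |\prices(\ell)|$, and $\log(1/\varepsilon)$.
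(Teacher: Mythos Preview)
The paper does not prove this theorem; it is quoted as a known result from~\cite{BouLar06,HIM13,Rut}. However, Section~\ref{sec:robust-dec} of the paper adapts exactly the~\cite{BouLar06} machinery to the robust setting, so one can read off what the cited proof actually does, and it is \emph{not} value iteration. The~\cite{BouLar06} argument is a chain of structural reductions: (i) bound the clock to $[0,1]$ by tracking the integer part in the location (Transformation~2 here); (ii) remove resets by unfolding into finitely many reset-free copies (Transformation~3); (iii) solve each reset-free component by picking the location $\lmin$ of minimum price, arguing that an optimal play visits $\lmin$ at most once, splitting the game into two smaller reset-free games around $\lmin$, and recursing (Section~\ref{solve} and Algorithm~\ref{algo:optcost_compute}). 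The piecewise-affine optimal-cost functions and the $(\epsilon,N)$-acceptable strategies are then read off directly from this finite recursion, with an explicit (triply exponential) bound on the number of affine pieces.

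Your Bellman-operator approach is a genuinely different route, and the gap is termination. You assert that the no-negative-cycle hypothesis ``bounds the total cost of an optimal play; these in turn bound the number of Bellman iterations sufficient to reach precision~$\varepsilon$'', but a bound on the optimal \emph{value} does not bound the number of iterations of $F$: a play may pass through arbitrarily many zero-cost (or near-zero-cost) steps before accruing its cost, and each application of $F$ only propagates information by one step. In the timed setting this is worse, because a single Bellman step already optimises over a continuum of delays, yet cycles through several locations still require as many iterations as their discrete length, and there is no a priori bound on how many times an $\varepsilon$-optimal play traverses a non-negative cycle of vanishing cost. Relatedly, your claim that the breakpoint count grows only polynomially \emph{per iteration} does not yield a global bound without a bound on the number of iterations. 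The~\cite{BouLar06} decomposition sidesteps all of this: it never iterates to a fixed point, it peels off one location at a time and solves an explicitly smaller game, so termination and the segment bound come from a straightforward induction on the number of locations.
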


\section{Robust Semantics}
\label{sec:robust-prelims}
Under the robust semantics of priced timed games the environment player---also
called as the perturbator---is more privileged as it has the power to perturb any
delay chosen by the controller by an amount  in $[-\delta, \delta]$, where
$\delta>0$ is a pre-defined bounded  quantity. 
However, in order to ensure time-divergence there is a restriction that the time
delay at all locations of the RPTG must be $\geq \delta$.          
There are the following two perturbation semantics as defined in~\cite{ocan}.
\begin{itemize}
\item   {\it{Excess semantics}}. At any controller location, the time delay     
  $t$ chosen by the controller is altered to some $t' \in [t-\delta,t+\delta]$ by the perturbator. 
  However, the constraints on the outgoing transitions of the controller
  locations are evaluated with respect to the time elapse $t$ chosen by the
  controller.  If the constraint is satisfied with respect to $t$, then the
  values of all variables which are not reset on the transition are updated with
  respect to $t'$; the variables which are reset  obtain value 0. 
\item {\it{Conservative semantics}}. In this, the constraints on the outgoing
  transitions are evaluated with respect to $t'$. 
\end{itemize}
In both semantics, the delays chosen by perturbator at his locations are not
altered, and the constraints on outgoing transitions are evaluated  in the
usual way, as in PTG.  
   
A Robust-Priced Timed Automata (RPTA)
is an RPTG which has only controller locations. 
At all these locations, for any time delay $t$ 
chosen by controller, 
perturbator can implicitely perturb $t$ by a quantity in $[-\delta,\delta]$.  
The excess  as well as the conservative perturbation semantics for RPTA are defined 
in the same way as in the RPTG. 
Note that our RPTA coincides with that of \cite{ocan} when 
the cost functions at all target locations are of the form 
$cf: \Rplus^n \rightarrow \{0\}$. 
  Our RPTG are turn-based, and have cost funtions at the targets, while  RPTGs studied in \cite{ocan} are concurrent.


\begin{definition}[Excess Perturbation Semantics]
Let $\rtg = (\locations_1,\locations_2, L_{init}, \pclocks, \trans, \prices,\rgoals,f_{goal})$
be a RPTG. 
Given a $\delta>0$, the excess perturbation semantics of  RPTG $\rtg$ is a LTS 
$\sem{\rtg}$ = $(\nSR,\nAR,\nTR)$
where $\nSR = \states_1 \cup \states_2 \cup (\rgoals \times \Rplus)$, 
$\nAR = \actions_1 \cup \actions_2$ and $\nTR =\nTR_1 \cup \nTR_2$. 
We define the set of states, actions and transitions for each player below. 
\begin{itemize}[topsep=0pt,itemsep=-1ex,partopsep=1ex,parsep=1ex]
\item $\states_1 = \locations_1 \times \V $ are the controller states,
\item $\states_2 = (\locations_2 \times \V ) \cup (\states_1 \times \Rplus \times \trans)$ are the perturbator states. 
The first kind of states are encountered at perturbator locations. The second kind
of states are encountered when controller chooses a delay $t \in \Rplus$ and a transition $e \in X$ at a controller location. 
 \item $\actions_1 = \Rplus \times \trans$ are controller actions
\item $\actions_2 = (\Rplus \times \trans) \cup [-\delta,\delta]$ are perturbator actions. The first 
kind of actions $(\Rplus \times \trans)$ are chosen at states of the form  
$\locations_2 \times \V \in \states_2$, while the second kind of actions are chosen 
at states of the form $\states_1 \times \Rplus \times \trans \in \states_2$,
\item $\nTR_1 = (\states_1 \times \actions_1 \times \states_2)$ is the set of 
controller transitions such that for a controller state $(l,\val)$ and a controller action $(t,e)$, 
$\nTR_1((l,\val),(t,e))$ is defined iff there is a transition $e=(l,g,a,r,l')$ in $\rtg$ such that 
$\val + t \in \sem{g}$. 
\item $\nTR_2 = \states_2 \times \actions_2 \times (\states_1 \cup \states_2 \cup (\rgoals \times \Rplus))$ is the set of perturbator transitions
such that 
\begin{itemize}[topsep=0pt,itemsep=-1ex,partopsep=1ex,parsep=1ex]
\item 
For a perturbator state of the type $(l,\val)$ and a perturbator action $(t,e)$, we have 
$(l',\val')  = \nTR_2((l,\val),(t,e))$ iff there is a transition $e=(l,g,a,r,l')$ in $\rtg$ such that 
$\val + t \in \sem{g}$, $\val' = (\val + t)[r:=0]$,
\item For a perturbator state of type $((l,\val),t,e)$ and a perturbator action $\varepsilon \in [-\delta,\delta]$, we have $(l',\val') = \nTR_2(((l,\val),t,e),\varepsilon)$ iff 
 $e = (l,g,a,r,l')$,
 and $\val' = (\val + t + \varepsilon)[r:=0]$. 
\end{itemize}
\end{itemize}
\end{definition}
We now define the cost of the transitions, denoted as $\costp(t,e)$ as follows : 
\begin{itemize}[topsep=0pt,itemsep=-1ex,partopsep=1ex,parsep=1ex]
\item For controller transitions : $(l,\val) \xrightarrow{(t,e)} ((l,\val),t,e)$ : the cost accumulated is $\costp(t,e)=0$. 
 \item For perturbator transitions : 
 \begin{itemize}[topsep=0pt,itemsep=-1ex,partopsep=1ex,parsep=1ex]
  \item From perturbator states of type $(l,\val)$ :  $(l,\val) \xrightarrow{t,e} (l',\val')$, the cost accumulated is  $\costp(t,e)=t * \prices(l)$. 
  \item From perturbator states of type $((l,\val),t,e)$ : $((l,\val),t,e) \xrightarrow{\varepsilon} (l',\val')$, 
  the cost accumulated is $(t + \varepsilon) * \prices(l)$. Note that although this transition has no edge choice involved and 
  the perturbation delay chosen is $\varepsilon\in[-\delta,\delta]$, the controller action $(t,e)$ chosen 
  in the state $(l,\val)$ comes into effect in this transition. Hence for the sake of uniformity, we denote the 
  cost accumulated in this transition to be $\costp(t+\varepsilon,e) =  (t + \varepsilon) * \prices(l)$.  
 \end{itemize}
\end{itemize}
Note that  we check  satisfiability 
of the constraint $g$ before the perturbation; however, the  reset occurs after the perturbation. 
The notions of a path and a winning play are  the same as in PTG. We shall now 
adapt the definitions of cost of a play, and a strategy for the excess perturbation semantics. 
Let $\rho$ $= \seq{s_1, (t_1, e_1),  s_2, (t_2, e_2), \cdots (t_{n-1}, e_{n-1}),
s_n}$ be a path in the LTS $\sem{\rtg}$. 
Given a $\delta>0$, for a finite play $\rho$ ending in target location, we define $\costp_{\rtg}^{\delta}(\rho) = \sum_{i=1}^{n} \costp(t_i,e_i) + f_{goal}(l_n)(\val_n)$ as the sum of the costs of all transitions as defined above along with the value from the cost function of the target location $l_n$.  Also, we re-define the cost of a strategy $\strat_1$ from 
a state $s$ for a given $\delta>0$ as $\costp_{\rtg}^{\delta}(s,\strat_1) 
 = \sup_{\strat_2 \in \maxstrats{\rtg}} \costp_{\rtg}^{\delta}(\outcome(s,\strat_1,\strat_2))$. Similarly, $\optcostd{\rtg}$ is the optimal cost 
 under excess perturbation semantics for a given $\delta>0$ defined as 
 \[ 
 \optcostd{\rtg}(s) = 
 \inf_{\strat_1 \in \minstrats{\rtg}} \sup_{\strat_2 \in \maxstrats{\rtg}}
 (\costp^{\delta}_{\rtg}(\outcome(s,\strat_1,\strat_2))).
\]
Since  optimal strategies may not always exist,  we define 
$\epsilon-$optimal strategies such that 
for every $\epsilon>0$, 
$\optcostd{\rtg}(s) \leq \costp_{\rtg}^{\delta}(s,\strat_1) < \optcostd{\rtg}(s)+\epsilon$.
Given a $\delta$ and a RPTG $\rtg$ with a single clock $x$, a strategy $\sigma_1$ is called
\emph{$(\epsilon,N)-$acceptable} \cite{BouLar06} for $\epsilon>0, N \in \Nat$ when 
(1)it is memoryless, (2)it is $\epsilon-$optimal and (3)there exist $N$ consecutive intervals 
$(I_i)_{1{\leq}i{\leq}N}$ partitioning $[0,1]$ such that for every location $l$, 
for every $1{\leq}i{\leq}N$ and every integer $\alpha < M$ (where $M$ is the
maximum bound on the clock value), the function that maps the clock values
$\nu(x)$ to the cost of the strategy $\sigma_1$ at every state $(l,\nu(x))$,
($\nu(x)\mapsto \costp_{\rtg}^{\delta}((l,\nu(x)),\strat_1)$) is affine for
every interval $\alpha + I_i$. Also, the strategy $\sigma_1$ is constant over
the values $\alpha+I_i$ at all locations, that is, when $\nu(x) \in \alpha+I_i$,
the strategy $\strat_1(l,\nu(x))$ is constant. 
The number $N$ is an important attribute of the strategy as it establishes that
the strategy  does not fluctuate infinitely often and is implementable.   

Now, we shall define limit variations of costs, strategies and values as $\delta \rightarrow 0$. 
The \emph{limit-cost} of a controller strategy $\sigma_1$ from  state $s$ 
is defined over all plays $\rho$ starting from $s$ that are compatible with
$\sigma_1$ as:
\[ 
  {\limcost}_{\rtg}(s,\strat_1) 
  = \lim_{\delta \to 0} \sup_{\strat_2 \in \maxstrats{\rtg}}
  \costp_{\rtg}^{\delta}(\outcome(s,\strat_1,\strat_2)).
  \] 
  The \emph{limit strategy upper-bound problem~}\cite{ocan} for excess perturbation semantics asks, given a RPTG $\rtg$, state $s=(l,\zero)$ with  cost 0 and a rational number $K$, whether there exists 
 a strategy $\strat_1$ such that $\limcost_{\rtg}(s,\strat_1) \leq K$. 
The following are the main results of \cite{ocan}. 
 \begin{wrapfigure}[10]{o}{3.2cm}
   \scalebox{.8}{
    \begin{tikzpicture}[overlay]
     \draw[thick]  (0.1,3)--(4.2,3) --(4.2,-2.6)--(0.1,-2.6) ;
     
     \node[fill=yellow] (label) at (2.1,3) {(Non-)Negative Cycles};
    \end{tikzpicture}
    \begin{tabular}{c}
 
 \begin{tikzpicture}[->,>=stealth',shorten >=1pt,auto,node distance=1cm, semithick]
\tikzstyle{every state}=[minimum size=1em]
   \node[state,initial, initial where=above, initial text=$x<1$] at (0,12) (A) {-1} ;
  \node[max] at (3,12) (B) {1} ;
  \node[state,initial,initial where=above, initial text=$x<1$] at (0,10) (C) {-1} ;
  \node[max] at (3,10) (D) {1} ;
  \node[triangle] at (0,8) (E) {0} ;
        \path(A) edge[bend left]  node[midway,above] {$x<1$} (B);
        \path(A) edge[bend left]  node[midway,below] {$x:=0$} (B);
\path(B) edge[bend left]  node [midway,above] {$x<1$}(A);
\path(B) edge[bend left]  node [midway,above] {}(A);

\path(C) edge[bend left]  node[midway,above] {$x=1$} (D);
       \path(C) edge[bend left]  node[midway,below] {$y:=0$} (D);
\path(D) edge[bend left]  node [midway,above] {$x<1$}(E);
\path(D) edge[bend left]  node [midway,below] {$y=0$}(E);
\path(D) edge[bend left]  node [midway,above] {$x=1,y=0$}(C);
\path(D) edge[bend left]  node [midway,below] {$x:=0$}(C);

\end{tikzpicture}
 
   \end{tabular}

}
    
  \end{wrapfigure}
\begin{theorem}[Known results \cite{ocan}]
 \label{thm:robust_known}
\begin{enumerate}
 \item The limit-strategy upper-bound problem is undecidable for 
 RPTA and RPTG under excess perturbation semantics, for $\geq 10$  clocks.
    \item  For a fixed $\delta \in[0, \frac{1}{3}]$, 
 and a given RPTA $\rta$, a target location $l$ and a rational $K$, 
 it is undecidable whether  $\inf_{\strat_1} \sup_{\strat_2}cost_{\strat_1,\strat_2}(\rho) < K$ such that $\rho$ ends in $l$.
  $cost_{\strat_1,\strat_2}(\rho)$ is the cost of the unique run $\rho$ obtained 
  from the pair of strategies $(\strat_1, \strat_2)$.
\end{enumerate}
 \end{theorem}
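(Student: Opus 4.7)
My plan would be to reduce from the halting problem of two-counter (Minsky) machines, which is the standard source of undecidability for timed automata/games problems, and adapt it to cope with perturbations.

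First, for part 1, I would encode a configuration $(q, c_1, c_2)$ of the two-counter machine in the RPTA/RPTG using clock valuations --- the classical Hopcroft--Ullman-style encoding represents counter values by fractional clock valuations such as $x = 1/(2^{c_1} \cdot 3^{c_2})$, or alternatively by linearly encoded distances between clocks. Since perturbations can destroy such exact encodings, I would introduce \emph{redundant} clocks holding shifted copies of the same value, and let the perturbator choose between alternative paths (``check'' branches) that compare those copies. If the controller ever fails to restore the exact encoding, one of the checker branches assigns a large cost. The reason the count rises to roughly $10$ clocks is exactly this redundancy: one needs enough auxiliary clocks to simulate increment, decrement, and zero-test gadgets whose correctness can be enforced modulo an arbitrarily small error. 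The key limiting step is then to pass from ``error bounded by $\delta$'' to ``error vanishes in the limit'': I would argue that if the machine halts, the controller has, for every $\delta>0$, a strategy of cost at most $K$ by choosing delays that anticipate the perturbation (e.g.\ of the form $1-\delta$ instead of $1$), and these strategies converge in cost to $K$ as $\delta\to 0$, whereas if the machine does not halt, the perturbator can always force a detour into a punishing branch.

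For part 2, the $\delta$ is fixed in $[0,1/3]$, so I cannot let $\delta$ vanish. Here the reduction has to be robust in a stronger sense: the gadgets must tolerate a full perturbation of magnitude $\delta$ without changing the cost of the faithful simulation, while still punishing deviations. The trick is to design the guards so that only one delay in each gadget is compatible with continuing the simulation, and the perturbator's $[-\delta,\delta]$ interval is entirely absorbed by the next reset, so that the exact encoded counter value is \emph{re-synchronized} after every step. Again, if the controller deviates, the perturbator chooses a branch that propagates the error and inflates cost above $K$. Because $\delta$ is bounded away from $1$ by the $1/3$ constraint, one has enough room in each unit interval to fit the increment/decrement/zero-test macros together with their ``cheat-detecting'' alternatives.

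The main obstacle I expect is the design of the zero-test gadget in the presence of perturbations: unlike increment and decrement, the zero-test compares an encoded quantity to an exact integer boundary, and a perturbation of $\pm\delta$ can flip the outcome of a strict inequality. I would handle this by a double-check gadget where the controller commits to ``zero'' or ``nonzero'' before entering a branch whose guards the perturbator can use to refute the commitment, giving unbounded (or $>K$) cost on a false claim. Putting these pieces together and counting clocks carefully --- one for the global tick, two or three per counter for the encoded value and its shifted copies, and a couple for the checker branches --- yields the $10$-clock bound for part 1 and the fixed-$\delta$ undecidability for part 2, matching the statements of \cite{ocan}.
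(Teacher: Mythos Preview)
This theorem is not proved in the paper: it is explicitly stated as ``Known results \cite{ocan}'' and is simply quoted from Bouyer, Markey, and Sankur. There is therefore no proof in the paper to compare your proposal against. The paper's own contribution is the \emph{improvement} in Theorem~\ref{thm:robust_undec} (five clocks instead of ten), and that is where the detailed reduction appears.

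Your sketch is in the right spirit --- a reduction from the halting problem for two-counter machines, with counter values encoded in clock fractions and cheat-detection gadgets that punish the controller if it deviates --- and this is indeed the technique used both in \cite{ocan} and in the paper's Section~\ref{sec:robust-undec}. A couple of points of divergence worth noting: the paper (and presumably \cite{ocan}) uses the encoding $x_1 = 1/2^{c_1}$, $z = 1/2^{c_2}$ with separate clocks per counter rather than a single $1/(2^{c_1}3^{c_2})$ clock; and the perturbation is not ``absorbed by the next reset'' as you suggest for part~2, but rather accumulates as an error term $\varepsilon$ that is bounded via a growth lemma (the paper's Lemma~\ref{lem:ocan}) of the form $f_1\circ\cdots\circ f_n(x)\le 3^n x$. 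The limit argument then shows that for a halting computation of length $m$, choosing $\delta$ small enough keeps the accumulated error below any prescribed threshold, so the limit cost is exactly the target bound. Your ``re-synchronization after every step'' idea would not work here, since the excess semantics lets the perturbator shift the clock \emph{after} the guard is checked and \emph{before} the reset, so some drift is unavoidable.
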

 We consider a semantic subclass of RPTGs in which the accumulated cost
  of any cycle is non-negative: that is,  
  any iteration of a cycle will always have a non-negative cost. 
  Consider the two cycles depicted.  
The one on top  has a non-negative cost, while 
the one below always has a negative cost. 
In the cycle below, the perturbator will not perturb, since that will lead to a
target state. 
In the rest of the paper, we consider this semantic class of RPTGs (RPTAs), and prove decidability and undecidability 
results; however, we will refer to them as RPTGs(RPTAs). 
Our key contributions are the following theorems. 
   \begin{theorem}
  \label{thm:robust_undec}
  The limit-strategy upper-bound problem is undecidable for RPTA with 5 clocks, location prices in $\set{0,1}$, and 
  cost functions $cf: \Rplus^n \rightarrow \{0\}$ at all target locations. 
 \end{theorem}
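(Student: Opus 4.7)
The plan is to reduce the halting problem for deterministic two-counter Minsky machines to the limit-strategy upper-bound problem, and to engineer the reduction with only five clocks---an improvement over the ten-clock construction of~\cite{ocan}. The additional restrictions in the statement (prices in $\{0,1\}$, zero target cost, non-negative cycles) fit naturally with this approach: all ``penalty'' for cheating must be carried by the dwell time spent in cost-$1$ locations, and the non-negative-cycle assumption is automatic under the price restriction. A configuration $(q,c_1,c_2)$ of the Minsky machine is encoded by the discrete location $q$ together with two clocks $y_1,y_2$ holding fractional values such as $1/2^{c_i}$; one clock $x$ enforces a one-time-unit tick between consecutive instructions, and two scratch clocks $z_1,z_2$ provide the workspace inside each widget, giving the budget of five clocks.

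For each instruction I would design a unit-length widget in which the controller proposes a sequence of delays that implements the arithmetic update on the appropriate $y_i$ (halve for increment, double for decrement, compare with $1$ for zero-test), while $y_{3-i}$ is preserved by a short copy-via-detour routine using $z_1,z_2$. Under excess perturbation every delay proposed by the controller can be shifted by up to $\delta$ by the implicit perturbator; the widgets are built so that each delay is constrained by a tight guard, and any deviation from the intended value by more than $\delta$ lets the perturbator route the play through a cost-$1$ detour whose length is bounded below by a constant. Because the non-negative-cycle hypothesis forbids the controller from absorbing this penalty in a loop, a cheating controller pays a fixed strictly positive amount per cheated instruction. As $\delta\to 0$ the controller can make delays arbitrarily accurate, so the limit cost stays below a suitably chosen threshold $K$ if and only if the machine halts.

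The main obstacle is squeezing the construction into five clocks. The natural copy-via-detour that preserves $y_{3-i}$ would like two fresh scratch clocks per counter, which is exactly what blows the previous construction up; I would replace this by a single serialised copy that re-initialises $z_1,z_2$ at the start of every widget and re-uses them across both counters within one instruction, splitting each widget into an ``arithmetic'' sub-phase on $y_i$ and a ``check'' sub-phase using $z_1,z_2$ on $y_{3-i}$. The subtle point to verify is that the perturbator cannot exploit residual values of $z_1,z_2$ left over from the preceding instruction to trigger a spurious cost-$1$ detour along a faithful simulation; this should follow from a careful placement of resets and from the fact that each widget consumes exactly one time unit. Once this interleaving is pinned down, the standard limit argument---that an exact simulation exists in the limit $\delta\to 0$ exactly when the Minsky machine halts---completes the undecidability reduction.
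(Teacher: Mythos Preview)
Your high-level plan is the paper's plan: reduce from the halting problem for two-counter machines, encode counters as $1/2^{c_i}$ in two clocks, and use five clocks overall with prices in $\{0,1\}$. But there is a genuine conceptual gap in your mechanism for punishing a cheating controller, and it is exactly the place where the paper's construction does its real work.

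You write that ``any deviation from the intended value by more than $\delta$ lets the perturbator route the play through a cost-$1$ detour.'' In an RPTA the perturbator owns \emph{no} locations and chooses \emph{no} edges; its only power is to shift the controller's delay by an amount in $[-\delta,\delta]$. So the perturbator cannot ``route'' anything, and your punishment mechanism as described cannot be implemented. The paper resolves this with two gadgets you do not mention:
\begin{itemize}
  \item A \emph{Choice} gadget that turns the perturbator's sign of perturbation into an effective branching choice: the controller proposes a delay, and depending on whether the perturbation is positive or negative a different guard becomes satisfiable at the next step, so the controller is forced into one of two successors selected by the perturbator. This is what lets the perturbator decide to enter a test module and punish a cheating controller.
  \item A \emph{prevent-perturbation} gadget giving the controller effectively unperturbed edges: on an edge with a point guard, if the perturbator perturbs, the controller can detect it at the next step and reach a cost-$0$ target immediately, so a rational perturbator will not perturb there. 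Without this, errors accumulate on every delay and the arithmetic widgets cannot be made precise enough.
\end{itemize}
Both ideas are essential to make the reduction go through in the RPTA setting; your proposal as written lacks any substitute for them.

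A smaller point: your ``copy-via-detour'' concern about preserving $y_{3-i}$ is solving the wrong problem. The paper does not copy the untouched counter; it simply arranges that the total elapsed time in each instruction widget is an integer, so the fractional part of the other counter's clock is preserved automatically, with a Restore module afterward to strip the accumulated integer shift. This frees up scratch clocks and is part of why five suffice.
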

 
 \begin{theorem}
 \label{thm:robust_dec}
  Given a 1-clock RPTG $\rtg$
  and a $\delta>0$, we can compute $\optcostd{\rtg}(s)$ for every state $s=(l,\val)$. 
   For every $\epsilon >0$, there exists an $N \in \Nat$ such that the controller 
 has an $(\epsilon,N)$-acceptable strategy.  
 \end{theorem}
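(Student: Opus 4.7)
The plan is to adapt the symbolic value-iteration algorithm of Bouyer--Laroussinie (Theorem~\ref{thm:1ptg_pat}) to the robust setting, exploiting two ingredients: the one-clock hypothesis (which keeps the value function essentially one-dimensional) and the no-negative-cycle restriction (which guarantees termination). First, I would write the Bellman fixpoint characterisation of $\optcostd{\rtg}$. At a controller state $(l,\nu)$ the controller picks $(t,e)$ satisfying the guard $g$ of $e$, and the perturbator then picks $\varepsilon \in [-\delta,\delta]$, so
\[
V_l(\nu) \;=\; \inf_{(t,e)}\;\sup_{\varepsilon\in[-\delta,\delta]} \bigl[(t+\varepsilon)\,\prices(l) \;+\; V_{l'}\bigl((\nu+t+\varepsilon)[r{:=}0]\bigr)\bigr],
\]
while at a perturbator state the $\sup$ is taken directly over $(t,e)$ with no perturbation. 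The base case is $V_l(\nu) = f_{goal}(l)(\nu)$ at target locations.

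The central technical step is a structural invariant: each $V_l$ stays a piecewise affine continuous function of $\nu \in [0,M]$ with finitely many rational breakpoints and bounded integer slopes, and this shape is preserved by one application of the Bellman operator. For a controller position, the inner $\sup$ over $\varepsilon$ of an affine-in-$\varepsilon$ expression is attained either at the endpoints $\pm\delta$ or at a point where $\nu+t+\varepsilon$ meets a breakpoint of $V_{l'}$, so the result is a pointwise maximum of finitely many affine functions of $(t,\nu)$; the subsequent $\inf$ over $t$ yields a piecewise affine function of $\nu$. The reset map (which collapses $\nu$ to $0$ or leaves it unchanged), the guard intervals, and the perturbation window $\delta$ together cut the clock axis into a controlled number of subintervals, giving a uniform bound on how many new pieces are created per iteration.

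Termination of value iteration follows from the no-negative-cycle hypothesis combined with the $\delta$-lower bound on delays: every cycle in the region graph contributes non-negative cost and consumes strictly positive time, so any play that beats the current estimate of $V$ has bounded length, and a standard potential argument (in the spirit of shortest-path computation with non-negative cycles) shows that the fixpoint is reached in finitely many rounds. Once $V$ has been computed, the $(\epsilon,N)$-acceptable strategy is extracted by picking, on each affine piece, a pair $(t,e)$ that realises $V_l$ up to an $\epsilon$-additive error; since the pieces are finitely many and have rational endpoints, the strategy is memoryless and constant on pieces, which yields the required $N$ after refinement along the unit intervals partitioning $[0,M]$.

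The main obstacle is the piecewise-affine invariant at controller positions. The adversarial perturbation forces $V_l$ to be defined via an $\inf\sup$ over continuous variables, and one must rule out the creation of irrational breakpoints or a super-polynomial growth of pieces from one iteration to the next. A further subtlety is that guards are tested \emph{before} perturbation but resets are applied \emph{after} it, so the case analysis separating ``the perturbation crosses a boundary of $V_{l'}$'' from ``the perturbation stays within one affine piece'' must be carried out carefully to keep the representation size uniformly bounded across iterations; pinning down this bound is what makes the algorithm effective and also what determines the final $N$ in the $(\epsilon,N)$-acceptable strategy.
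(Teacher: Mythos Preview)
Your plan differs substantially from the paper's, and the difference matters for termination. The paper does \emph{not} run value iteration on the robust game. Instead it performs three cost-preserving transformations that eliminate the robust semantics altogether: (i) for the fixed $\delta$, every controller edge $e$ out of $l$ is replaced by a small gadget of new Player~2 locations $(l,e)$, $\negP{(l,e)}$, $\posP{(l,e)}$ carrying explicit \emph{dwell-time} constraints $[0,0]$, $[0,\delta]$, $[\delta,2\delta]$, so that the perturbation becomes an ordinary Player~2 delay in an ordinary (dwell-time) PTG; (ii) the resulting dwell-time PTG is recoded so the clock stays in $[0,1+\delta]$ (this requires a new ``fractional reset'' operator to preserve the perturbation across integer boundaries); (iii) resets are unrolled into $n{+}1$ reset-free copies. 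Each reset-free component is then solved by the structural algorithm of~\cite{BouLar06}: pick a minimal-price location $l_{\min}$, argue from the no-negative-cycle assumption that an optimal play visits $l_{\min}$ at most once, split the game into two strictly smaller games, and recurse. The explicit recurrence $\alpha(m,p)$ on the number of affine pieces gives a (triply exponential) bound that simultaneously proves termination and furnishes the $N$ in the $(\epsilon,N)$-acceptable strategy.

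Your Bellman/value-iteration route is natural, but the step ``the fixpoint is reached in finitely many rounds'' is exactly where one-clock PTGs are delicate, and the no-negative-cycle assumption together with the $\delta$-minimum delay does not by itself deliver it: plays can still be arbitrarily long, and nothing prevents the number of affine pieces of the iterates from growing without a uniform bound across rounds. Indeed, the reason~\cite{BouLar06} introduced the $l_{\min}$-splitting technique is precisely that naive value iteration on one-clock PTGs need not stabilise in finitely many steps. You correctly flag the piece-count invariant as ``the main obstacle'', but bounding the pieces created \emph{per iteration} is useless unless you separately bound the number of iterations, and your potential argument for the latter is only a sketch. To make your approach go through you would essentially have to re-derive the structural bound that the paper obtains for free from the $l_{\min}$ recursion; the point of the transformation to a dwell-time PTG is that it lets the paper reuse that machinery wholesale instead of redoing the $\inf\sup$ analysis inside the Bellman operator.
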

The rest of the paper is devoted to the proof sketches of these two theorems,
 while we give detailed proofs in the appendix. 

\section{Undecidability with 5 clocks}
 \label{sec:robust-undec}
 In this section, we improve the result of \cite{ocan} by showing that the
limit strategy upper bound problem is undecidable for robust priced timed automata with 5 or more clocks.
The undecidability result is obtained using a  reduction to the halting problem of two-counter machines.

A two-counter machine has counters $C_1$ and $C_2$,
and a list of instructions $I_1, I_2, \dots, I_n$, where 
$I_n$ is the \emph{halt instruction}. 
For each $1 \le i \le n-1$,
$I_i$ is one of the following instructions:
\texttt{increment $c_b$: $c_b := c_b + 1; \;goto\; I_j$},
for $b=1 \text{ or } 2$,
\texttt{decrement $c_b$ with zero test: $if\; (c_b = 0)\; goto\; I_j \;
else\; c_b := c_b - 1;\; goto\; I_j$}, where $c_1, c_2$ represent the counter values. 
 The initial values of both counters are 0. 
Given the initial configuration $(I_1, 0, 0)$ 
the halting problem for two counter machines is to find if 
the configuration $(I_n, c_1, c_2)$ is reachable, with $c_1, c_2 \geq 0$.
This problem is known to be undecidable. 

We simulate the two counter machine using a RPTA with 
5 clocks $x_1,z,x_2,y_1$ and $y_2$
under the excess perturbation semantics.
The counters are encoded in clocks 
$x_1$ and $z$ as 
$x_1 = \frac{1}{2^i} + \varepsilon_1$
and $z = \frac{1}{2^j} + \varepsilon_2$
where $i,j$  are respectively  the values of counters $C_1,C_2$, and 
$\varepsilon_1$ and $\varepsilon_2$ denote 
accumulated values due to possible perturbations.  
 Clocks $x_2$, $y_1$ and $y_2$ help with the rough work.
The simulation is achieved as follows:
for each instruction, we have a module simulating it.
Upon entering the module, the clocks are in their normal form 
i.e. $x_1 = \frac{1}{2^i}+ \varepsilon_1,
z = \frac{1}{2^j} + \varepsilon_2$ and $x_2=0$ and $y_1=y_2=0$.

\subsection{Increment module} \label{sec:incre}
The module in Figure \ref{fig_increment}
simulates the increment of counter $C_1$.
The value of counter $C_2$ remains unchanged
since the value of clock $z$ remains unchanged at the
exit from the module.
Upon entering $A$ the clock values are $x_1= \frac{1}{2^i} + \varepsilon_1,
z= \frac{1}{2^j} + \varepsilon_2, x_2=y_1=y_2=0$.
Here $\varepsilon_1$ and $\varepsilon_2$ respectively denote the
perturbations accumulated so far.
We denote by $\alpha$, the value of clock $x_1$,
i.e. $\frac{1}{2^i} + \varepsilon_1$.
Thus at $A$, the delay is $1 - \alpha$.
Note that the dashed edges are unperturbed (this is a short hand notation. A
small gadget that  implements this is described in Appendix \ref{app:undec}), so 
$x_1=1$ on entering $B$.
  No time elapse happens at $B$, and 
at $C$, controller chooses a delay $t$. This $t$ must be 
$\frac{\alpha}{2}$ to simulate the increment correctly. 
$t$ can be perturbed by an amount $\delta$ by the perurbator, 
where $\delta$ can be both positive or negative, obtaining
$x_2=t+\delta, x_1=0,y_1=1-\alpha+t+\delta$ on entering $D$.
At $D$, the delay is $\alpha-t - \delta$.
Thus the total delay from the entry point $A$ in this module to the mChoice
module is 1 time unit.
At the entry of the $mChoice$ ($mChoice$ and Restore modules are in
Appendix \ref{app:undec}) module, the clock values are
$x_1=\alpha-t-\delta,z=1+\frac{1}{2^j} + \varepsilon_2, 
x_2=\alpha, y_1=1,y_2=0$.
To correctly simulate the increment of $C_1$,
 $t$ should be exactly $\frac{\alpha}{2}$.

At the mChoice module, perturbator can either 
continue the simulation (by going through the Restore module) or
 verify the correctness of controller's delay (check $t=\frac{\alpha}{2}$).
  The mChoice module 
adds 3 units to the values of 
$x_1, x_2$ and $z$, and resets $y_1, y_2$.
Due to the mChoice module, the clock values are
$x_1=3+\alpha-t - \delta,z=4+\frac{1}{2^j} + \varepsilon_2,
x_2=3+\alpha, y_1=1,y_2=0$.
If perturbator chooses to continue the simulation,
then Restore module brings all the clocks back to normal form. 
Hence upon entering $F$,  the clock values are
$x_1=\alpha-t -\delta, z=\frac{1}{2^j} + \varepsilon_2, x_2=y_1=1,y_2=0$.
This value of $x_1$ is $\frac{\alpha}{2}+\varepsilon_1$, since 
$t=\frac{\alpha}{2}$ and $\varepsilon_1=-\delta$, the perturbation effect. 

Let us now see how perturbator verifies $t=\frac{\alpha}{2}$ by entering
the Choice module. 
The Choice  module also adds 3 units to the values of 
$x_1, x_2$ and $z$, and resets $y_1, y_2$. 
The module $Test~Inc^{C_1}_{>}$
is invoked to check if 
$t > \frac{\alpha}{2}$, and 
the module $Test~Inc^{C_1}_{<}$ is invoked to check 
if $t < \frac{\alpha}{2}$.
Note that using the  mChoice module and the Choice module one after the other, 
the clock values upon entering $Test~Inc^{C_1}_{>}$ or
$Test~Inc^{C_1}_{<}$ are $x_1=6+\alpha-t - \delta,
z=7+\frac{1}{2^j} + \varepsilon_2,x_2=6+\alpha, y_1=0,y_2=0$.
\begin{figure}[t]
\begin{center}
\begin{tikzpicture}[->,>=stealth',shorten >=1pt,auto,node distance=1cm,  semithick,scale=0.9]
    

  \node at (-4.5,0) (i) {};
  
  \node[locr] at (-3,0) (l1) {$0$} ;
  \node[locLabel] () [above of =l1] {$A$};

  \node[locr] at (-1.5,0) (l2) {$0$} ;
  \node[locLabel] () [above of =l2] {$B$};

  \node[locr] at (0,0) (l3) {$0$} ;
  \node[locLabel] () [above of =l3] {$C$};

  \node[locr] at (1.5,0) (l4) {$0$} ;
  \node[locLabel] () [above of =l4] {$D$};
  
  \node[widget] at (3.5,0) (c) {$mChoice$};

  \node[locr] at (5.75,0) (l5) {$0$} ;
  \node[locLabel] () [above of =l5] {$E$};

  \node[widget] at (8,1) (r) {$Restore_{Inc}^{C_1C_2}$};
  \node[widget] at (8,-1) (s) {$Restore_{Inc}^{C_2C_1}$};
  
  \node[locr] at (10.3,0) (l6) {0};
  \node[locLabel] () [above of = l6] {$F$};

  \draw[trans,dashed] (i) -- (l1)node[midway,above]{$x_2{=}0$}node[midway,below]{$\set{y_2}$};
  \draw[trans,dashed] (l1) -- (l2)node[midway,above]{$x_1{=}1$}node[midway,below]{$\set{x_2}$};
  \draw[trans,dashed] (l2) -- (l3)node[midway,above]{$x_2{=}0$}node[midway,below]{$\set{x_1}$};  
  \draw[trans] (l3) -- (l4)node[midway,above]{$x_1 {\le} 1$}node[midway,below]{$\set{x_1}$};  
  \draw[trans,dashed] (l4) -- (c)node[midway,above]{$y_1{=}1$} node[midway,below]{$\set{y_2}$};
  \draw[trans,dashed] (c) -- (l5)node[midway,above]{$y_1{=}0$} node[midway,below]{$\set{x_2,y_2}$};
  \draw[trans,dashed] (l5) -- (r)node[midway,sloped,above]{$y_1{=}0$};
  \draw[trans,dashed] (l5) -- (s)node[midway,sloped,above]{$y_1{=}0$};
  \draw[trans,dashed] (r) -- (l6)node[midway,sloped,above]{$y_1{=}0$};
  \draw[trans,dashed] (s) -- (l6)node[midway,sloped,above]{$y_1{=}0$};
  
   \node[widget] at (3.5,-2) (c2) {$Choice$};

   \node[widget,fill=yellow!20] at (6.3,-2) (F) {$Test~Inc^{C_1}_{<}$};


  \draw[trans,dashed] (c) -- (c2)node[midway,left]{$y_1{=}0$};
  \draw[trans,dashed] (c2) -- (F)node[midway,sloped,above]{$y_1{=}0$};
  

\draw[dashed,fill=yellow!20,draw=gray,rounded corners=10pt] (-4.7,-1.2) rectangle (2,-2.5);
   \node at (-3.5,-2.75) (j) {$\mathbf{Test~Inc^{C_1}_{>}}$};

  
  \node[locr] at (1.5,-2) (L) {1} ;
  \node[locLabel] () [above of =L] {$A'$};
   
   \node[locr] at (0,-2) (M) {0};
   \node[locLabel] () [above of = M] {$B'$};

   \node[locr] at (-1.5,-2) (N) {1};
   \node[locLabel] () [above of = N] {$C'$};
  
   \node[locr] at (-3,-2) (O) {1};
   \node[locLabel] () [above of = O] {$D'$};

  \node[triangle] at (-4.5,-2)(T){0};
  
  \draw[trans,dashed] (c2) -- (L)node[midway,above]{$y_1{=}0$};
  \draw[trans,dashed] (L) -- (M)node[midway,above]{$x_1{=}7$};
  \draw[trans,dashed] (L) -- (M)node[midway,below]{$\{x_1\}$};
  \draw[trans,dashed] (M) -- (N)node[midway,above]{$x_2{=}8$};
  \draw[trans,dashed] (N) -- (O)node[midway,above]{$x_1{=}1$};
\draw[trans,dashed] (N) -- (O)node[midway,below]{$\{x_1\}$};
\draw[trans,dashed] (O) -- (T)node[midway,above]{$x_1{=}1$};

  
\end{tikzpicture}

\caption{$\mathbf{Increment~C_1~module}$ : 
The module keeps the fractional part of the
clock $z$ unchanged. The dashed edges represent 
unperturbed edges (detailed in Appendix \ref{app:undec}).
} 
\label{fig_increment}
\end{center}
\end{figure}
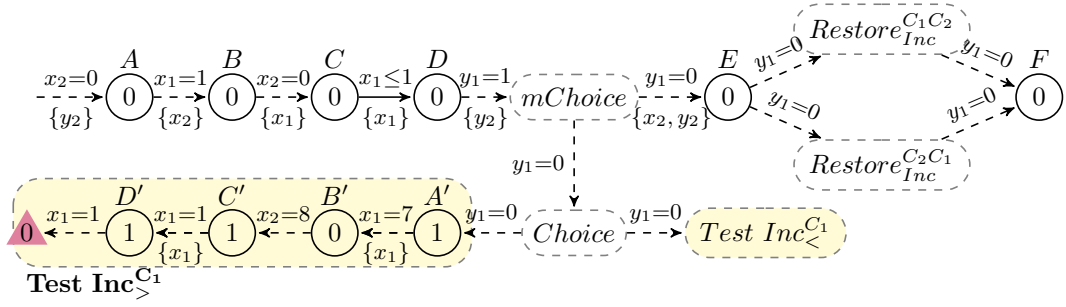

$\mathbf{Test~Inc^{C_1}_{>}}$: 
The delay at $A'$ is $1-\alpha+t+\delta$, 
obtaining $x_2=7+t+\delta$, and the cost accumulated is 
$1-\alpha+t+\delta$. At $B'$, $1-t-\delta$  time is spent, obtaining 
$x_1=1-t-\delta$. Finally, at $C'$, a time $t+\delta$ is spent, and at $D'$, one time unit, making 
the total cost accumulated $2-\alpha+2t+2\delta$ at the target location.
The cost function at the target assigns the cost 0
for all valuations, hence the total cost to reach the target is 
 $2+2t - \alpha + 2 \delta$
which is greater than $2+2\delta$ iff $2t - \alpha > 0$, i.e.
iff $t>\frac{\alpha}{2}$.

\begin{lemma} \label{lem:incre}
Assume that an increment $C_b$ ($b \in \{0, 1\}$) module
is entered with the clock valuations in their normal forms.
Then controller has a strategy to reach either location $l_j$
corresponding to instruction $I_j$ of the two-counter machine
or a target location is reached with cost at most $2+|2 \delta |$,
where $\delta$ is the perturbation added by perturbator.
\end{lemma}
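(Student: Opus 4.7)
The strategy I would prescribe to the controller is the natural one: at location $C$ pick the delay $t = \alpha/2$, where $\alpha = \frac{1}{2^i} + \varepsilon_1$ is the value of $x_1$ on entry to $C$; everywhere else in the module the controller has no strategic choice, since every guard on the dashed/undashed edges forces a unique remaining delay. The proof then proceeds by a case analysis on how the perturbator resolves the branchings, first at the controller edge $C \to D$ (a perturbation $\delta' \in [-\delta,\delta]$ of $t$) and then at the mChoice module (continue versus verify).

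For the continuation branch I would simply track the clocks from $A$ to $F$. A perturbation $\delta'$ of $t$ at $C$ yields, on entering $D$, $x_2 = \alpha/2 + \delta'$ and $y_1 = 1 - \alpha/2 + \delta'$; the forced delay at $D$ needed to reach the guard $y_1 = 1$ is $\alpha/2 - \delta'$, which updates $x_1$ to $\alpha/2 - \delta'$ and makes the total time from $A$ to mChoice exactly one unit. The mChoice and the Restore modules are designed so that the integer offsets they add to $x_1, x_2, z$ are removed and the fractional parts are untouched; hence, if the perturbator follows mChoice and a Restore module, the module exits at $F$ with $x_1 = \frac{1}{2^{i+1}} + \varepsilon_1'$ (for $\varepsilon_1' = \varepsilon_1 - \delta'$), $z = \frac{1}{2^j} + \varepsilon_2$, and the remaining clocks in normal form, whence the outgoing edge from $F$ enters the location $l_j$ corresponding to the target instruction $I_j$. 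Since all prices in $A,B,C,D,E$ and in the mChoice/Restore gadgets are zero, no cost is accumulated on this branch.

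For the verification branch the perturbator enters Choice and then forces either $\mathit{Test\,Inc}^{C_1}_{>}$ or $\mathit{Test\,Inc}^{C_1}_{<}$, both of which are sink modules ending in a target with zero external cost. The text already computes the accumulated cost along $\mathit{Test\,Inc}^{C_1}_{>}$ to be $2 - \alpha + 2t + 2\delta'$, which, for $t = \alpha/2$, equals $2 + 2\delta' \le 2 + 2\delta$. An entirely symmetric computation through $A',B',C',D'$ of $\mathit{Test\,Inc}^{C_1}_{<}$ produces the mirrored expression $2 + \alpha - 2t - 2\delta'$, again equal to $2 - 2\delta' \le 2 + 2\delta$ when the controller plays $t = \alpha/2$. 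Combining the two branches, every play consistent with the prescribed controller strategy either reaches $l_j$ or a target with cost at most $2 + |2\delta|$, as claimed.

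The main obstacle is the book-keeping around the auxiliary gadgets: one has to be sure that no perturbation buried inside mChoice, Choice, Restore, or the $+3/+4$ offsets leaks into the final valuations at $F$ or into the cost accrued on the way to the targets. Both simplifications rest on the invariant that the dashed edges are unperturbed (by the construction of Appendix~\ref{app:undec}) and on the fact that every location outside the Test modules carries price $0$; once this is in place the entire argument collapses to the arithmetic identity $2 - \alpha + 2(\alpha/2) = 2$ together with $|\delta'| \le \delta$.
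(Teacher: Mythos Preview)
Your proposal is correct and follows essentially the same approach as the paper: the paper's ``proof'' of this lemma is the narrative analysis of the increment module given in Section~\ref{sec:incre}, and you have simply organised that narrative into an explicit strategy (play $t=\alpha/2$ at $C$), the same two-branch case split (continue vs.\ verify), and the same arithmetic identity $2-\alpha+2t+2\delta'=2+2\delta'$ (with the symmetric $2+\alpha-2t-2\delta'$ for $\mathit{Test\,Inc}^{C_1}_{<}$). Your remark that the correctness of the continuation branch rests on all non-Test locations having price $0$ and on dashed edges being unperturbed is exactly the invariant the paper relies on implicitly.
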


\subsection{Complete Reduction}
The entire reduction consists of constructing a module
corresponding to each instruction $I_i$, $1 \le i \le n$,
of the two-counter machine.
The first location of the module corresponding to instruction $I_1$
is the initial location.
We simulate the halting instruction $I_n$
by a target  location with cost function $cf: \R^5_{\geq 0} \rightarrow \{0\}$. 
We denote the robust timed automaton simulating the two
counter machine by $\mathcal{A}$,
$s$ is the initial state $(l,\zero,\zero)$.
\begin{lemma} \label{lem:reduction}
The two counter machine halts if and only if there
is a strategy $\sigma$ of controller
such that $limcost_{\mathcal{A}}(\sigma, s) \le 2$.
\end{lemma}

The details of the decrement and zero test modules 
are in Appendix \ref{app:undec}. 
They are similar to the increment module; if player 2 
desires to verify the correctness of player 1's simulation, a cost 
$>2+|2\delta|$ is accumulated on reaching a target location 
iff player 1 cheats. In the limit, as $\delta \to 0$, the limcost will be $>2$ iff controller cheats. 
The other possibility to obtain a limcost $>2$ is when the two counter machine does not halt.

\section {Decidability of One-clock RPTG}
\label{sec:robust-dec}
\vspace{0.8em}
\begin{wrapfigure}[5]{o}{3.1cm}
  \scalebox{.8}{
    \begin{tikzpicture}[overlay]
      \draw[thick]   (0.4,-0.7) -- (4,-0.7) -- (4,1.6)--(0.4,1.6);
      \node[fill=yellow] (label) at (2.2,1.6) {A Dwell-time PTG};
    \end{tikzpicture}
    \begin{tabular}{c}
      \begin{tikzpicture}[->,>=stealth',shorten >=1pt,auto,node distance=1cm, semithick]
        \tikzstyle{every state}=[minimum size=1em]
        \node[state,initial, initial text={},initial where=left] at (-.2,0) (A) {-1} ;
        \node[del] () [below of=A] {$[1,2]$};
        \node[locLabel] () [above of =A] {$A$};
        \node[max] at (2,0) (B) {1} ;
        \node[del] () [below of=B] {$[0,3]$};
        \node[locLabel] () [above of =B] {$B$};
        \path(A) edge[bend left]  node[midway,above] {$x<2$} (B);
        \path(A) edge[bend left]  node[midway,below] {$x:=0$} (B);
        \path(B) edge[bend left]  node [midway,above] {$x<1$}(A);
        \path(B) edge[bend left]  node [midway,above] {}(A);
      \end{tikzpicture}
    \end{tabular}
}\end{wrapfigure}
  
In order to show the decidability of the optimal reachability game for 1 clock
RPTG $\mathcal{R}$ and a fixed $\delta>0$, we perform a series of reachability
and optimal cost preserving transformations. 
The idea is to reduce the RPTG into a simpler priced timed game, while preserving 
the optimal costs.  The advantages of this conversion is that the semantics of PTGs are easier to
understand, and one could adapt known algorithms to solve PTGs. 
On the other hand, the PTGs that we obtain are $1$-clock PTGs with dwell-time
requirement (having restrictions on minimum as well as maximum amount of time spent
at certain locations), see for example, 
a dwell-time PTG  with two locations $A, B$. 
A minimum of 1 and a maximum of two units of time should be spent at $A$, while 
a maximum of 3 time units can be spent at $B$.
If we wish to model this using standard PTGs, we need one extra clock and we can not use the
decidability results of $1$ clock PTG  to show the decidability of our model. 
We show in Section~\ref{solve} how to solve 
$1$-clock PTGs with dwell-time requirements.

Our transformations  are as follows: 
(i) for a given $\delta$, our first transformation reduces the RPTG $\rtg$ into
a  dwell-time PTG $\ptg$ (Section \ref{sec:rta_dwell_ptg});  
(ii) our second transformation restricts to dwell-time PTGs where  
   the clock is bounded by $1+\delta$. To achieve this, we use a notion of {\it
   fractional resets}, and denote these  PTGs as $\frptg$
   (Section \ref{app:tran2}); 
(iii) our third and last transformation restricts $\frptg$ without resets
(Section \ref{sec:solve_frptg}). The reset-free dwell-time PTG is denoted $\frptg'$.  
For each transformation, we prove that the optimal cost in each state of the
original game is the same as the optimal cost at some corresponding state of the
new game. We also show that an $(\epsilon,N)$-strategy of the original game can
be computed from some $(\epsilon',N')$-strategy in the new game. 
The details of each transformation and correctness is established in subequent
sections.  
We then solve $\frptg'$ employing a
technique inspired by \cite{BouLar06} while ensuring that the robust semantics
are satisfied.  
  
\subsection{Transformation 1: RPTG $\rtg$ to dwell-time PTG $\ptg$}
\label{sec:rta_dwell_ptg}
 \begin{wrapfigure}[11]{L}{4.65cm}
   \scalebox{.85}{
    \begin{tikzpicture}[overlay]
     \draw[thick] (0.4,-2.1)-- (5.3,-2.1) -- (5.3,2.7) -- (0.4,2.7) ;
     
     \node[fill=yellow] (label) at (3,2.7) {$\rtg$ and $\ptg$ };
    \end{tikzpicture}
    \begin{tabular}{c}
    
 \begin{tikzpicture}[->,>=stealth',shorten >=1pt,auto,node distance=1cm,  semithick,scale=0.9]
    \node[min] at (-29,17.4) (A) {$k$};
  \node[locLabel] () [above of =A] {$A$};
  \node[cdel] () [below of=A] {$t$};

  \node[min] at (-27,17.4) (B) {$k'$};
  \node[locLabel] () [above of =B] {$B$};

  \draw[trans] (A) -- (B) node[midway,above] {$e$} node[midway,below] {$g,r$};

\node[min] at (-30.5,15) (Ap) {$k$};
  \node[locLabel] () [above of =Ap] {$A$};
  \node[cdel] () [below of=Ap] {$t-\delta$};

  \node[max] at (-29,15) (mu) {$0$};
  \node[locLabel] () [above of =mu] {$\urg{A,e}$};
  \node[del] () [below of=mu] {$0$};

  \node[max] at (-27.5,14) (mp) {$k$};
  \node[locLabel] () [above of =mp] {$\posP{(A,e)}$};
  \node[del] () [below of=mp] {$[\delta,2\delta]$};

  \node[max] at (-27.5,16) (mn) {$k$};
  \node[locLabel] () [above of =mn] {$\negP{(A,e)}$};
  \node[del] () [below of=mn] {$[0,\delta]$};

  \node[min] at (-26,15) (Bp) {$k'$};
  \node[locLabel] () [above of =Bp] {$B$};

  \draw[trans] (Ap)--(mu) node[midway,above] {$g'$};
    \draw[trans] (mu)--(mp) node[midway,above] {};
  \draw[trans] (mu)--(mn) node[midway,above] {};
  \draw[trans] (mp)--(Bp) node[midway,above] {};
  \draw[trans] (mp)--(Bp) node[midway,below] {$r$};
  \draw[trans] (mn)--(Bp) node[midway,below] {$r$};
  \draw[trans] (mn)--(Bp) node[midway,above] {};
\end{tikzpicture}
   \end{tabular}

}
    
  \end{wrapfigure}   
  Given a one clock RPTG $\rtg = (\locations_1, \locations_2,\set{x}, \trans, \prices,\rgoals,f_{goal})$ and a $\delta>0$,  
 we construct a dwell-time PTG $\ptg=(\locations_1,\locations_2\cup L', \set{x},\trans',\prices',\rgoals,f_{goal})$. 
  All the controller, perturbator locations of  $\rtg$ ($\locations_1$ and $\locations_2$) are carried over respectively as player 1, player 2 locations  in $\ptg$.  In addition, we have some new player~2 locations $L'$ in $\ptg$.
   The dwell-time PTG $\ptg$ constructed has  dwell-time restrictions for the new player 2 locations $L'$. 
The locations of $L'$ are either urgent, 
or have a  a dwell-time of $[\delta, 2 \delta]$ or  $[0, \delta]$.
 All the  perturbator transitions of $\rtg$ are retained as it is in $\ptg$. Every transition in $\rtg$
 from a controller location $A$ to some location $B$  
 is replaced in $\ptg$ by a game graph 
as shown. 
\noindent Let $e=(A, g, r,B)$ be the transition from a controller location $A$ to a location $B$ with guard $g$, and reset $r$.  Depending on the guard $g$, in the transformed 
game graph, we have 
the new guard $g'$.  If $g$ is $x=H$, then $g'$ is $x=H-\delta$, while 
if $g$ is $H < x < H+1$, then $g'$ is $H-\delta < x < H+1-\delta$, for $H>0$. 
When $g$ is $0 < x < K$, then $g'$ is $0 \leq x < K-\delta$ and 
$x=0$ stays unchanged.
It can be seen that doing this transformation to all the controller edges 
of a RPTG $\rtg$ gives rise to a dwell-time PTG $\ptg$.
 
Lets consider the transition from $A$ to $B$ in $\rtg$. 
 Assume that the transition from $A$ to $B$ (called edge $e$) had a constraint
$x=1$, and assume that $x=\nu$ on entering $A$. Then, in $\rtg$, 
controller elapses a time $1
-\nu$, and reaches $B$; however on reaching $B$, 
the value of $x$ is in the range $[1-\delta, 1+\delta]$ depending on the perturbation.
Also, the cost accumulated at $A$ is $k*(1-\nu+\gamma)$, where $\gamma \in [-\delta, \delta]$.
To take into consideration these semantic restrictions of $\rtg$, we 
transform the RPTG $\rtg$ into a dwell-time PTG $\ptg$.  First of all, we change the constraint $x=1$ into 
$x=1-\delta$ 
from $A$ (a player 1 location) and enter a new player 2 location $(A,e)$. This player 2 location is an urgent 
location. The correct strategy for player 1 is to spend a time $1-\nu-\delta$ at $A$ (corresponding to 
the time $1-\nu$ he spent at $A$ in $\rtg$). 
 At $(A,e)$, player 2 can either proceed to one of the player 2 locations 
$\negP{(A,e)}$ or $\posP{(A,e)}$.  
The player 2 location $(A,e)$ models perturbator's choices 
of positive or negative perturbation in $\rtg$. 
If player 2 goes to $\negP{(A,e)}$, then on reaching $B$, the value of $x$ is in the interval $[1-\delta,1]$ (this corresponds to perturbator's choice of $[-\delta,0]$ in $\rtg$)
and 
if he goes to $\posP{(A,e)}$, then the value of $x$ at $B$ is in the interval $[1, 1+\delta]$ (this corresponds to perturbator's choice of $[0,\delta]$ in $\rtg$).
 The reset happening in the transition from $A$ to $B$ in $\rtg$ is 
 now done on the transition from $\negP{(A,e)}$ to $B$ and 
 from $\posP{(A,e)}$ to $B$. Thus, note that the possible ranges of $x$ 
 as well as the accumulated cost in $\rtg$ while reaching $B$ are preserved in the 
 transformed dwell-time PTG.  

\begin{lemma}
 \label{lem:rtg_to_ptg}
 Let $\rtg$ be a RPTG and $\ptg$ be the corresponding 
 dwell-time PTG obtained using the transformation 
 above. Then  
 for every state $s$ in $\rtg$, $\optcost{\rtg}(s) = \optcost{\ptg}(s)$. 
 An  $(\epsilon,N)-$strategy in $\rtg$ can be computed from a $(\epsilon,N)-$strategy in $\ptg$ and vice versa.
\end{lemma}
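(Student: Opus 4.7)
The plan is to exhibit a cost-preserving bijection between plays in the RPTG $\rtg$ and plays in the dwell-time PTG $\ptg$ that pass only through the retained locations $\locations_1 \cup \locations_2$, treating the gadget states in $L'$ as bookkeeping. A single controller round in $\rtg$ at $(A,\nu)$ --- chosen delay $t$ along an edge $e=(A,g,r,B)$ followed by perturbation $\gamma\in[-\delta,\delta]$ --- is matched with a three-step block in $\ptg$: player~1 elapses $t{-}\delta$ at $A$ and takes the edge into the urgent location $\urg{A,e}$; player~2 then moves to $\negP{(A,e)}$ if $\gamma\le 0$ or $\posP{(A,e)}$ if $\gamma\ge 0$, and spends $\gamma+\delta$ time there (a dwell in $[0,\delta]$ or $[\delta,2\delta]$, respectively) before firing the reset edge into $B$. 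Perturbator rounds of $\rtg$ at locations in $\locations_2$ carry over identically. A case-split on the shape of $g$ shows that $\nu+t\in\sem{g}$ iff $\nu+(t-\delta)\in\sem{g'}$, relying on the RPTG dwell requirement $t\ge\delta$, so that legal controller moves correspond one-to-one under the bijection.

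One then checks that the matched blocks preserve both successor state and cost. The clock value on entering $B$ is $(\nu+(t-\delta)+(\gamma+\delta))[r{:=}0] = (\nu+t+\gamma)[r{:=}0]$, coinciding with the value in $\rtg$. The cost accumulated by the $\ptg$-block is $\prices(A)(t-\delta)+0+\prices(A)(\gamma+\delta) = \prices(A)(t+\gamma)$, since $\urg{A,e}$ contributes nothing and $\negP{(A,e)},\posP{(A,e)}$ inherit the price $\prices(A)$. Strategies translate accordingly: a memoryless controller strategy $\strat_1$ in $\rtg$ with $\strat_1(A,\nu)=(t,e)$ lifts to $\strat_1'(A,\nu)=(t-\delta,e_1)$ in $\ptg$ (where $e_1$ is the gadget-entry edge), and a perturbator strategy in $\ptg$ that picks direction and dwell $t''$ corresponds to the $\rtg$-perturbation $\gamma=t''-\delta$. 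Conversely, memoryless strategies of $\ptg$ project back because the outer-game state $(A,\nu)$ is all the information required. Plugging the bijection into the definitions of $\costp$ and $\outcome$ and applying $\inf_{\strat_1}\sup_{\strat_2}$ on both sides yields $\optcost{\rtg}(s)=\optcost{\ptg}(s)$.

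For the $(\epsilon,N)$-strategy transfer, note that both directions of the translation act affinely on the clock value: the $A$-delay shifts by the constant $\delta$, and the gadget introduces constant-offset affine contributions from $\negP$ or $\posP$. Hence a partition of $[0,1]$ into $N$ intervals on which the strategy is constant and the cost is affine descends to the transformed game, and lifts back, with the same $N$; memorylessness is preserved because the entire gadget traversal happens within a single round of the outer game. I expect the main obstacle to be the case analysis of the guard rewriting $g\mapsto g'$, in particular making sure that equality, strict, and half-open guards around integer boundaries are mapped so that no controller delay is spuriously enabled or disabled in either game --- this is where the $t\ge\delta$ hypothesis and the specific rewriting rules of Section~\ref{sec:rta_dwell_ptg} are essential. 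Once this tabulation is verified, the rest of the proof is routine bookkeeping along the bijection.
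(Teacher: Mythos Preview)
Your proposal is correct and follows essentially the same approach as the paper: the paper likewise defines a state map sending $(l,\nu)\mapsto(l,\nu)$ and $((l,\nu),t,e)\mapsto(\urg{l,e},\nu+t-\delta)$, translates controller delays by $-\delta$ and perturbations $\gamma$ into a choice of $\negP/\posP$ with dwell $\gamma+\delta$, and then verifies cost- and valuation-preservation transition by transition before passing to $\inf\sup$. Your uniform formula $\gamma=t''-\delta$ for both branches is in fact a slight streamlining of the paper's case split, and your identification of the guard-rewriting case analysis (handled in the paper as a separate clock-range lemma) as the only nontrivial step is accurate.
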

 Proof In Appendix \ref{app:tran1}.

\subsection{Transformation 2: Dwell-time PTG $\ptg$ to  Dwell-time FRPTG $\frptg$}
\label{app:tran2}
  \begin{wrapfigure}[14]{L}{4.15cm}
   \scalebox{.85}{
    \begin{tikzpicture}[overlay]
     \draw[thick] (0.4,-2.75)-- (4.7,-2.75) -- (4.7,3) -- (0.4,3) ;
     
     \node[fill=yellow] (label) at (2.5,3) {$\frptg$ };
    \end{tikzpicture}
    \begin{tabular}{c}
    
\begin{tikzpicture}[->,>=stealth',shorten >=1pt,auto,node distance=1cm,  semithick,scale=0.9]

    \node[min] at (6,1.75) (Ap) {$k$};
  \node[locLabel] () [above of =Ap] {$A_b$};
  \node[cdel] () [above of=Ap] {$t-\delta$};

  \node[max] at (6,0) (mu) {$0$};
  \node[locLabel] () [right of =mu] {$~~~~~\urg{A,e}_b$};
  \node[del] () [below of=mu] {$0$};


  \node[max] at (8,-1) (mp) {$k$};
  \node[locLabel] () [above of =mp] {$\posP{(A,e)}_b$};
  \node[del] () [below of=mp] {$[\delta,2\delta]$};

  \node[max] at (8,-3) (mp1) {$k$};
  \node[locLabel] () [left of =mp1,node distance = 9.2mm] {$\zeroP{(A,e)}_{b+1}$};
  \node[del] () [below of=mp1] {$0$};

  \node[max] at (8,1) (mn) {$k$};
  \node[locLabel] () [above of =mn] {$\negP{(A,e)}_b$};
  \node[del] () [below of=mn] {$[0,\delta]$};

  \node[min] at (10,0) (Bp) {$k'$};
  \node[locLabel] () [above of =Bp] {$B_b$};

  \node[min] at (10,-3) (Bp1) {$k'$};
  \node[locLabel] () [node distance =5mm,below of =Bp1] {$B_{b+1}$};

  \draw[trans] (Ap)--(mu) node[midway,right] {} node[pos=0.2,right] {$x{=}1{-}\delta$};
  \draw[trans] (mu)--(mp) node[midway,above] {$ $};
  \draw[trans] (mu)--(mn) node[midway,above] {$ $};
  \draw[trans] (mn)--(Bp) node[midway,below] {$r$};
  
  \draw[trans] (mp)--(Bp) node[midway,below] {$r$};
  \draw[trans] (mp)--(Bp) node[midway,above] {$ $};
  \draw[trans] (mp1)--(Bp1) node[midway,above] {$r$};
  \draw[trans] (mn)--(Bp) node[midway,above] {$ $};
  \draw[trans] (mp)--(mp1) node[midway,left] {$x{\geq} 1, \fr{x}$};
  \draw[trans] (Bp) -- (Bp1) node[midway,left] {$\begin{array}{c}x{=}1\\\set{x}\end{array}$};


\end{tikzpicture}
   \end{tabular}

}
    
  \end{wrapfigure}
Recall that the locations of the dwell-time PTG $\ptg$ is $L_1 \cup L_2 \cup L'$ where $L_1 \cup L_2$ 
are the set of locations of $\rtg$, and $L'$ are new player 2 locations introduced in $\ptg$.
In this section, we  transform the dwell-time PTG $\ptg$  
into a dwell-time PTG $\frptg$
having the restriction that the value of $x$ 
is in [0,1] at all locations corresponding to 
$L_1 \cup L_2$, and is in $[0, 1+\delta]$ at all locations 
corresponding to $L'$. 
  While this transformation is the same as that used in \cite{BouLar06}, the main difference 
is that we introduce special resets called \emph{fractional resets} which reset only the integral part of clock $x$
while its fractional part is retained.
  For instance, if the value of $x$ was 1.3, then 
the operation $[x]:=0$ makes the value of $x$ to be 0.3.   
    
\noindent    Given a one clock, dwell-time PTG  
 $\ptg=(\locations_1,\locations_2 \cup L',\set{x},\trans,\prices,\rgoals,f_{goals})$ with $M$ being the maximum value 
 that can be assumed by clock $x$,   
 we define a dwell-time PTG  with fractional resets (FRPTG) $\frptg$. 
In $\frptg$, we have $M+1$ copies of the locations in $L_1 \cup L_2$ as well as 
 the locations in $L'$ with dwell time $[0, \delta]$, $[0,0]$. These $M+1$ copies of $L'$ have the same dwell-time restrictions in $\frptg$. The copies are indexed 
 by $i, 0 \leq i \leq M$, capturing the integral part 
 of clock $x$ in $\ptg$. 
    Finally, we have in $\ptg$, the locations of $L'$ with dwell-time restriction $[\delta, 2\delta]$.
  For each such location $\posP{(A,e)}$, we have in $\frptg$, the locations 
  $\posP{(A,e)}_i$ and $\zeroP{(A,e)}_{i+1}$ for $0 \leq i \leq  M$.  
  The dwell-time restriction for  
  $\posP{(A,e)}_i$ is same as $\posP{(A,e)}$, while 
  locations $\zeroP{(A,e)}_{i+1}$ are urgent. 
  The prices of locations are carried over as they are in  the various copies. 
  
 The transitions in $\frptg$ consists of 
      the following:
  (1) $l_i \xrightarrow{(g-i)\cap 0 \leq x <1}m_i$\footnote{$g-i$ 
   represents the constraint obtained by shifting the constraint by $-i$}
          if $l \xrightarrow{g} m \in \trans$; 
   (2) $l_i \xrightarrow{(g-i)\cap 0 \leq x < 1; \set{x}}m_0$ if $l \xrightarrow{g;\set{x}} m \in \trans$;
   (3)  $l_i \xrightarrow{x=1,\set{x}} l_{i+1}$, for $l \in L_1 \cup L_2$, and 
  $\posP{(A,e)}_i \xrightarrow{x \geq 1,\fr{x}} \zeroP{(A,e)}_{i+1}$ for $i < M$.
          Consider for example, the constraint $g'$ between $A$ and $(A,e)$ as $x=(b+1)-\delta$ in $\ptg$.
Then the value of $x$ is $b+(1-\delta)$ for $b < M$ when $\posP{(A,e)}$ is entered in $\ptg$.   The location $\posP{(A,e)}$ with $\nu(x) =b+(1-\delta)$ is represented 
in  $\frptg$
 as $\posP{(A,e)}_b$ with $\nu(x)=1-\delta$.
  If player 2 spends 
$[\delta,2\delta]$ time at $\posP{(A,e)}$ in $\ptg$, then $\nu(x) \in [b+1, b+1+\delta]$. 
If there are no resets to goto $B$, 
 then $\nu(x) \in [b+1, (b+1)+\delta]$ at $B$. 
 Correspondingly in $\frptg$,  
 $\nu(x) \in [1,1+\delta]$ 
 at  $\posP{(A,e)}_b$. By construction, $B_b$ is not reachable, 
 since we check $0 \leq x < 1$ on the transition to $B_b$. 
The fractional reset is employed to obtain $x=\delta$ 
while moving to $\zeroP{(A,e)}_{b+1}$. This ensures that
$x=\delta$ on reaching $B_{b+1}$, thereby preserving the perturbation, 
and keeping $x < 1$. A normal reset would have destroyed the 
value obtained by perturbation. 
   The mapping $f$ between states of $\ptg$ and $\frptg$ is as follows:
   $f(l, x)=(l_b, x-b)$, $b < M$, and $x \in [b, b+1]$, $l \in L_1 \cup L_2$, 
     $f((A,e), x)=({(A,e)}_b, x-b)$, $b < M$, and $x \in [b, b+1]$, 
    $f(\negP{(A,e)}, x)=(\negP{(A,e)}_b, x-b)$, $b < M$, and $x \in [b, b+1]$. Finally,  
    $f(\posP{(A,e)}, x)=(\posP{(A,e)}_b, x-b)$, $b < M$, and 
    $x \in [b, b+1] \cup [b+1, b+2]$.
      Note that  in the last case, the value of $x-b$ can exceed 1 but is less than or equal to $1+\delta$.  
\begin{lemma}
 \label{lem:ptg_to_frptg}
 For every state $(l,\val)$ in $\ptg$, $\optcost{\ptg}(l,\val)$ in $\ptg$ is the same as $\optcost{\frptg}(f(l,\val))$  in $\frptg$. For every $\epsilon>0$, $N \in \Nat$, an $(\epsilon,N)$-acceptable strategy in $\ptg$ can be computed from an $(\epsilon,N)$-acceptable strategy in $\frptg$ and vice versa.
\end{lemma}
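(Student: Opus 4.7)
The plan is to establish a tight cost-preserving and turn-respecting bisimulation between reachable configurations of $\ptg$ and $\frptg$ via the map $f$, and then read off both the equality of optimal costs and the transfer of $(\epsilon,N)$-acceptable strategies as routine consequences. Concretely, I would enlarge $f$ to a relation $R$ on cost-augmented states by setting $(l,\val,c)\,R\,(l',\val',c')$ iff $(l',\val')=f(l,\val)$ and $c=c'$, and show that $R$ is a forward and backward bisimulation between $\sem{\ptg}$ and $\sem{\frptg}$ that preserves which player owns the current state, preserves targets (and the value of $f_{goal}$), and preserves the increment of cost along every move.

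First I would do a case analysis over the moves of $\ptg$. For a time-elapse of $t$ at a location $l\in L_1\cup L_2$ starting in valuation $\val\in[b,b+1)$, the matching move in $\frptg$ is a delay of $(b+1)-\val$ inside $l_b$ followed by the forced reset transition $l_b\xrightarrow{x=1,\{x\}}l_{b+1}$, repeated as many times as $\val+t$ crosses an integer, and then a residual delay in the final copy; since $\prices'(l_b)=\prices(l)$ for every copy, the total accumulated cost is $\prices(l)\cdot t$ on both sides. For a discrete move $l\xrightarrow{g}m$ in $\ptg$ with $\val\in[b,b+1)$, the guard $g-i$ on the matching transition $l_b\to m_b$ (case (1) of the construction) is satisfied exactly when $g$ is; a full reset translates to landing in $m_0$ (case (2)) and perfectly agrees with $f$. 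Dwell-time constraints on the new player-2 locations of $L'$ with bounds in $[0,\delta]$ or $\{0\}$ are copied verbatim into every index, so they cause no friction.

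The one genuinely subtle case is the $\posP{(A,e)}$ location, whose dwell-time $[\delta,2\delta]$ legitimately lets the clock cross an integer boundary: if $\posP{(A,e)}$ is entered in $\ptg$ with $\val=b+(1-\delta)$ and player~2 delays by some $d\in[\delta,2\delta]$, the new valuation lies in $[b+1,b+1+\delta]$. In $\frptg$ the matching initial state is $\posP{(A,e)}_b$ with valuation $1-\delta$; the same delay $d$ yields $x\in[1,1+\delta]$. Here the blocking guard $0\le x<1$ on the outgoing transition to $B_b$ forbids the spurious move that would discard the perturbation, and the transition $\posP{(A,e)}_b\xrightarrow{x\ge 1,\;\fr{x}}\zeroP{(A,e)}_{b+1}$ uses the fractional reset to preserve exactly $x-1\in[0,\delta]$ before entering $\zeroP{(A,e)}_{b+1}$, after which the urgent reset-to-$B_{b+1}$ matches $f$ on the $\ptg$-side transition into $B$. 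A symmetric check in the backward direction shows that no extra plays are created in $\frptg$.

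Once $R$ is verified, standard arguments transfer strategies. Given a player-1 strategy $\sigma$ in one game, define $\sigma'$ in the other by precomposing with the bijection on reachable configurations induced by $f$; outcomes against arbitrary player-2 strategies are related by $R$ step-by-step, end in targets simultaneously, and accumulate identical costs, so $\optcost{\ptg}(l,\val)=\optcost{\frptg}(f(l,\val))$. For the $(\epsilon,N)$-acceptability transfer, observe that the cost-as-a-function-of-$\val$ in $\ptg$ at a location $l\in L_1\cup L_2$, restricted to the integer slab $\val\in[b,b+1)$, equals the cost-as-a-function-of-$x$ in $\frptg$ at the copy $l_b$ up to the shift by $b$; hence the same partition $(I_i)_{1\le i\le N}$ of $[0,1]$ that witnesses affine-ness and constancy on one side witnesses it on the other, at every integer slab simultaneously, so the number $N$ is preserved. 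I expect the bookkeeping around $\posP{(A,e)}$ and $\zeroP{(A,e)}_{b+1}$ to be the main obstacle, specifically checking that the urgent transitions and the fractional reset together exclude precisely the $\frptg$-plays that have no counterpart in $\ptg$; the rest of the proof is a mechanical case check.
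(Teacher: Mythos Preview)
Your proposal is correct and follows essentially the same approach as the paper: both establish the state map $f$, argue that plays in $\ptg$ correspond to cost-equal plays in $\frptg$ (the paper via its Lemma~\ref{lem:path_map2}, you via a bisimulation relation $R$), and then transfer strategies by a copy-cat argument. Your treatment is considerably more detailed than the paper's appendix, which simply asserts that a ``copy-cat'' strategy works and that $(\epsilon,N)$-acceptability is therefore preserved; in particular you correctly isolate the $\posP{(A,e)}$/fractional-reset case as the only non-routine step, which the paper also flags as the sole exception to the otherwise mechanical indexing of states by the integer part of the clock.
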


\subsection{Transformation 3: Dwell-time FRPTG $\frptg$ to resetfree FRPTG $\frptg'$}
\label{sec:solve_frptg}
We now apply the final transformation to the FRPTG $\frptg$ and construct a reset-free
version of the $\FRPTG$ denoted $\frptg'$. Assume that there are a total of $n$ resets (including fractional resets) 
in the FRPTG.   $\frptg'$ consists of  $n+1$ copies of the $\FRPTG$ : $\frptg_0, \frptg_1, \dots, \frptg_{n}$.
Given the locations $L$ of the FRPTG, the locations of $\frptg_i$ are $L^i$, $0 \leq i \leq n$.
$\frptg_0$ starts with $l^0$, where $l$ is the initial location of the $\FRPTG$ and continues until a resetting transition happens. At the first resetting transition,  
$\frptg_0$ makes a transition to $\frptg_1$. The $n$th copy is directed 
to a sink target location $S$ with cost function $cf: \mathbb{R}_{\geq 0} \rightarrow \{+\infty\}$ on the $(n+1)$th reset. Note that each $\frptg_i$ is reset-free.
One crucial property of each $\frptg_i$ is that 
on entering with some value of $x$ in $[0, \delta]$, 
the value of $x$ only increases as the transitions go along in $\frptg_i$; moreover,
$x \leq 1+\delta$ in each $\frptg_i$ by construction. The formal details and proof of Lemma \ref{lem:resetfree_frptg} can be found in Appendix \ref{app:resetfree}.
 \begin{wrapfigure}{o}{3cm}
   \scalebox{.8}{
    \begin{tikzpicture}[overlay]
    \draw[thick]   (0.4,-5.5) -- (3.5,-5.5)--(3.5,5.5) -- (0.4,5.5);
     \node[fill=yellow] (label) at (2,6) {Example};
    \end{tikzpicture}
    \begin{tabular}{c}
 \begin{tikzpicture}[scale=0.5,->]
\tikzstyle{every node}=[font=\scriptsize]

    \draw [<->,thick] (0,5) node (yaxis) [above] {$y$}
        |- (5,0) node (xaxis) [right] {$x$};
    
    \node at (2.5,-0.5) {Superimposition};
    
    \node at (1.6,3) (f2) {$f_2$};
    \coordinate (f2p) at (1.6,1.8);
    \draw[trans] (f2p) -- (f2);
    
    \node at (0.6,4.5) (f1) {$f_1$};
    \coordinate (f1p) at (0.4,3.3);
    \draw[trans] (f1p) -- (f1);

    \draw[-,fun2] (0,3) coordinate (b_1) -- (4,0) coordinate (b_2);
    \node[node distance = 2mm,left of=b_1] {};
    \node[node distance = 2mm,below of=b_2] {};

    \draw[-,fun1] (0,4) coordinate (a_1) -- (2,0.5) coordinate (a_2) -- (3,3) coordinate (a_3) -- (4,0) coordinate (a_4);
    
    \node[node distance = 2mm,left of=a_1] {};
    
    \coordinate (c1) at (intersection of a_1--a_2 and b_1--b_2);
        \draw[-,dashedl] (yaxis |- c1) node[left] {};
         -| (xaxis -| c1) node[below] {};

    \coordinate (c2) at (intersection of a_2--a_3 and b_1--b_2);
        \draw[-,dashedl] (yaxis |- c2) node[left] {};
         -| (xaxis -| c2) node[below] {};
    
 \end{tikzpicture}
 \\
   \begin{tikzpicture}[scale=0.5]
\tikzstyle{every node}=[font=\scriptsize]
    \draw [<->,thick] (0,4) node (yaxis) [above] {$y$}
        |- (5,0) node (xaxis) [right] {$x$};
    
    \node at (2.5,-0.5) {Interior};
    
    \draw[white] (0,3) coordinate (b_1) -- (4,0) coordinate (b_2);
    \node[node distance = 2mm,left of=b_1] {};
    \node[node distance = 2mm,below of=b_2] {};

    \draw[white] (0,4) coordinate (a_1) -- (2,0.5) coordinate (a_2) -- (3,3) coordinate (a_3) -- (4,0) coordinate (a_4);
    
    \node[node distance = 2mm,left of=a_1] {};
    
    \coordinate (c1) at (intersection of a_1--a_2 and b_1--b_2);
        \draw[dashedl] (yaxis |- c1) node[left] {};
         -| (xaxis -| c1) node[below] {};

    \coordinate (c2) at (intersection of a_2--a_3 and b_1--b_2);
        \draw[dashedl] (yaxis |- c2) node[left] {};
         -| (xaxis -| c2) node[below] {};

    \draw[fun2] (b_1) -- (c1);
    \draw[fun1]  (c1) -- (a_2) -- (c2);
    \draw[fun2] (c2) -- (b_2);
    
\end{tikzpicture}
 \\
  \begin{tikzpicture}[scale=0.5]
\tikzstyle{every node}=[font=\scriptsize]
    \draw [<->,thick] (0,5) node (yaxis) [above] {$y$}
        |- (5,0) node (xaxis) [right] {$x$};
    
    \node at (2.5,-0.5) {Exterior};
    
    \draw[white] (0,3) coordinate (b_1) -- (4,0) coordinate (b_2);
    \node[node distance = 2mm,left of=b_1] {};
    \node[node distance = 2mm,below of=b_2] {};

    \draw[white] (0,4) coordinate (a_1) -- (2,0.5) coordinate (a_2) -- (3,3) coordinate (a_3) -- (4,0) coordinate (a_4);
    
    \node[node distance = 2mm,left of=a_1] {};
    
    \coordinate (c1) at (intersection of a_1--a_2 and b_1--b_2);
        \draw[dashedl] (yaxis |- c1) node[left] {};
         -| (xaxis -| c1) node[below] {};

    \coordinate (c2) at (intersection of a_2--a_3 and b_1--b_2);
        \draw[dashedl] (yaxis |- c2) node[left] {};
         -| (xaxis -| c2) node[below] {};

    \draw[fun1] (a_1) -- (c1);
    \draw[fun2] (c1) -- (c2);
    \draw[fun1] (c2) -- (a_3)--(a_4);
\end{tikzpicture}
   \end{tabular}
}
    
  \end{wrapfigure}
Using the cost function of $S$ and those of the targets, we compute 
the optimal cost functions for all the locations of the deepest component $\frptg_n$.  The cost functions of the locations 
of $\frptg_i$ are used to compute 
that of $\frptg_{i-1}$, and so on  until the cost function of $l^0$, the starting location of $\frptg_0$ is computed. An example
can be seen in Appendix \ref{app:eg}. 
\begin{lemma}
 \label{lem:resetfree_frptg}
 For every state $(l,\val)$ in $\frptg$, $\optcost{\frptg}(l,\val) = \optcost{\frptg'}(l^0,\val)$, where  
 $\frptg'$ is the resetfree \FRPTG. For every $\epsilon >0, N \in \Nat$, given an $(\epsilon,N)$-acceptable strategy $\sigma'$ in $\frptg'$, we can compute a $(2\epsilon,N)$-acceptable strategy $\sigma$ in $\frptg$ and vice versa.
\end{lemma}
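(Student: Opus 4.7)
The natural approach is to establish a bijective correspondence between finite plays of $\frptg$ using at most $n$ resets and finite plays of $\frptg'$ that avoid the sink $S$, show this correspondence preserves costs, and then address the two directions of the $(\epsilon,N)$-strategy transfer separately.

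First I would define a map $\Phi$ sending each play of $\frptg$ into the play of $\frptg'$ obtained by starting in copy $\frptg_0$ and incrementing the copy index each time a resetting transition (ordinary or fractional) fires; as long as at most $n$ resets occur, the image lies within $\frptg_0, \ldots, \frptg_n$ and never reaches $S$. The projection $\Pi$ that forgets the copy index inverts $\Phi$ on such plays. Because each $\frptg_i$ is a structural duplicate of $\frptg$ with reset edges rerouted to $\frptg_{i+1}$, the delays, guards, valuations, location prices and target cost functions agree along corresponding transitions, so $\Phi$ and $\Pi$ preserve the total cost $\costp$.

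For the cost equality I would prove two inequalities. $(\ge)$ Take a near-optimal controller strategy $\sigma$ in $\frptg$; if some compatible play uses a reset edge twice, the portion between those two occurrences is a location-cycle whose accumulated cost is non-negative by hypothesis, and the controller may cut this sub-play without increasing worst-case cost against any perturbator. Iterating, $\sigma$ may be assumed to use at most $n$ resets on every compatible play, so that $\Phi(\sigma)$ is a strategy in $\frptg'$ of the same cost that avoids $S$. $(\le)$ Any controller strategy in $\frptg'$ that drives some compatible play into $S$ accumulates cost $+\infty$ (the cost function at $S$ is $+\infty$) and is useless for the infimum; any strategy remaining within $\frptg_0, \ldots, \frptg_n$ projects via $\Pi$ to a strategy in $\frptg$ of identical cost.

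The main obstacle is the direction $\frptg' \to \frptg$ of the $(\epsilon,N)$-strategy transfer, because the memory carried by the copy index must be discarded. Given an $(\epsilon,N)$-acceptable $\sigma'$ in $\frptg'$, the actions prescribed at the copies $l^0, \ldots, l^n$ of a single location $l$ may differ, whereas a memoryless strategy on $\frptg$ must commit to a single action at each $(l,\nu)$. I would define $\sigma(l,\nu)$ to be $\sigma'(l^k, \nu)$ for the copy $k$ minimising the simulated cost-to-go; the discrepancy between this simulated cost and the realised cost of $\sigma$ can again be bounded by the non-negative-cycle assumption, contributing an additive $\epsilon$ and yielding overall $2\epsilon$-optimality. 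Preservation of the bound $N$ is granted because the partition $\set{\alpha + I_i : 0 \le i < N,\ 0 \le \alpha < M}$ is dictated by the timing constraints of $\rtg$, which are identical across all copies; within each piece $\alpha + I_i$ every candidate strategy is constant, so the lower-envelope selection introduces no new breakpoints. The reverse direction $\frptg \to \frptg'$ is essentially free: lift $\sigma$ to $\frptg'$ by using the same action at every copy of each location, inheriting $\epsilon$-optimality (hence $2\epsilon$-optimality) and the same $N$ intervals.
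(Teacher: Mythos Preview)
Your overall architecture matches the paper's: a play correspondence between $\frptg$ and $\frptg'$, cost preservation, and a separate argument for collapsing the copy index to recover memorylessness in $\frptg$. However, your $(\ge)$ argument has a gap. You write that if a compatible play repeats a reset edge, ``the controller may cut this sub-play without increasing worst-case cost against any perturbator.'' But the cycle between the two uses of a reset edge may be driven by perturbator choices; the controller has no unilateral power to excise it. The correct argument (which the paper also only sketches) is dual: if against a fixed controller strategy the perturbator can force one traversal of a resetting cycle, then---since the state after the cycle coincides with the state before---the perturbator can force it again and again, and by the non-negative-cycle hypothesis the play either never reaches a target or does so with arbitrarily large cost. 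Thus whenever $\optcost{\frptg}(l,\nu)$ is finite the controller already has a strategy under which no perturbator response uses more than $n$ resets, and \emph{that} strategy lifts to $\frptg'$ avoiding $S$.

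For the $\frptg'\to\frptg$ strategy transfer you set $\sigma(l,\nu)=\sigma'(l^{k},\nu)$ for the copy $k$ minimising the cost-to-go. The paper instead takes the \emph{deepest} copy $n-i$ from which $\sigma'$ still wins (equivalently, $i$ is the minimum number of resets along any $\sigma'$-compatible play from $(l^{0},\nu)$). The advantage of the paper's selection is monotonicity: following the resulting $\sigma$ in $\frptg$ is, transition for transition, the same as following $\sigma'$ in $\frptg'$ started at copy $n-i$, which gives the exact identity $\costp_{\frptg}((l,\nu),\sigma)=\costp_{\frptg'}((l^{n-i},\nu),\sigma')$. The $2\epsilon$ bound then follows cleanly from $\epsilon$-optimality of $\sigma'$ at $(l^{n-i},\nu)$ together with the inequality $\optcost{\frptg'}(l^{n-i},\nu)\le\costp_{\frptg'}((l^{0},\nu),\sigma')$, the latter holding because $l^{0}$ and $l^{n-i}$ have identical outgoing transitions. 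With your ``minimum cost-to-go'' criterion the selected copy index may jump non-monotonically along a play, so $\sigma$ need not simulate any single run of $\sigma'$; you would have to supply a different argument for the $2\epsilon$ bound than the one-line sketch you give.
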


\subsection{Solving the Resetfree $\FRPTG$}
\label{solve}
Before we sketch the details, let us introduce some key notations.  
Observe that after our simplifying transformations, the cost functions $cf$ are
piecewise-affine continuous functions that assign a value to every valuation
$x \in [0,1+\delta]$ (construction of FRPTG ensures $x {\leq}1{+}\delta$
always).  
The {\it interior} of two cost functions $f_1$ and $f_2$ is a cost function
$f_3: [0, 1+\delta] \rightarrow \mathbb{R}$ defined by $f_3(x)= \min(f_1(x),
f_2(x))$. 
Similarly, the {\it exterior} of $f_1$ and $f_2$ is a cost function $f_4: [0,
1+\delta] \rightarrow \mathbb{R}$ defined as $f_4(x) = \max(f_1(x),
f_2(x))$.   
Clearly, $f_3$ and $f_4$ are also piecewise-affine continuous. 
The interior and exterior can be easily computed by {\it superimposing} $f_1$
and $f_2$  as shown graphically in the example by computing lower envelope and
upper envelope respectively. 

We now work on the reset-free components $\frptg_i$,  and give an algorithm to  
compute $OptCost_{\frptg_i}(l, \nu)$ 
  for every state 
$(l, \nu)$ of $\frptg_i$, $\nu(x) \in [0, 1+\delta]$.   We also show the existence of an  $N$ 
such that for any $\epsilon>0$,  
and every $l \in L^i, \nu(x) \in [0,1+\delta]$,  an $(\epsilon,N)$-acceptable strategy can be computed.
Consider the location of $\frptg_i$ that has the smallest price and call it
$l_{min}$. 
If this is a player 1 location, then intuitively, player 1 would want to spend
as much time as possible here, and if this is a player 2 location, then player 2
would want to spend as less time as possible here.  
By our assumption, all the cycles in $\frptg_i$ are non-negative, and hence if
$l_{min}$ is part of a cycle,  
revisiting it will only increase the total cost if at all. 
Player 1 thus would like to spend all the time he wants to during the first visit
itself. We now prove that this is indeed the case.  
We consider two cases separately. 
\subsubsection{$\lmin$ is a Player~1 location}
We split $\frptg_i$ such that $\lmin$ is visited only once. We transform $\frptg_i$ 
into $\frptg''$ which has two copies 
of all locations except $\lmin$ such that corresponding to every  location $l
\neq l_{min}$, we  have the copies $(l,0)$ and $(l,1)$. A special target
location $S$ is added with cost function assigning $+\infty$ to all clock
valuations.    

\begin{wrapfigure}[14]{l}{4cm}
   \scalebox{.8}{
    \begin{tikzpicture}[overlay]
    \draw[thick]   (0.4,4) -- (5.1,4) --(5.1,-3.5)--(0.4,-3.5);
     \node[fill=yellow] (label) at (2.8,4) {Duplicate $L-\lmin$};
    \end{tikzpicture}
    \begin{tabular}{c}
   \begin{tikzpicture}[->,>=stealth',shorten >=1pt,auto,node distance=1cm,  semithick,scale=0.9]
\tikzstyle{min}=[draw,circle,minimum size=2mm,inner sep=0.1em] 
\tikzstyle{max}=[draw,rectangle,minimum size=2mm,inner sep=0.1em] 

  \node[min] at (0,1) (A) {$A$};
  \node[max] at (0,-1) (B) {$B$};
  \node[min] at (1.5,0) (m) {$l$};
  \node () [node distance=5mm,below of=m] {$l_{min}$};
  \node[max] at (3,-1) (C) {$C$};
  \node[min] at (3,1) (D) {$D$};
  
  \draw[trans] (m) -- (A);
  \draw[trans] (B) -- (m);
  \draw[trans] (m) -- (C);
  \draw[trans] (D) -- (m);
  \draw[trans] (A) -- (B);  
  \draw[trans] (C) -- (D);
  
  \node at (1.5, -2) (g0) {${\mbox{unroll to}}$};

  \draw[dashed,fill=yellow!20,draw=gray,rounded corners=10pt] (0,-2.5) rectangle (1,-6.5);
  \draw[dashed,fill=yellow!20,draw=gray,rounded corners=10pt] (3,-2.5) rectangle (4.8,-6.5);
  
  \node[min] at (0.5,-3) (A0) {$A,0$};
  \node[max] at (0.5,-4) (B0) {$B,0$};
  \node[max] at (0.5,-5) (C0) {$C,0$};
  \node[min] at (0.5,-6) (D0) {$D,0$};
  
  \node[min] at (2,-5) (mn) {$l$};
    \node () [node distance=5mm,below of=mn] {$l_{min}$};
  
  \node[min] at (3.5,-3) (A1) {$A,1$};
  \node[max] at (3.5,-4) (B1) {$B,1$};
  \node[max] at (3.5,-5) (C1) {$C,1$};
  \node[min] at (3.5,-6) (D1) {$D,1$};
  
  \node[triangle] at (4.4,-4.5) (F) {$\infty$};

  \draw[trans] (mn) -- (A1);
  \draw[trans] (B0) -- (mn);
  \draw[trans] (mn) -- (C1);
  \draw[trans] (D0) -- (mn);
  \draw[trans] (A0) -- (B0);  
  \draw[trans] (A1) -- (B1);  
  \draw[trans] (C0) -- (D0);
  \draw[trans] (C1) -- (D1);
  \draw[trans] (B1) -- (F);
 \draw[trans] (D1) -- (F);
\end{tikzpicture}
    \end{tabular}

}
    
  \end{wrapfigure}
Given the transitions $\trans$ of $\frptg_i$, the FRPTG $\frptg''$ has the following transitions.
\begin{itemize}   
\item if $l \xrightarrow{g}l' \in \trans$ and $l,l' \not = \lmin$ then $(l,0) \xrightarrow{g}(l',0)$ 
  and $(l,1) \xrightarrow{g}(l',1)$
\item if $l \xrightarrow{g}l' \in \trans$ and $l' = \lmin$ then $(l,0) \xrightarrow{g}\lmin$ 
  and  $(l,1) \xrightarrow{g}S$, 
\item if $\lmin \xrightarrow{g} l$, then $\lmin \xrightarrow{g} (l,1)$
\end{itemize}
\begin{lemma}
 \label{lem:minLoc_split}
 For every state $(l,\val)$ if $\val {\in} [0,1+\delta]$ and $l {\neq} \lmin$, 
 we have that 
 \begin{eqnarray*} 
 \optcost{\frptg_i}(l,\val) = \optcost{\frptg''}((l,0),\val) & \text{ and } &
 \optcost{\frptg_i}(\lmin,\val) = \optcost{\frptg''}(\lmin,\val).
\end{eqnarray*}
\end{lemma}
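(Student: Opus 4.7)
The plan is to establish the two equalities by sandwiching, showing $\optcost{\frptg''} \geq \optcost{\frptg_i}$ and $\optcost{\frptg''} \leq \optcost{\frptg_i}$ at the relevant starting states. For the direction $\optcost{\frptg''} \geq \optcost{\frptg_i}$, observe that any player-1 strategy $\sigma_1''$ in $\frptg''$ of finite cost induces plays that visit $\lmin$ at most once, since a second visit is routed to the sink $S$ whose cost function is identically $+\infty$. We project $\sigma_1''$ to a strategy in $\frptg_i$ by erasing the copy-bit from the history, and lift any player-2 strategy in $\frptg_i$ back to $\frptg''$ by bookkeeping the bit along the play; costs are preserved since prices, guards, resets, and target cost functions coincide on the corresponding edges.

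The harder direction $\optcost{\frptg''} \leq \optcost{\frptg_i}$ rests on the claim that player 1 has an $\epsilon$-optimal strategy $\sigma^*$ in $\frptg_i$ whose compatible plays visit $\lmin$ at most once. Given an $\epsilon$-optimal $\sigma$, suppose a compatible play $\rho$ visits $\lmin$ at two clock values $x_1 < x_2$. In the reset-free one-clock setting, the segment of $\rho$ between these visits elapses exactly $x_2 - x_1$ time units and forms a cycle through $\lmin$; by the no-negative-cycle hypothesis its accumulated cost is non-negative, and since $\prices(\lmin)$ is the minimum price over all locations, the cycle cost is also at least $\prices(\lmin)(x_2 - x_1)$. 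We modify $\sigma$ so that at the first visit player 1 stays at $\lmin$ for an additional $x_2 - x_1$ time units and then plays as $\sigma$ would at the second visit, bypassing the cycle. The cost change is the non-positive gap $\prices(\lmin)(x_2 - x_1) - (\text{cycle cost})$. Iterating over the finitely many pairs of revisits (bounded since $x \leq 1+\delta$) yields $\sigma^*$, whose lift to $\frptg''$ has cost within $\epsilon$ of $\optcost{\frptg_i}(l,\val)$.

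The main obstacle will be to formalise the shortcut in the presence of an adversarial player 2: we must show that every response of player 2 to the modified strategy produces a play whose cost is bounded by the cost of some compatible play under the original $\sigma$. This will use the monotonicity of the clock in reset-free plays so that guards on outgoing transitions of $\lmin$ remain satisfiable after the time shift, the dwell-time constraints carry over unchanged for player-2 locations, and an inductive outermost-first treatment of revisit pairs. The second equality $\optcost{\frptg_i}(\lmin,\val) = \optcost{\frptg''}(\lmin,\val)$ follows symmetrically: starting at $\lmin$ in $\frptg''$ already uses the single permitted visit, and the same shortcut argument prevents player 1 from gaining anything by returning to $\lmin$.
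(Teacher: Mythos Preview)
Your two-inequality outline and the $\geq$ direction are fine, and indeed closer to the paper's informal intuition than you may realise: the paper also argues by ``$(l,0)$ has all transitions of $l$'' and ``revisiting $\lmin$ is never beneficial because cycles are non-negative''.

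The gap is in your $\leq$ direction. Your shortcut is defined on a \emph{play}, not on a \emph{strategy}: the value $x_2$ at the second visit depends on player~2's responses between the two visits, so ``delay at $\lmin$ for an additional $x_2-x_1$ time units'' is not a well-defined modification of $\sigma$. You flag this yourself as ``the main obstacle'', but the remedies you list (clock monotonicity, outermost-first induction on revisit pairs) do not address it: they still presuppose a fixed play to shortcut.

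The more serious manifestation is the case where a \emph{player~2} location $l$ carries the edge back to $\lmin$. In $\frptg''$ the copy-1 version of that edge goes to $S$, and player~2 will happily take it for cost $+\infty$. No amount of extra waiting by player~1 at the first visit to $\lmin$ prevents player~2 from later choosing that edge. For the lemma to hold you must show one of two things: either player~1 has an $\epsilon$-optimal strategy in $\frptg_i$ whose compatible plays \emph{never} reach a state from which player~2 can force a return to $\lmin$, or, whenever player~2 can force such a return, the value in $\frptg_i$ is already $+\infty$. The minimum-price fact you invoke gives only that the cycle cost is at least $\prices(\lmin)(x_2-x_1)$; it does not by itself yield either conclusion.

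A clean way to close the gap is to argue at the value level rather than via surgery on plays: show that $\optcost{\frptg_i}(\lmin,x)=\inf_{t,e}\bigl[\prices(\lmin)\,t+W(l',x+t)\bigr]$ where $W$ is the optimal cost in the subgame obtained by deleting $\lmin$ (equivalently, the copy-1 game with sink $S$). The inequality $\leq$ is immediate; for $\geq$, use that $\lmin$ is a player~1 location with minimum price, so waiting at $\lmin$ until the clock value of any later return dominates the detour, and combine this with the Bellman recursion for $\optcost{\frptg_i}$. This is the substance behind the paper's one-line ``any strategy that revisits $\lmin$ cannot be winning'', and it is what your shortcut is reaching for but does not yet establish.
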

We give an intuition for Lemma \ref{lem:minLoc_split}. Locations $(l,0)$ have all the transitions available to location
$l$ in $\frptg_i$.  Also, any  play in $\frptg''$ which is compatible with a winning strategy
of player 1  in $\frptg_i$ contains only  one of the locations $(l,0), (l,1)$ by construction 
of $\frptg''$.  The outcomes from $(l,0)$ are more favourable than $(l,1)$ for $l$ as a player 1 location. 
Based on these intuitions, we conclude that 
$\optcost{\frptg_i}(l,\val)$ is same as that for $((l,0),\val)$. This observation 
also leads to the $\epsilon-$optimal strategy being the same as that for $(l,0)$.
Given a strategy $\strat'$ in $\frptg''$, we construct $\strat$ in $\frptg_i$ as 
$\strat(l,\val) = \strat'((l,0),\val)$.
Further, any strategy that revisits $\lmin$ in $\frptg_i$ 
cannot be winning for player 1, since all cycles are non-negative; 
we end up at $S$ with cost $\infty$ in $\frptg''$. However, 
all strategies that do not revisit $\lmin$ in $\frptg_i$ are preserved 
in $\frptg''$, and hence 
$\optcost{\frptg_i}(\lmin,\val) = \optcost{\frptg''}(\lmin,\val)$. 
%
We iteratively solve the part of $\frptg''$ with locations indexed 1 (i.e; $(l,1)$) 
in the same fashion (picking minimal price locations) 
each time obtaining a smaller PTG.  
Computing the cost function of the minimal price location of the last such PTG, 
and propagating this backward, we compute the cost function of $\lmin$.
  We then use the cost function of $\lmin$ to solve the 
part of $\frptg''$ with locations indexed 0 (i.e; $(l,0)$). 

\noindent{\it {\bf Computing the Optcost function of $\lmin$}}:
\label{sec:optcost_min}
\LinesNumberedHidden
\begin{algorithm}[t] 
    Let $l_1, \dots, l_n$ be the successors of $\lmin$ with optcost functions $f_1,f_2 \cdots f_n$.\;
    \underline{\textbf{STEP~1 : Superimpose :}} Superimpose all the optcost functions $f_1,f_2 \cdots f_n$.\;
    \underline{\textbf{STEP~2 : Interior :}} Take the {interior} of the superimposition; call it  $f$.\;
    Let $f$ be composed of line segments $g_1,g_2 \cdots g_m$ such that 
    $g_i \in \{f_1, \dots, f_n\}$, for all $i$. 
        $\forall~k,$ let the domain of $g_k$ be  $[u_k,v_k]$.
    Set $i=m$.\;
    \underline{\textbf{STEP~3 : Selective Replacement : }} 
    \While{ $i \geq 1$}{
	\eIf{slope of $g_i \leq - \prices(\lmin)$}{
	   replace $g_i$ with line $h_i$ with slope $-\prices(\lmin)$ and passing through $(v_i,g_i(v_i))$\;
	   Let $h_i$ intersect $g_j$ (largest $j<i$) at some point $x=v_j''$, $v_j'' \in [u_j,v_j]$\;
	   Update domain of $g_j$  from $[u_j,v_j]$ to $[u_j,v''_j]$\;
	   \If{$j<i-1$}{Remove functions $g_{j+1}$ to $g_{i-1}$ from $f$}
	   Set $i=j$;
	}{
	    $i=i-1$\;
	}
    }
    \underline{\textbf{STEP~4 : Refresh Interior :}} Take the interior after STEP~3 and call it  $f'$.\;
    \If{$l'' \xrightarrow{}\lmin$}{update the optcost function of $l''$}
    \caption{Optimal Cost Algorithm when $\lmin$ is a Player~1 location}
  \label{algo:optcost_compute}
\end{algorithm}
Algorithm \ref{algo:optcost_compute} computes the optcost function 
for a player 1 location $\lmin$, assuming all the constraints on outgoing transitions 
from $\lmin$ are the same, namely $x \in [0, 1]$. We discuss adapting the algorithm to work for 
transitions with different constraints in Appendix \ref{app:pl1}.
A few words on the notation used: if a location $l$ has price $\eta(l)$, then 
 slope associated with $l$ is $-\eta(l)$ (see STEP 3 in Algorithm \ref{algo:optcost_compute}). 

Let $l_1, \dots, l_n$ be the successors of $\lmin$, with cost functions $f_1, \dots, f_n$.
Each of these cost functions are piecewise affine continuous over the domain $[0, 1]$. 
The first thing to do is to superimpose $f_1, \dots, f_n$, and obtain the 
cost function $f$ corresponding to the interior of $f_1, \dots, f_n$ ($\lmin$ is a player 1 location and would 
like to obtain the minimal cost, hence the interior). 
    The line segments comprising $f$ come from the various $f_i$. 
Let $dom(f)=[0,1]$ be composed of $0=u_{i_1} \leq v_{i_1}=u_{i_2} \leq \dots u_{i_m} \leq v_{i_m}=1$ : that is, 
$f(x)=f_{i_j}(x)$, $dom(f_{i_j})=[u_{i_j},v_{i_j}]$, for $i_j \in \{1,2,\dots,n\}$ and  $1 \leq j \leq m$.
Let us denote $f_{i_j}$ by $g_j$, for $1 \leq j \leq m$.  Then, $f$ is composed of $g_1, g_2, \dots, g_m$, 
and  $dom(f)$ is composed of $dom(g_1), \dots, dom(g_m)$ from left to right.  Let $dom(g_i)=[u_i,v_i]$. 
  Step 2 of the algorithm achieves this.
  
      For a given valuation $\nu(x)$, if $\lmin$ is an urgent location, 
    then player 1 would go to a location $l_k$ 
   if the interior $f$ is such that $f(\nu(x))=g_k(\nu(x))$(the least cost is given by $g_k$, obtained 
  from the outside cost function of $l_k$).  
    If $\lmin$ is not an urgent location, then player 1 would prefer  delaying  $t$ units at $\lmin$ 
 so that $\nu(x)+t \in [u_i, v_i]$ rather than goto some location $l_i$  if  
 $g_i(\nu(x)) > \prices(\lmin)(v_i-\nu(x))$.
 Again, $g_i$ is a part of the ouside cost function of $l_i$, and 
 player 1 prefers delaying time at $\lmin$ rather than goto $l_i$ since that minimizes the cost. 
   In this case, the 
 cost function $f$ is refined by replacing the line segment $g_i$ 
 over $[u_i,v_i]$
     by another line segment $h_i$ passing through $(v_i, g_i(v_i))$, and having a slope $-\prices(\lmin)$. Step 3 of the algorithm does this.

    Recall that by our transformation 2,  the value of clock $x$ in any player 1   
      location is $\leq 1-\delta$. The value of $x$ is in  $[1-\delta,1+\delta]$
    only  at a player 2 location ($\posP{(A,e)_b}$ in the FRPTG, section \ref{app:tran2}). Hence, the domain of cost functions  for player 1 locations is actually $[0,1-\delta]$, and not $[0,1+\delta]$. Let the 
      domain of $g_m$ be $[u_m, 1]$. Then we can split $g_m$ into 
      two functions $g^1_m, g^2_m$ with  domains $[u_m,1-\delta]$ and $[1-\delta,1]$.
    Now, we ensure that no time is spent in the player 1 location $\lmin$
       over $dom(g^2_m)$, by not applying step 3 
       of the algorithm for $g^2_m$. This way, selective replacement 
       of the cost functions $g_i$ occur only in the domain $[0,1-\delta]$, and 
       we remain faithful to transformation 2, and the semantics of RPTGs.
       
 \noindent{\it {\bf Computing Almost Optimal Strategies}}: The strategy corresponding to this 
computed optcost is derived as follows.  
$f'$ is the optcost of location $\lmin$ computed 
in Step 4 of the algorithm.  $f'$ is composed of two kinds of functions (a) the functions $g_i$ computed in step 2 as a result of the interior of superimposition and (b) 
functions $h_i$ which replaced some functions $g_j$ from $f$, corresponding to 
delay at $\lmin$. For functions $h_j$ of $f'$ with domain $[u_j,v_j]$, we prescribe the strategy to 
delay at $\lmin$ till $x =v_j$ when entered with clock $x \in [u_j,v_j]$.
 For functions $g_i$, that come  
 from $f$ at Step 2, where $g_i$ is part of some optcost function $f_k$, ($f_k$ is the optcost function  of one of the successors $l_k$ of $\lmin$), 
  the strategy dictates moving 
immediately to $l_k$ when entered with 
clock $x \in [u_i,v_i]$.

 \noindent{\it {\bf Termination}}: 
Finally, we prove the existence of a number $N$, the number of 
affine segments that appear in the cost functions of all locations.
Start with the resetfree FRPTG with $m$ locations
having $p$ segments in the outside cost functions.  
Let $\alpha(m,p)$ denote the total number of 
affine segments appearing in cost functions 
across all locations. 
The transformation
of resetfree components $\frptg$ into $\frptg''$ gives rise to two smaller resetfree FRPTGs
with $m-1$ locations each, after separating out $\lmin$. 
The resetfree FRPTG $(\frptg,1)$ with $m-1$ locations indexed with 1 of the form $(l,1)$ are solved first, 
these cost functions are added as outside cost functions to solve 
 $\lmin$,  and finally, the cost function of $\lmin$ 
 is added as an outside cost function to solve 
 the resetfree FRPTG $(\frptg,0)$ with $m-1$ locations indexed with 0 of the form $(l,0)$.
Taking into account the new sink target location added, we have $\leq p+1$ segments in outside cost functions in $(\frptg,1)$. 
This gives atmost $\beta=\alpha(m-1,p+1)$ segments in solving 
$(\frptg,1)$, and $\alpha(1, p+\beta)=\gamma$ segments to solve $\lmin$, and
finally $\alpha(m-1, p+\gamma)$ segments to solve 
$(\frptg,0)$. Solving this, one can easily check that $\alpha(m,p)$ is atmost triply exponential 
in the number of locations $m$ of the resetfree component $\frptg$.
Obtaining a bound of the number of affine segments, it is easy to see that Algorithm 1 terminates; the time taken to compute almost optimal strategies and optcost functions is triply exponential.

We illustrate the computation of Optcost of a Player~1 location in Figure \ref{fig:optcost_compute}. The proof of Lemma \ref{lem:replaceFun} is given in Appendix \ref{app:pl1}, while 
Lemma \ref{lem:optcost_function} follows from Lemma \ref{lem:replaceFun} 
and Step 4 of Algorithm \ref{algo:optcost_compute}. 

\begin{lemma}
\label{lem:replaceFun}
In Algorithm \ref{algo:optcost_compute},
 if a function $g_i$ (in $f$ of Step 2) has domain $[u_i,v_i]$ and slope $\leq -\prices(l)$
 then $ \optcost{}(l,\val) = (v_i - \val) * \prices(l) + g(v_i)$. 
\end{lemma}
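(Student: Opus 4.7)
My plan is to prove the claimed equality by establishing both directions, exploiting the fact that $\lmin$ is a Player~1 location so that the optimal cost satisfies
\[
\optcost{}(l,\val) \;=\; \min_{t \geq 0}\Bigl[\,t\cdot P + f(\val+t)\,\Bigr], \qquad P := \prices(l),
\]
where $f$ is the interior computed in Step~2 (since Player~1 picks both the delay and the successor, and $f = \min_j f_j$). Here $f_j$ denotes the outside cost function of the successor whose segment contributed $g_i$.

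For the upper bound, I exhibit a Player~1 strategy: from $(l,\val)$ with $\val\in [u_i,v_i]$, delay by $v_i - \val$ time units at $l$ and then take the transition leading to the successor whose cost function at $v_i$ equals $g_i(v_i)$ (such a successor exists because $g_i$ is one of the $f_j$'s by construction of Step~2). The accumulated cost along this play is exactly $(v_i - \val)\cdot P + g_i(v_i)$, giving $\optcost{}(l,\val) \leq (v_i - \val)\cdot P + g_i(v_i)$.

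For the lower bound, let $\phi(t) := tP + f(\val + t)$ and show $\phi(t) \geq (v_i - \val)P + g_i(v_i)$ for every admissible $t\geq 0$. I split into two cases. \emph{Case~1} ($t\in[0, v_i-\val]$): then $\val+t\in[u_i,v_i]$ so $f(\val+t) = g_i(\val+t)$, and since $g_i$ is affine with slope $s\leq -P$,
\[
g_i(\val+t) \;=\; g_i(v_i) + s(\val+t-v_i) \;\geq\; g_i(v_i) - P(\val+t-v_i) \;=\; g_i(v_i) + P(v_i - \val - t),
\]
using $\val+t-v_i\leq 0$ and $s\leq -P$. Hence $\phi(t) \geq tP + g_i(v_i) + P(v_i - \val - t) = (v_i - \val)P + g_i(v_i)$. \emph{Case~2} ($t > v_i - \val$): I invoke the algorithmic invariant that after Step~3 has processed all indices strictly greater than~$i$, every segment of $f$ restricted to $[v_i,1]$ has slope $\geq -P$ (either unchanged originals with slope $>-P$, or replacement segments $h_k$ with slope exactly $-P$). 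Consequently, $\phi'(t) = P + (\text{slope of current segment}) \geq 0$ for $t > v_i - \val$, so $\phi(t) \geq \phi(v_i - \val) = (v_i-\val)P + g_i(v_i)$. Combining both cases and taking the minimum over $t$ gives the matching lower bound. Domain restrictions from Transformation~2 (clock bounded by $1-\delta$ at Player~1 locations, via the $g_m^2$ split) are compatible with this argument.

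The principal obstacle will be Case~2: rigorously establishing the invariant that after processing indices $m,m-1,\ldots,i+1$ of Step~3, the modified function on $[v_i,1]$ has all slopes $\geq -P$. This requires an induction on the number of while-loop iterations, and the nontrivial aspect is bookkeeping how a single replacement $g_k \mapsto h_k$ extends leftwards to intersect some $g_{j'}$ at $v_{j'}''$, possibly deleting intervening segments and truncating $g_{j'}$; one must verify that this cascade preserves the slope invariant and does not reintroduce a slope below $-P$. Once this invariant is in hand, the remainder of the proof reduces to the elementary slope comparisons carried out above.
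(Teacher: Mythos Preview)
Your approach is essentially the same as the paper's: both arguments reduce to the slope comparison showing that when the segment $g_i$ has slope $\leq -\prices(l)$, waiting at $l$ until $x=v_i$ dominates exiting along $g_i$ at any earlier point. The paper's proof is in fact sketchier than yours---it only writes out the two line equations (exit-now versus delay-to-$v_i$) and compares them, without explicitly treating delays past $v_i$; your Case~2 and the right-to-left invariant make that part explicit, which is a genuine improvement in rigor even if the underlying idea is identical.

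One small point to tighten in Case~2: your function $\phi(t)=tP+f(\val+t)$ is defined with the Step~2 interior $f$, but the slope invariant you invoke concerns the \emph{modified} function after Step~3 has processed indices $>i$. These differ on $[v_i,1]$ (the replacements $h_k$ lie below the original $g_k$). The clean fix is to phrase the induction as ``the modified function on $[v_i,1]$ equals $\optcost{}(l,\cdot)$'', then use the Bellman decomposition $\optcost{}(l,\val)\le (v_i-\val)P+\optcost{}(l,v_i)$ together with the fact that the modified function agrees with $f$ at the right endpoint $v_i$ of $g_i$ (or at the updated $v_i''$ if a replacement line cut into $g_i$'s domain). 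This is exactly the bookkeeping you already flagged as the principal obstacle, so you are on the right track.
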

\begin{lemma}
 \label{lem:optcost_function}
 The function $f'$ in  Algorithm \ref{algo:optcost_compute} computes  the optcost at any location $l$. That is, 
$\forall x \in [0,1],$
 $\optcost{\ptg}(l,x) = f'(x)$.
\end{lemma}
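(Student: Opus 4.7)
The plan is to establish the equality $\optcost{}(l,x) = f'(x)$ for $l = \lmin$ (and then propagate it to predecessors via Step~4) by proving both inequalities, relying heavily on Lemma~\ref{lem:replaceFun} together with the reset-free, non-negative-cycle structure of $\frptg_i$. A key simplification is that the $\frptg''$ construction from Lemma~\ref{lem:minLoc_split} already guarantees that any play visiting $\lmin$ a second time is routed to the sink $S$ with cost $+\infty$; hence it suffices to consider Player~1 strategies that \emph{leave $\lmin$ exactly once}, characterised by a delay $t \in [0, 1-\delta - x]$ and a successor $l_k$.

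For the upper bound $\optcost{}(l,x) \leq f'(x)$, I would exhibit a concrete strategy depending on which piece of $f'$ contains $x$. If $f'$ agrees with some segment $g_i$ coming from an outside cost function $f_k$ of successor $l_k$, Player~1 leaves immediately for $l_k$, incurring cost $f_k(x) = g_i(x) = f'(x)$. If $f'$ agrees on $[u_j, v_j]$ with a replacement line $h_j$ (introduced in Step~3 to replace some $g_j$ with slope~$\leq -\prices(\lmin)$), Player~1 delays at $\lmin$ until the clock reaches $v_j$ and then takes the transition associated with $g_j$; by Lemma~\ref{lem:replaceFun} the accumulated cost is exactly $\prices(\lmin)(v_j-x) + g_j(v_j) = h_j(x) = f'(x)$. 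The split of $g_m$ at $1-\delta$ ensures the delay stays within the admissible Player~1 region enforced by Transformation~2.

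For the lower bound $\optcost{}(l,x) \geq f'(x)$, fix any single-visit strategy of Player~1 that delays $t$ and then moves to $l_k$. Since Player~2 will, within $\frptg_i$, play towards the worst-case extension, the cost of this choice is at least $\prices(\lmin) \cdot t + f_k(x+t)$. Taking the infimum over $(t,k)$ with $x+t \leq 1-\delta$ gives precisely what Algorithm~\ref{algo:optcost_compute} computes: the pointwise minimum of the $f_k(x+t) + \prices(\lmin)t$ curves over all admissible $(t,k)$. Step~2 (interior of the superimposition) captures the $t=0$ case; Step~3 then, for every region in which some $g_i$ has slope $\leq -\prices(\lmin)$, lowers the value to $\prices(\lmin)(v_i-x) + g_i(v_i)$, which by Lemma~\ref{lem:replaceFun} is exactly the infimum over $t \in [0, v_i-x]$ of $\prices(\lmin)t + g_i(x+t)$. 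Regions where every covering segment has slope $> -\prices(\lmin)$ are left untouched, reflecting that delaying strictly increases cost there.

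The main obstacle is the correctness of the right-to-left \textbf{while} loop in Step~3 when replacements cascade: a newly introduced line $h_i$ may be itself dominated or dominate earlier segments $g_j$ with $j < i$, potentially deleting intermediate pieces. I would address this by induction on the loop iteration, maintaining the invariant that at every step the current $f$ equals the pointwise infimum $\inf_{t \geq 0,\, k} \bigl(\prices(\lmin)t + f_k(x+t)\bigr)$ restricted to delays that reach a point in the already-processed suffix of the domain. The update rule in the algorithm (extending $h_i$ leftward until it meets $g_j$, then trimming $g_j$ and deleting $g_{j+1},\ldots,g_{i-1}$) is exactly the geometric operation that preserves this invariant, since in the deleted region the $-\prices(\lmin)$-sloped line sits below all discarded pieces. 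Refreshing the interior in Step~4 then removes any residual non-minimal portions, yielding $f' = \optcost{}(\lmin,\cdot)$, and the propagation to predecessors $l''$ in the final \textbf{if} of the algorithm completes the claim for all locations.
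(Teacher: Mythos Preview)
Your proposal is correct and follows the same line as the paper, which simply states that Lemma~\ref{lem:optcost_function} ``follows from Lemma~\ref{lem:replaceFun} and Step~4 of Algorithm~\ref{algo:optcost_compute}'' without further elaboration. Your two-inequality argument, the explicit strategy extraction for the upper bound, and the right-to-left loop invariant for the cascading replacements in Step~3 are exactly the details one would need to flesh out that one-line justification; the paper leaves all of this implicit.
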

 Note that the strategy under construction is a player 1 strategy,  and player 1 has no control over the interval $[1,1+\delta]$. $x \in [1,1+\delta]$ after a positive perturbation, and is under player 2's 
 control. Thus, at a player 1 location, proving for $x \in [0,1]$ suffices.
\subsubsection{$\lmin$ is a Player~2 location}
If $l_{min}$ is a player 2 location in the reset-free component $\frptg_i$, 
then intuitively, player 2 would want to spend as little time as possible there.
Keeping this in mind, 
we first run steps 1, 2 of Algorithm \ref{algo:optcost_compute}
by taking the exterior of $f_1, \dots, f_n$ instead of the interior(player 2 would maximise the cost). 
There is no time elapse at $\lmin$ on running steps 1,2 of the algorithm.  
  Let $f$ be the computed exterior using steps 1,2. 
   If  $f$ comprises of functions $g_i$ having a greater slope 
than $-\prices(l)$, then 
 it is better to delay at $\lmin$ to increase the cost.  
 In this case, player 2 would want to 
improve his optcost using Step 3, by  spending time at $l_{min}$. 
Finally, while doing Step 4, we take the exterior 
of the replaced functions $h_i$ and old functions $g_i$.
Recall that our transformations resulted in 3 kinds of player 2 locations : urgent, those with 
dwell-time restriction $[0, \delta]$ and finally those with $[\delta,2\delta]$.
The 3 cases are discussed in detail in Appendix~\ref{app:pl2}. 
\input{./figs/superimpose.tex}


\section{Conclusion and Future Work}
\label{sec:robust-concl}
In this paper we studied excess robust semantics and provided the first
decidability result for excess semantics and  improved the known undecidability
result with 10 clocks to 5 clocks. 
To the best of our knowledge, the other known decidability result 
for robust timed games is under the conservative semantics 
 for a fixed $\delta$, \cite{prabhu}. 
As a  consequence of our decidability result, the reachability problem for 1
clock PTG with arbitrary prices is shown to be decidable too under the
assumption that the PTG does not have any negative cost cycle. 
The decidability we show is for a fixed perturbation bound $\delta>0$.  
We use  $\delta$ in the constraints of the 
dwell-time PTG after the first transformation for ease of understanding the 
robust semantics. 
Implementing this in step 3 of Algorithm 1 and ensuring 
no time elapse in the interval $[1-\delta,1]$ takes no extra effort while 
$\lmin$ is a player 1 location. 
In that sense, we could have avoided explicit use of $\delta$ in the constraints 
in our simplifying transformations, and taken the appropriate steps 
in the algorithm itself.
The existence of limit-strategy with $\delta \to 0$ seems rather hard. 
Our construction would not directly extend to limit-strategy problem as it is
heavily dependant on the fixed $\delta$.  


\newpage
\bibliographystyle{plain}
\bibliography{papers}
\newpage
\appendix
\centerline{\Large \bf{Appendix}}
\section{Cost Functions}
\label{app:cost-eg}
\input{./figs/cost-ptg.tex}
We illustrate the cost functions with an example. In the PTG given here, the cost function $f$ corresponding to the target gives the cost incurred when the target is entered various values of clock $x$. For example, if target is reached with clock value $x=0$ then cost incurred is $cost= 3 = f(0)$  while $cost=0$ if entered with $x\in[0.5,1]$. 
Suppose $B$ is entered with $x=0$ and Player 1 decides to go to the target immediately with no delay at $B$ (i.e; delay $d=0$) then the cost is $cost = 1 * d + f(v) = 1 * 0 + f(0) = 3$, and $x=v$ upon entering the target. However, if player 1 waited at $B$ till $x=0.5$ and then went to target, the cost is $1 * 0.5 + f(0.5) = 0.5$. Similarly, if $d=0.75$ then the cost is 0.75. From this, we can infer that the best strategy for Player 1 to achieve the optimal cost is to wait till $x=0.5$ and then go to target. The second function labelled \textit{OptCost of $B$} gives the optimal cost achievable for every value of $x$ that $B$ is entered with. 
Similar analysis for location $A$, reveals that the cost incurred is $-1$ if Player 2 went to target directly. Else, he could wait at $A$ and then go to $B$. Due to the negative price at $A$, it is obvious that the best strategy for Player 2 is to go to $B$ immediately. Thus, the optimal cost function for $A$ is the same as that of $B$.

\section{UndecidabilityProof}
\label{app:undec}

We present below a set of figures which depict in full detail the simulation of all the instructions of two counter machine - increment, zero test and decrement.

First we describe a few \emph{support modules}
that will be used in the main modules
for simulating increment, zero test and decrement instructions.

\subsection{Prevent perturbation module} \label{sec:prev_perturb}
For correct simulation of the instructions,
it will often be needed that the delay made by
controller should not be perturbed by perturbator.
The module in Figure \ref{fig_unpertub}
shows the construction that prevents perturbator
from making any perturbation along the edge from $A$ to $B$.
\begin{figure}[h]
\begin{center}

\begin{tikzpicture}[->,>=stealth',shorten >=1pt,auto,node distance=1cm,
  semithick,scale=0.9]
  
  \node at (-1.5, 1) (n) {$\mathbf{Module : Prevent \;perturbation}$};
  
  \node[initial,initial text={$\rho$}, locr] at (-1.6,0) (A) {$0$} ;
  \node[locLabel] () [above of = A]{$A$};
  \node[locr] at (0.2,0) (B){$0$ } ;
  \node[locLabel]() [above of = B]{$B$};
  \node[locr] at (1.9,0) (C){$0$};
  \node[locLabel] () [above of = C]{$C$};
  \node[triangle] at (0.2,-2) (T) {0};

  \draw[trans] (A) -- (B) node[midway,above] {$x{=}1$} node[midway,below]{$\set{y_2}$};
  \draw[trans] (B) -- (T) node[midway,left]{$x{=}2$} node[midway,right]{$y_2{\neq}1$};
  \draw[trans] (B) -- (C);

  \node[initial,initial text={$\rho'$}, locr] at (4.5,0) (A1) {$0$} ;
  \node[locLabel] () [above of = A1]{$A$};
  \node[locr] at (6.2,0) (B1){$0$ } ;
  \node[locLabel]() [above of = B1]{$B$};
  \node[locr] at (7.9,0) (C1){$0$};
  \node[locLabel] () [above of = C1]{$C$};
     
  \draw[trans,dashed] (A1) -- (B1) node[midway,above]{$x{=}1$} ;  
  \draw[trans] (B1) -- (C1);

\end{tikzpicture}

\caption{Prevent perturbation module: $x$ is some clock and $x = k$ could be the constraint for any $k\in \Nat$.  The triangle with $0$ represents target location $l$ with cost function $cf(l): \Rplus \to 0$. }
\label{fig_unpertub}
\end{center}
\end{figure}
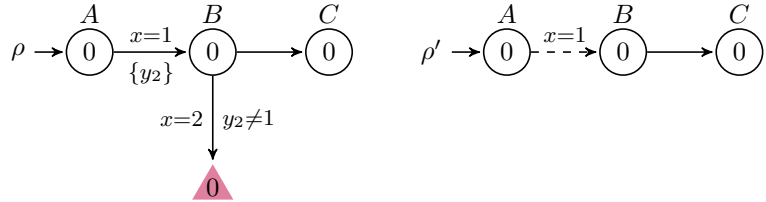
In run $\rho$, the edge from $B$ to the  target ensures that
if the delay chosen at $A$ was perturbed then 
controller can achieve a cost 0.
For better readability,
we represent these unperturbed edges as 
dashed arrows as shown in path $\rho'$.
We note that the clock which is used in the equality constraint,
($x$ in Figure \ref{fig_unpertub})
cannot be reset along the same edge.
If we do not specify a clock 
that is being reset along the dashed edge,
we consider it to be $y_2$.
For any other clock, we show it as being reset along the dashed edge.
Note that in the `prevent perturbation module',
we need at least one equality constraint
($x=1$ in Figure \ref{fig_unpertub}),
thus ensuring a deterministic delay.

\subsection{Choice module} \label{sec:choice}
Since we consider a priced timed automaton and not a PTG,
perturbator does not own a location
from where it can suggest the successor location
of its choice.
We show in Figure \ref{fig_choice_module}, the construction of a module
that allows perturbator to choose the successor location.
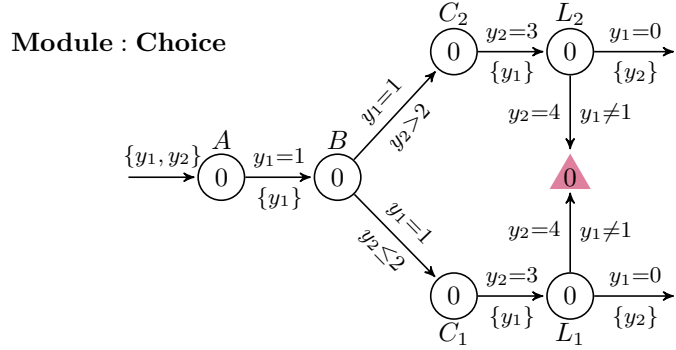
\begin{figure}[h]
\begin{center}

\begin{tikzpicture}[->,>=stealth',shorten >=1pt,auto,node distance=1cm,  semithick,scale=0.9]

  \node at (-1.5, 2) (n) {$\mathbf{Module : Choice}$};
  \node at (-1.5,0) (i) {};
  
  \node[locr] (A) {$0$} ;
  \node[locLabel] () [above of =A] {$A$};
   
   \node[locr] at (1.7,0) (B) {0};
   \node[locLabel] () [above of = B] {$B$};
  
  \node[locr] at (3.4,-1.75) (C1){$0$}; 
  \node[locLabel] () [below of = C1] {$C_1$};
  
  \node[locr] at (3.4,1.85)(C2){$0$};
  \node[locLabel] () [above of = C2] {$C_2$};
  
  \node[locr] at (5.1,-1.75)(L1){$0$};
  \node[locLabel] () [below of = L1] {$L_1$};
  
  \node[locr] at (5.1,1.85)(L2){$0$};
  \node[locLabel] () [above of = L2] {$L_2$};
  
  \node[triangle] at (5.1,0)(T){0};

  \node[] at (6.8,-1.75)(F1){};
  \node[] at (6.8,1.85)(F2){};
    
  \draw[trans] (i) -- (A)node[midway,above]{$\set{y_1,y_2}$};
  \draw[trans] (A) -- (B)node[midway,above]{$y_1{=}1$} node[midway,below]{$\set{y_1}$};
  \draw[trans] (B) -- (C1)node[midway,sloped,above]{$y_1{=}1$} node[midway,sloped,below]{$y_2{\leq}2$};
  \draw[trans] (B) -- (C2)node[midway,sloped,above]{$y_1{=}1$} node[midway,sloped,below]{$y_2{>}2$};
  \draw[trans] (C1) -- (L1)node[midway,above]{$y_2{=}3$} node[midway,below]{$\set{y_1}$};
  \draw[trans] (C2) -- (L2)node[midway,above]{$y_2{=}3$} node[midway,below]{$\set{y_1}$};
  \draw[trans] (L1) -- (T)node[midway,left]{$y_2{=}4$} node[midway,right]{$y_1{\neq}1$};
  \draw[trans] (L2) -- (T)node[midway,left]{$y_2{=}4$} node[midway,right]{$y_1{\neq}1$};

  \draw[trans] (L1) -- (F1)node[midway,above]{$y_1{=}0$} node[midway,below]{$\set{y_2}$};
  \draw[trans] (L2) -- (F2)node[midway,above]{$y_1{=}0$} node[midway,below]{$\set{y_2}$};

\end{tikzpicture}

\caption{Choice module : Perturbator can choose to go to $C_2$ if he peturbs the delay at $B$ by a positive value. If he does not perturb or perturbs by a negative value then goes to $C_1$.}
\label{fig_choice_module}
\end{center}
\end{figure}
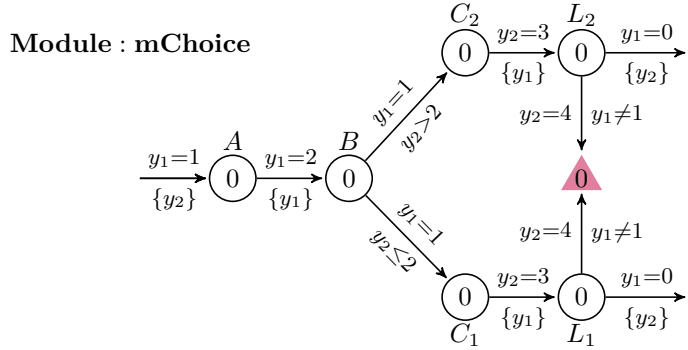
\begin{figure}[h]
\begin{center}

\begin{tikzpicture}[->,>=stealth',shorten >=1pt,auto,node distance=1cm,  semithick,scale=0.9]

  \node at (-1.5, 2) (n) {$\mathbf{Module : mChoice}$};
  \node at (-1.5,0) (i) {};
  
  \node[locr] (A) {$0$} ;
  \node[locLabel] () [above of =A] {$A$};
   
   \node[locr] at (1.7,0) (B) {0};
   \node[locLabel] () [above of = B] {$B$};
  
  \node[locr] at (3.4,-1.75) (C1){$0$}; 
  \node[locLabel] () [below of = C1] {$C_1$};
  
  \node[locr] at (3.4,1.85)(C2){$0$};
  \node[locLabel] () [above of = C2] {$C_2$};
  
  \node[locr] at (5.1,-1.75)(L1){$0$};
  \node[locLabel] () [below of = L1] {$L_1$};
  
  \node[locr] at (5.1,1.85)(L2){$0$};
  \node[locLabel] () [above of = L2] {$L_2$};
  
  \node[triangle] at (5.1,0)(T){0};
  
  \node[] at (6.8,-1.75)(F1){};
  \node[] at (6.8,1.85)(F2){};
    
  \draw[trans] (i) -- (A)node[midway,above]{$y_1{=}1$};
  \draw[trans] (i) -- (A)node[midway,below]{$\set{y_2}$};
  \draw[trans] (A) -- (B)node[midway,above]{$y_1{=}2$} node[midway,below]{$\set{y_1}$};
  \draw[trans] (B) -- (C1)node[midway,sloped,above]{$y_1{=}1$} node[midway,sloped,below]{$y_2{\leq}2$};
  \draw[trans] (B) -- (C2)node[midway,sloped,above]{$y_1{=}1$} node[midway,sloped,below]{$y_2{>}2$};
  \draw[trans] (C1) -- (L1)node[midway,above]{$y_2{=}3$} node[midway,below]{$\set{y_1}$};
  \draw[trans] (C2) -- (L2)node[midway,above]{$y_2{=}3$} node[midway,below]{$\set{y_1}$};
  \draw[trans] (L1) -- (T)node[midway,left]{$y_2{=}4$} node[midway,right]{$y_1{\neq}1$};
  \draw[trans] (L2) -- (T)node[midway,left]{$y_2{=}4$} node[midway,right]{$y_1{\neq}1$};

  \draw[trans] (L1) -- (F1)node[midway,above]{$y_1{=}0$} node[midway,below]{$\set{y_2}$};
  \draw[trans] (L2) -- (F2)node[midway,above]{$y_1{=}0$} node[midway,below]{$\set{y_2}$};

\end{tikzpicture}

\caption{mChoice module: The mChoice (modified choice) module is the
same as the Choice module except for the fact that
here the value of clock $y_1$ is 1 upon entry.}
\label{fig_mchoice_module}
\end{center}
\end{figure}
The delay from  location $A$ to location $B$
can be perturbed by perturbator.
Controller chooses $C_2$ as the successor if the perturbation is
positive, and chooses $C_1$ as its successor if the perturbation is negative.
We note that if the module was entered with $x_1= \alpha_1,
z=\beta,x_2=\alpha_2, y_1=y_2=0$ then
upon leaving either $L_1$ or $L_2$
the clock values are
$x_1= 3+\alpha_1,z=3+\beta, x_2=3+\alpha_2, y_1=y_2=0$.
The mChoice (modified choice) module 
shown in Figure \ref{fig_mchoice_module} is the
same as the Choice module except for the fact that
here the value of clock $y_1$ is 1 upon entry.
Thus the constraint on the edge between locations $A$ and $B$ is
$y_1 = 2$ instead of $y_1 = 1$ as in choice module.
Here also the value of clocks $x_1$, $z$ and $x_2$ are increased
by 3 as in the choice module
while clocks $y_1$ and $y_2$ have value 0 on exit.
 
\subsection{Restore module} \label{sec:restore}
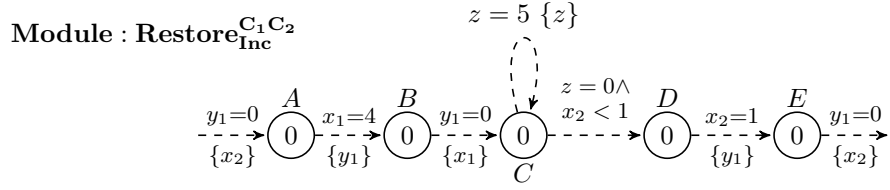
\begin{figure}[h]
\begin{center}

\begin{tikzpicture}[->,>=stealth',shorten >=1pt,auto,node distance=1cm,  semithick,scale=0.9]

\node at (-2,1.5) (n) {$\mathbf{Module : Restore_{Inc}^{C_1C_2}}$};

  \node at (-1.5,0) (i) {};
  
  \node[locr] (A) {$0$} ;
  \node[locLabel] () [above of =A] {$A$};
   
   \node[locr] at (1.7,0) (B) {0};
   \node[locLabel] () [above of = B] {$B$};

   \node[locr] at (3.4,0) (C) {0};
   \node[locLabel] () [below of = C] {$C$};
  
   \node[locr] at (5.5,0) (D) {0};
   \node[locLabel] () [above of = D] {$D$};
  
   \node[locr] at (7.4,0) (E) {0};
   \node[locLabel] () [above of = E] {$E$};
  
  \node at (8.9,0)(F){};

  \draw[trans,dashed] (i) -- (A)node[midway,above]{$y_1{=}0$}node[midway,below]{$\set{x_2}$};
  \draw[trans,dashed] (A) -- (B)node[midway,above]{$x_1{=}4$} node[midway,below]{$\set{y_1}$};
  \draw[trans,dashed] (B) -- (C)node[midway,sloped,above]{$y_1{=}0$}
  node[midway,below]{$\set{x_1}$};
  \path (C) edge [dashed,loop above, min distance=15mm,,looseness=10] node {$z=5$~$\set{z}$} (C);
  
  \draw[trans,dashed] (C) -- (D)node[midway,above]{$\begin{array}{c} z =0 \wedge \\ x_2 < 1\end{array}$};
  \draw[trans,dashed] (D) -- (E)node[midway,above]{$x_2{=}1$} node[midway,below]{$\set{y_1}$};
  \draw[trans,dashed] (E) -- (F)node[midway,above]{$y_1{=}0$} node[midway,below]{$\set{x_2}$};
  
\end{tikzpicture}

\caption{Restore module :
This is actually a group of four modules.
$Restore_{Inc}^{C_1C_2}$ is shown in the figure.}
\label{fig_restore}
\end{center}
\end{figure}
Both choice and mChoice modules
add a shift of 3 to the clock values $x_1, x_2$ and $z$.
Since the main modules simulating
increment and decrement of the counters
expect the values to be in their normal forms,
we need to remove the shift of 3; this is achieved by the Restore module
shown in Figure \ref{fig_restore}.
The restore modules used in the main modules simulating the
operations on counter $C_1$
are a group of four different modules
as mentioned below.
$Restore_{Inc}^{C_1C_2}$ denotes the module used as part of the
Increment module for counter $C_1$.
We also similarly have $Restore_{Dec}^{C_1C_2}$ module
which is used as part of the Decrement module.
The $Restore_{Dec}^{C_1C_2}$ module is similar to the
$Restore_{Inc}^{C_1C_2}$ module with the only difference being
that the clock constraint on the loop on $C$ is $z = 6$ instead of
$z = 5$ as in the $Restore_{Inc}^{C_1C_2}$ module.
$C_1C_2$ here denotes that the fractional part of clock $x_1$ is more
than the fractional part of clock $z$.
We also use $Restore_{Dec}^{C_2C_1}$ and $Restore_{Inc}^{C_2C_1}$
to denote that the fractional part of clock 
$z$ is more than the fractional part of clock $x_1$.
The $Restore_{Inc}^{C_2C_1}$ module can be obtained from $Restore_{Inc}^{C_1C_2}$ module by
replacing all the occurrences of clock $x_1$ with clock $z$ and
replacing all the occurrences of clock $z$ with clock $x_1$.
$Restore_{Dec}^{C_2C_1}$ can also be obtained from
$Restore_{Dec}^{C_1C_2}$ in the same way.
The edge from location $C$ to location $D$ forces controller
to take the loop at location $C$ only once.
The $Restore_{Inc}^{C_1C_2}$ and $Restore_{Inc}^{C_2C_1}$
modules are entered  with
clock values $x_1 = 3 + \frac{1}{2^i}+\varepsilon_1, z = 4+\frac{1}{2^j}+\varepsilon_2, x_2=y_1=y_2=0$, at the starting location $A$ of  the module. At location $E$,  the clock
values are $x_1 = \frac{1}{2^i}+\varepsilon_1, z = \frac{1}{2^j}+\varepsilon_2, x_2=y_1=y_2=0$, i.e.
restored to their normal form.

The restore modules used in the modules simulating operations
on counter $C_2$ are analogous.
Corresponding to $Restore_{Inc}^{C_1C_2}$,
the delays at locations $A$, $C$ and $D$ are
$1-\frac{1}{2^i}-\varepsilon_1$,
$\frac{1}{2^i}+\varepsilon_1 - \frac{1}{2^j}-\varepsilon_2$
and $\frac{1}{2^j}+\varepsilon_2$ respectively,
while the delays at locations $B$ and $E$ are 0.
The value of clock $z$ at the entry of the $Restore_{Dec}^{C_1C_2}$
and $Restore_{Dec}^{C_2C_1}$ is $5+\frac{1}{2^j}+\varepsilon_2$
and the the clock values at the exit are as
$Restore_{Inc}^{C_1C_2}$ and $Restore_{Inc}^{C_2C_1}$ modules.

We show below the main modules which are
used for simulating  zero test and decrement.
We show here the modules corresponding to
the operations on counter $C_1$.
The modules corresponding to the operations on counter $C_2$
are analogous.

\subsection{Decrement module} \label{sec:decre}
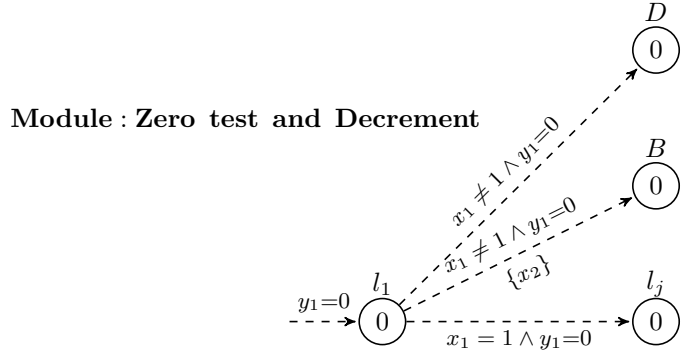
\begin{figure}[h]
\begin{center}

\begin{tikzpicture}[->,>=stealth',shorten >=1pt,auto,node distance=1cm,  semithick,scale=0.9]

  \node at (-3.5,3) (n) {$\mathbf{Module :Zero \;\:test\;\: and\;\: Decrement}$};

  \node at (-3,0) (i) {};
  
  \node[locr] at (-1.5,0) (l1) {$0$} ;
  \node[locLabel] () [above of =l1] {$l_1$};
  
  \node[locr] at (2.5,0) (c1) {$0$};
  \node[locLabel] () [above of =c1] {$l_j$};

  \node[locr] at (2.5,2) (B) {$0$};
  \node[locLabel] () [above of =B] {$B$};  

  \node[locr] at (2.5,4) (D) {$0$};  
  \node[locLabel] () [above of =D] {$D$};

  


  

  \draw[trans,dashed] (i) -- (l1)node[midway,above]{$y_1{=}0$};
  \draw[trans,dashed] (l1) -- (c1)node[midway,below]{$x_1 = 1 \wedge y_1{=}0$};
  \draw[trans,dashed] (l1) -- (B)node[midway,sloped,above]{$x_1 \neq1 \wedge y_1{=}0$}node[midway,sloped,below]{$\set{x_2}$};
  \draw[trans,dashed] (l1) -- (D)node[midway,sloped,above]{$x_1 \neq1 \wedge y_1{=}0$};
  
\end{tikzpicture}

\caption{Zero test and Decrement Module: This module simulates the instruction \texttt{If $(c_1=0)$ go to $l_j$ else go to $l_j'$}.
The extensions from $B$ and $D$
are shown in Figures \ref{fig_decrement}
and \ref{fig_decrement1to0} respectively.
}
\label{fig_testndec}
\end{center}
\end{figure}
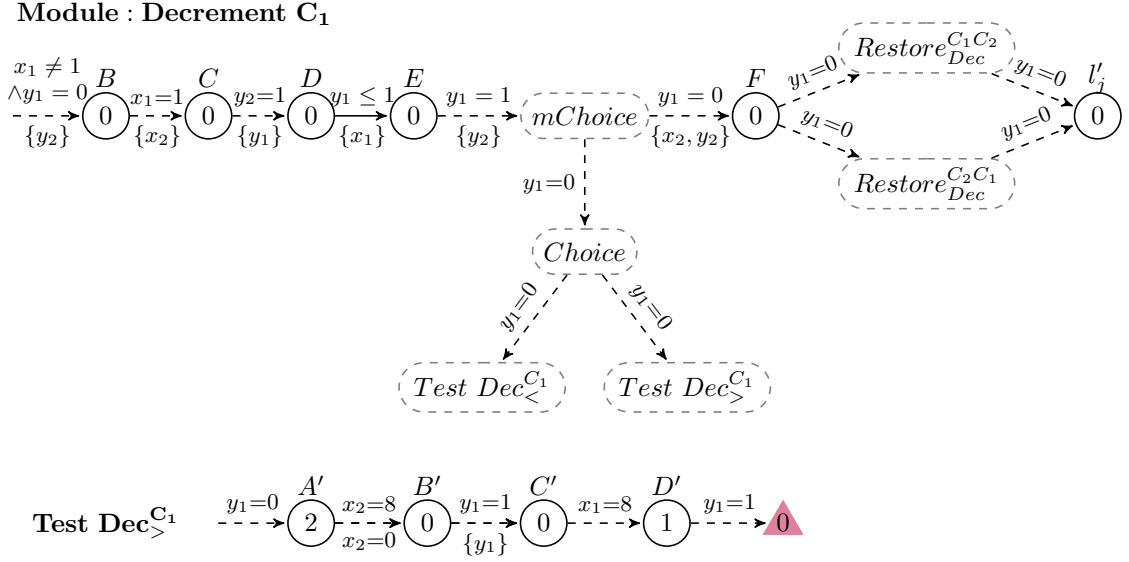
\begin{figure}[h]
\begin{center}

\begin{tikzpicture}[->,>=stealth',shorten >=1pt,auto,node distance=1cm,  semithick,scale=0.9]

\node at (-2,1.5) (n) {$\mathbf{Module : Decrement~C_1}$};

  \node at (-4.5,0) (i) {};
  
  \node[locr] at (-3,0) (l1) {$0$} ;
  \node[locLabel] () [above of =l1] {$B$};

  \node[locr] at (-1.5,0) (l2) {$0$} ;
  \node[locLabel] () [above of =l2] {$C$};

  \node[locr] at (0,0) (l3) {$0$} ;
  \node[locLabel] () [above of =l3] {$D$};

  \node[locr] at (1.5,0) (l4) {$0$} ;
  \node[locLabel] () [above of =l4] {$E$};
  
  \node[widget] at (4,0) (c) {$mChoice$};

  \node[locr] at (6.5,0) (l5) {$0$} ;
  \node[locLabel] () [above of =l5] {$F$};

  \node[widget] at (9,1) (r) {$Restore_{Dec}^{C_1C_2}$};
  \node[widget] at (9,-1) (s) {$Restore_{Dec}^{C_2C_1}$};
  
  \node[locr] at (11.5,0) (l6) {0};
  \node[locLabel] () [above of =l6] {$l_j'$};

  \draw[trans,dashed] (i) -- (l1)node[midway,above]{$\begin{array}{c} x_1 \neq 1 \\ \wedge y_1 = 0\end{array}$}
  node[midway,below]{$\set{y_2}$};
  \draw[trans,dashed] (l1) -- (l2)node[midway,above]{$x_1{=}1$}node[midway,below]{$\set{x_2}$};
  \draw[trans,dashed] (l2) -- (l3)node[midway,above]{$y_2{=}1$}node[midway,below]{$\set{y_1}$};  
  \draw[trans] (l3) -- (l4)node[midway,above]{$y_1 \le 1$}node[midway,below]{$\set{x_1}$};  
  \draw[trans,dashed] (l4) -- (c)node[midway,above]{$y_1=1$} node[midway,below]{$\set{y_2}$};
  \draw[trans,dashed] (c) -- (l5)node[midway,above]{$y_1=0$} node[midway,below]{$\set{x_2,y_2}$};
  \draw[trans,dashed] (l5) -- (r)node[midway,sloped,above]{$y_1{=}0$};
  \draw[trans,dashed] (l5) -- (s)node[midway,sloped,above]{$y_1{=}0$};
  \draw[trans,dashed] (r) -- (l6)node[midway,sloped,above]{$y_1{=}0$};
  \draw[trans,dashed] (s) -- (l6)node[midway,sloped,above]{$y_1{=}0$};
  
   \node[widget] at (4,-2) (c2) {$Choice$};

   \node[widget] at (2.5,-4) (F) {$Test~Dec^{C_1}_{<}$};

    \node[widget] at (5.5,-4) (G) {$Test~Dec^{C_1}_{>}$};

  \draw[trans,dashed] (c) -- (c2)node[midway,left]{$y_1{=}0$};
  \draw[trans,dashed] (c2) -- (F)node[midway,sloped,above]{$y_1{=}0$};
  \draw[trans,dashed] (c2) -- (G)node[midway,sloped,above]{$y_1{=}0$};
  
  
  \node at (-3,-6) (j) {$\mathbf{Test~Dec^{C_1}_{>}}$};
  
%
%
%
%
%
%
  
  \node at (-1.5,-6) (j) {};
  
  \node[locr] at (0,-6) (L) {$2$};
  \node[locLabel] () [above of =L] {$A'$};
   
   \node[locr] at (1.7,-6) (M) {0};
   \node[locLabel] () [above of = M] {$B'$};

   \node[locr] at (3.4,-6) (N) {0};
   \node[locLabel] () [above of = N] {$C'$};
  
   \node[locr] at (5.2,-6) (O) {1};
   \node[locLabel] () [above of = O] {$D'$};
  
   \node[triangle] at (6.9,-6) (P) {0};

  \draw[trans,dashed] (j) -- (L)node[midway,above]{$y_1{=}0$};
  \draw[trans,dashed] (L) -- (M)node[midway,above]{$x_2{=}8$};
 \draw[trans,dashed] (L) -- (M)node[midway,below]{$x_2{=}0$};
    \draw[trans,dashed] (M) -- (N)node[midway,above]{$y_1{=}1$};
    \draw[trans,dashed] (M) -- (N)node[midway,below]{$\{y_1\}$};
    \draw[trans,dashed] (N) -- (O)node[midway,above]{$x_1{=}8$};
    \draw[trans,dashed] (O) -- (P)node[midway,above]{$y_1{=}1$};

\end{tikzpicture}

\caption{\textbf{Decrement $C_1$ module} : 
The section of the module
shown in Figure \ref{fig_testndec}
starting from location $B$.
This section is used if $c_1 > 1$ before being decremented.
It keeps the fractional part of clock $z$ unchanged.
The price 2 at $A$ is a shorthand, and can be replaced with 1
on having a longer sequence of transitions.
}
\label{fig_decrement}
\end{center}
\end{figure}
The module simulating decrement of counter $C_1$
is shown in Figure \ref{fig_testndec}.
Recall that by the normal form, the values 
of the clocks are $x_1=\frac{1}{2^i}+\varepsilon_1$,
$z=\frac{1}{2^j}+\varepsilon_2$, $y_1=y_2=x_2=0$
at $l_1$. 
\begin{enumerate}
\item Assume that $c_1>0$ at $l_1$. Controller can choose to goto $B$ or $D$, since the constraints on both the edges are the same. If $c_1>1$, controller chooses to goto $B$, and if $c_1=1$, then controller goes to $D$.
Consider $c_1>1$, and controller visiting $B$. By the encoding, $x_1=\frac{1}{2^i}+\varepsilon_1$, $i>1$,
$z=\frac{1}{2^j}+\varepsilon_2, x_2=y_1=y_2=0$.
Here $\varepsilon_1$ and $\varepsilon_2$ denote 
errors accumulated so far in clocks $x_1$ and $z$
due to perturbation made by perturbator so far.
Figure \ref{fig_decrement} shows the section of the module
shown in Figure \ref{fig_testndec}
starting from location $B$.
This section simulates the decrement of counter $C_1$
when the value of the counter is greater than 1.
The value of clock $z$ simulating counter $C_2$
remains unchanged.

Let us denote the value of $x_1$ at the entry of the module Decrement $C_1$ in Figure \ref{fig_decrement},
i.e. $\frac{1}{2^i}+\varepsilon_1$ by $\alpha$.
Thus the delays at locations $B$ and $C$ are respectively
$1 - \alpha$ and $\alpha$. 
On entry at $D$, we thus have $x_2=\alpha, y_1=0,y_2=1$. 
A non-deterministic time $t$ is spent at $D$ simulating the decrement of $C_1$. Ideally, 
$t$ must be $1-2\alpha$.
Perturbator can perturb it by $\delta$,
where $\delta$ can be both positive and negative
and clock $x_1$ is reset.
On entering $E$ we thus have $x_1=0,y_1=t+\delta, x_2=\alpha+t+\delta$. 
At the entry to  mChoice module, the values of the clocks are
$x_1=1-t - \delta, z=2+\frac{1}{2^j}+\varepsilon_2, x_2=1+ \alpha, y_1=1, y_2=0$.
To correctly decrement $C_1$ (whose value is $i$),
$1-t$ should be exactly $2\alpha$, i.e. $\frac{1}{2^{i-1}}+2\varepsilon_1$.

Perturbator uses the mChoice module to either continue the simulation (by going to the Restore module) or  verifies controller's delay $t$. 
Due to the mChoice module, the clock values are $x_1=4-t - \delta,
z=5+\frac{1}{2^j}+\varepsilon_2, x_2=4+ \alpha, y_1=0, y_2=0$.
If perturbator chooses to continue the simulation,
then the Restore module restores the clocks back to normal form and
hence upon entering $l_j'$ the clock values are $x_1=1-t- \delta, z=\frac{1}{2^j}+\varepsilon_2, x_2=0, y_1=0, y_2=0$.
Thus, we have $x_1=\frac{1}{2^{i-1}}+2\varepsilon_1-\delta$, where
$2\varepsilon_1-\delta$ is the value due to the perturbations so far.

However, if perturbator chooses to verify,
he first goes to yets another Choice module.  If 
$1-t > 2\alpha$, then  the module
$Test~Dec^{C_1}_{>}$  is used and if $1-t < 2 \alpha$, then the module 
$Test~Dec^{C_1}_{<}$ is used. 
Note that due to the two Choice modules one after the other,  
the clock values upon entering $Test~Dec^{C_1}_{<}$
or $Test~Dec^{C_1}_{>}$ are
$x_1= 7-t - \delta, x_2=7+\alpha, y_1=y_2=0$. 


$\mathbf{Test~Dec^{C_1}_{>}}$ :
At $A'$, on entry we have 
$x_1= 7-t -\delta, x_2=7+\alpha, y_1=y_2=0$. 
A time $1-\alpha$ is spent at $A'$ with accumulated cost $2-2\alpha$.\footnote{The price 2 
on $A$ can be replaced with 1, by having a slightly longer sequence of transitions}
On entry to $B$, we have $x_1=8-\alpha-t-\delta, y_1=1-\alpha$. A time 
$\alpha$ is spent at $B'$, and $x_1=8-t-\delta$.
A time $t+\delta$ is spent at $C'$, obtaining 
$y_1=t+\delta$. A time $1-t-\delta$  
is spent at $D'$ obtaining the 
 accumulated cost
$2-2\alpha+1-t-\delta$. The target is reached with this cost. 
If $1-2\alpha > t$, then this is $>2-\delta$. 
The perturbator can choose $\delta <0$, making this cost $>2$.  

\item Controller chooses the outgoing edge to $D$ in Figure \ref{fig_testndec}
if $c_1$ is 1 in which case the decremented value is 0
which is encoded by the exact value $x_1 = 1$.
 The module from $D$ has been shown in Figure \ref{fig_decrement1to0}.

Figure \ref{fig_decrement1to0} shows the section of the module
of Figure \ref{fig_testndec}
starting from location $D$.
This section simulates the decrement of counter $C_1$
when $c_1=1$.
Upon entering $D$, in the Test and Decrement module, the clock values are $x_1= \frac{1}{2}+\varepsilon_1,
z=\frac{1}{2^j}+\varepsilon_2, x_2=y_1=y_2=0$.
Let $\alpha$ denote the value of $x_1$, i.e. $\frac{1}{2}+\varepsilon_1$.
The time elapsed in locations $D, E$ and $F$ 
in Figure \ref{fig_decrement1to0}
are respectively
$1-\alpha$, $\alpha$ and $1$.
At the entry of the Choice module, the clock values are
$x_1=1,z=2+\frac{1}{2^j}+\varepsilon_2,x_2=1+\alpha,y_1=y_2=0$.
Here $x_1$ encodes the counter value of $C_1$ exactly
and perturbator cannot perturb the delay made by the controller.

Perturbator uses the Choice module to either continue the simulation or it can verify the delay made by controller. 
Due to the Choice module, the clock values are $x_1=4,z=5+\frac{1}{2^j}+\varepsilon_2,x_2=4+\alpha,y_1=y_2=0$.
If perturbator chooses to continue the simulation
then the Restore module restores  the clocks back to the normal form and
hence upon entering $l_j'$ the clock values are $x_1 = 1,
z_1=\frac{1}{2^j}+\varepsilon_2,
x_2=y_1=y_2=0$.

However, if perturbator chooses to verify,
he uses $Test~Dec^{C_1}_{n=1}$ module to verify whether
controller chose this branch ($D$) of the Test and Decrement module
when $c_1=1$ or $c_1 > 1$.

$\mathbf{Test~Dec^{C_1}_{n=1}}$ : 
On entry, we have $x_1=7,z=8+\frac{1}{2^j}+\varepsilon_2,x_2=7+\alpha,y_1=y_2=0$.
The delays at locations are: at $A':1-\alpha$ obtaining $y_1=1-\alpha$ 
on entering $B'$. A time elapse of $\alpha$ at $B'$ gives $x_1=\alpha$. Finally, at $C'$, we elapse $1-\alpha$.
  Thus the cost incurred in this module is
$3-3\alpha$. For $c_1=1$, this is $3-\frac{3}{2}-2\varepsilon_1=2-\frac{1}{2}-2\varepsilon_1$, and 
the minimum cost when $c_1>1$
is  $3-3\frac{1}{2^2}-2\varepsilon_1=2+\frac{1}{4}- 2\varepsilon_1$.
In the limit, as $\varepsilon_1$ tends to 0, the cost is $\leq 2$  if controller chose the correct branch, that is, chose
$D$ when $c_1=1$.

\item Suppose controller chooses $B$ instead of $D$ when $c_1=1$.
Then the value of  clock $x_1$ after simulating the
decrement operation will not be exact, i.e.will not be  equal to 1.
Now, if the next instruction involving controller $C_1$
is also a zero test and decrement operation,
then controller will incorrectly move to $l_j'$
instead of $l_j$ 
while simulating this next zero test and decrement operation.
For choosing $B$ instead of $D$,
controller will be punished while
simulating this next zero test and decrement operation.
Since the value of clock $x_1$ is not 1,
while simulating this next zero test and decrement operation,
controller will either go to $B$ or $D$
in the module in Figure \ref{fig_testndec}.
\begin{itemize}
\item If $B$ is chosen, $t$ should equal $1-2\alpha$
for correct simulation.
Now $\alpha$ being $1 + \varepsilon_1$,
controller cannot delay for $1-2\alpha$
at location $D$ of Figure \ref{fig_decrement}
and hence is punished.
\item If controller goes to location $D$ in Figure \ref{fig_testndec},
when $c_1 = 0$, then $x_1=1+\varepsilon_1=\alpha$
then perturbator moves to the module $Test~Dec^{C_1}_{n=0}$.
If $\varepsilon_1>0$, then the controller will get stuck 
in the transition from $D$ to $E$ (see Figure  \ref{fig_decrement1to0}) and if $\varepsilon_1<0$, then 
the module $Test~Dec^{C_1}_{n=0}$ in Figure  \ref{fig_decrement1to0}
incurs a cost of $2-2\varepsilon_1 >2$.
The module $Test~Dec^{C_1}_{n=0}$ can be drawn similar to 
$Test~Dec^{C_1}_{n=1}$.
\end{itemize}
\end{enumerate}


\begin{figure}[h]
\begin{center}

\begin{tikzpicture}[->,>=stealth',shorten >=1pt,auto,node distance=1cm,  semithick,scale=0.9]

\node at (-2,1.5) (n) {$\mathbf{Module : Decrement~C_1~from~1~to~0}$};

  \node at (-4.5,0) (i) {};
  
  \node[locr] at (-3,0) (l1) {$0$} ;
  \node[locLabel] () [above of =l1] {$D$};

  \node[locr] at (-1.5,0) (l2) {$0$} ;
  \node[locLabel] () [above of =l2] {$E$};

  \node[locr] at (0,0) (l3) {$0$} ;
  \node[locLabel] () [above of =l3] {$F$};

  
  \node[widget] at (2.5,0) (c) {$Choice$};

  \node[locr] at (5,0) (l4) {$0$} ;
  \node[locLabel] () [above of =l4] {$G$};

  \node[widget] at (7.5,1) (r) {$Restore_{Dec}^{C_1C_2}$};
  \node[widget] at (7.5,-1) (s) {$Restore_{Dec}^{C_2C_1}$};
  
  \node[locr] at (10,0) (l5) {0};
  \node[locLabel] () [above of =l5] {$l_j'$};

  \draw[trans,dashed] (i) -- (l1)node[midway,above]{$\begin{array}{c} x_1 \neq 1 \\ \wedge y_1 = 0\end{array}$}
  node[midway,below]{$\set{y_2}$};
  \draw[trans,dashed] (l1) -- (l2)node[midway,above]{$x_1{=}1$}node[midway,below]{$\set{x_2}$};
  \draw[trans,dashed] (l2) -- (l3)node[midway,above]{$y_1{=}1$}node[midway,below]{$\set{x_1}$};  
  \draw[trans,dashed] (l3) -- (c)node[midway,above]{$x_1=1$} node[midway,below]{$\set{y_1}$};
  \draw[trans,dashed] (c) -- (l4)node[midway,above]{$y_1=0$} node[midway,below]{$\set{x_2}$};
  \draw[trans,dashed] (l4) -- (r)node[midway,sloped,above]{$y_1{=}0$};
  \draw[trans,dashed] (l4) -- (s)node[midway,sloped,above]{$y_1{=}0$};
  \draw[trans,dashed] (r) -- (l5)node[midway,sloped,above]{$y_1{=}0$};
  \draw[trans,dashed] (s) -- (l5)node[midway,sloped,above]{$y_1{=}0$};
  
  \node[widget] at (2.5,-2) (c2) {$Choice$};

  \node[widget] at (1,-4) (F) {$Test~Dec^{C_1}_{n=1}$};

  \node[widget] at (4,-4) (G) {$Test~Dec^{C_1}_{n=0}$};

  \draw[trans,dashed] (c) -- (c2)node[midway,right]{$y_1{=}0$};
  \draw[trans,dashed] (c2) -- (F)node[midway,sloped,above]{$y_1{=}0$};
  \draw[trans,dashed] (c2) -- (G)node[midway,sloped,above]{$y_1{=}0$};
  
  
  \node at (-3,-6) (j) {$\mathbf{Test~Dec^{C_1}_{n=1}}$};
  
  \node at (-1.5,-6) (j) {};
  
  \node[locr] at (0,-6) (L) {$1$};
  \node[locLabel] () [above of =L] {$A'$};
   
   \node[locr] at (1.7,-6) (M) {0};
   \node[locLabel] () [above of = M] {$B'$};
   
   \node[locr] at (3.4,-6) (N) {1};
   \node[locLabel] () [above of = N] {$C'$};
   
   \node[locr] at (5.2,-6) (N1) {0};
   \node[locLabel] () [above of = N1] {$D'$};

   \node[locr] at (6.9,-6) (N2) {1};
   \node[locLabel] () [above of = N2] {$E'$};

   \node[triangle] at (8.6,-6) (O) {$0$};
  
  
  \draw[trans,dashed] (j) -- (L)node[midway,above]{$y_1{=}0$};
  \draw[trans,dashed] (L) -- (M)node[midway,above]{$x_2{=}8$};
  \draw[trans,dashed] (M) -- (N)node[midway,above]{$y_1{=}1$};
  \draw[trans,dashed] (M) -- (N)node[midway,below]{$\{y_1\}$};
    \draw[trans,dashed] (L) -- (M)node[midway,below]{$\{x_1\}$};
\draw[trans,dashed] (N) -- (N1)node[midway,above]{$x_1{=}1$};
\draw[trans,dashed] (N) -- (N1)node[midway,below]{$\{x_1\}$};
    \draw[trans,dashed] (N1) -- (N2)node[midway,above]{$y_1{=}1$};
  \draw[trans,dashed] (N1) -- (N2)node[midway,below]{$\{y_1\}$};
     \draw[trans,dashed] (N2) -- (O)node[midway,above]{$x_1{=}1$};


\end{tikzpicture}

\caption{\textbf{Decrement $C_1$ from 1 to 0 module} :
The section of the module
shown in Figure \ref{fig_testndec}
starting from location $D$.
This section is used if $c_1 =1$ before being decremented.
It keeps the fractional part of clock $z$ unchanged.
}
\label{fig_decrement1to0}
\end{center}
\end{figure}
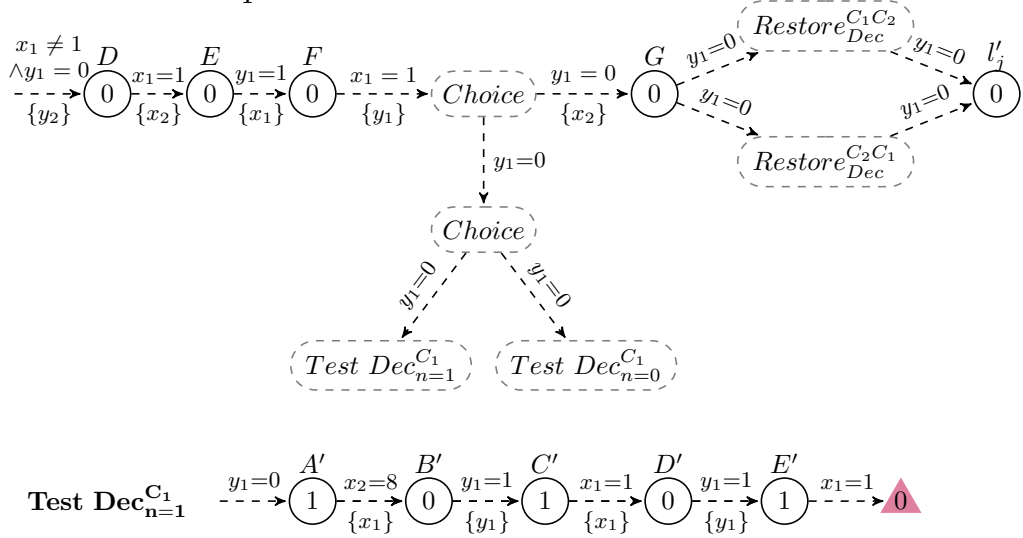
\subsection{Complete Reduction}
The entire reduction consists of constructing a module
corresponding to each instruction $I_i$, $1 \le i \le n$,
of the two-counter machine.
The first location of the module corresponding to instruction $I_1$
is the initial location.
We simulate the halting instruction $I_n$
by a target location whose cost function assigns  2 to all clock values.
We denote the robust timed automaton simulating the two
counter machine by $\mathcal{A}$,
$s$ is the initial state $(l,\zero,\zero)$.
\begin{lemma} \label{lem:reduction}
The two counter machine halts if and only if there
is a strategy $\sigma$ of controller
such that $limcost_{\mathcal{A}}(\sigma, s) \le 2$.
\end{lemma}
\proof
We first consider the case when the two counter machine halts.
Suppose it halts in $m$ steps.
The cost incurred in $m$ steps can be due to
reaching one of the target states in a test module
or reaching the \emph{halt} instruction in $m$ steps.
We consider an $\varepsilon$ such that
$0 < 3^m \delta < \varepsilon$.
In the first case, the cost is less than or equal to $2+2 \varepsilon_b$,
where by Lemma \ref{lem:ocan}, $\varepsilon_b \le \varepsilon$ 
and hence the cost is 2 in the limit.
In the second case, controller simulates the two
counter machine faithfully and reaches the target location
corresponding to the halt instruction
and hence the cost is 2 in the limit.

Now we consider the case when the two counter machine does not halt.
Controller can simulate the two counter machine
using the increment and the
zero test and decrement modules
corresponding to each of the instructions.
The cost is $\infty$ if controller
simulates the instructions faithfully
and a target state is not reached.
On the other hand, if controller makes an error,
then it will be punished by perturbator in
one of the test modules and cost will be non-zero.
Hence the proof.
\qed

Given an accumulated delay $\varepsilon$,
the accumulated delay after one step
due to the decrement and the increment modules
are $2\varepsilon+\delta_1$ and $\varepsilon /2 + \delta_2$ respectively.
The following lemma is from \cite{ocan}.
\begin{lemma} \label{lem:ocan}\cite{ocan}
Consider the two functions $f: x \to 2x + 1$
and $g: x \rightarrow x/2 + 1$.
For any $n \ge 1, x > 0$, and any $f_1 , \dots , f_n \in \{f, g\}, 
f_1 \circ f_2 \circ \dots \circ f_n(x) \le 3^n x$.
\end{lemma}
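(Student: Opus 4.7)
The plan is to argue by induction on $n$, with two ingredients driving the inductive step: a base-case estimate that a single application of either $f$ or $g$ is dominated by multiplication by $3$, and a range-preservation invariant ensuring that the output lies in a regime where this estimate is valid.

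For the base case $n=1$, one checks directly that $f(y) = 2y + 1 \leq 3y$ and $g(y) = y/2 + 1 \leq 3y$, both of which hold once $y \geq 1$. The companion observation is that, because both $f$ and $g$ carry the additive constant $+1$, we have $f(y), g(y) \geq 1$ for every $y \geq 0$; so the innermost application $f_n(x)$ already lands in $[1, \infty)$, and every subsequent application stays there. For the inductive step, assume the claim for compositions of length $n$. Given a length-$(n{+}1)$ composition $f_1 \circ f_2 \circ \cdots \circ f_{n+1}$ applied to $x > 0$, set $y = (f_2 \circ \cdots \circ f_{n+1})(x)$; the inductive hypothesis gives $y \leq 3^n x$, and the range-preservation invariant gives $y \geq 1$. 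Applying the base estimate to $f_1$ then yields $f_1(y) \leq 3 y \leq 3 \cdot 3^n x = 3^{n+1} x$, closing the induction.

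The main obstacle I foresee is the delicate low-end regime when the starting $x$ is small: the pointwise bound $f(y) \leq 3y$ fails for $y < 1$, and the entire argument hinges on the fact that the $+1$ constants in $f$ and $g$ push the value into the safe region $y \geq 1$ after a single step, after which the geometric growth factor $3^n$ dominates uniformly over every choice of $f_i \in \{f, g\}$. Once this range-preservation invariant is carefully stated and verified, the induction itself is essentially routine, and the resulting bound plugs directly into Lemma~\ref{lem:reduction} to control the accumulated perturbation error over $n$ simulated instructions of the two-counter machine.
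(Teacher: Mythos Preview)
The paper does not prove this lemma; it is quoted from~\cite{ocan}. More to the point, the statement as written is false: for $n=1$ and $x=\tfrac12$ one has $f(x)=2$ but $3^{1}x=\tfrac32$, and in general $2x+1\le 3x$ forces $x\ge 1$. Your induction inherits exactly this defect. You establish the single-step bound $f_1(y)\le 3y$ only for $y\ge 1$, yet in the inductive step you ``assume the claim for compositions of length $n$'' over all $x>0$ and apply it to $(f_2\circ\cdots\circ f_{n+1})(x)$ with $x$ possibly below~$1$. That hypothesis was never established at $n=1$ for $x<1$, so the induction does not start. Your range-preservation observation that $f_n(x)\ge 1$ is correct, but it only says the value lands in the good regime \emph{after} one step; it does nothing for the base case itself, and peeling off the innermost step instead yields only $3^{n-1}f_n(x)\le 3^{n-1}(2x+1)$, which is still not $\le 3^n x$ for small~$x$.

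What your argument does prove cleanly is the lemma under the extra hypothesis $x\ge 1$: then the base case holds outright, $f,g$ both map $[1,\infty)$ into itself, and the induction runs. This is almost certainly the intended statement in~\cite{ocan} (equivalently, a bound $3^n\max(x,1)$ for all $x>0$), and it is exactly what Lemma~\ref{lem:reduction} needs: the accumulated perturbation is measured in units of~$\delta$, so one really wants $f_1\circ\cdots\circ f_m(0)\le 3^m$, which follows from the $x\ge 1$ version after the first step. Flag the misstatement and prove the corrected version; your outline then goes through unchanged.
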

We note that the prices used in all the modules in our reduction
are only $\{0,  1\}$
and hence we have the undecidability result
as given by Theorem \ref{thm:robust_undec}.

\section{Proof of Lemma \ref{lem:rtg_to_ptg}}
\label{app:tran1}
We first map the states of the RPTG $\rtg$ 
 and the dwell-time PTG $\ptg$.  
 Let $\mathcal{S}(\mathcal{\rtg})$ denote the set of states of the form $(l, \nu)$  as well as 
 $((l, \nu),t,e)$ 
 of the RPTG and 
 $\mathcal{S}(\ptg)$ denote the set of states of the dwell-time PTG. 
\begin{definition}[state map]
 We define a State Map $\sm : \nSR{(\rtg)} \to \nSR{(\ptg)}$ as follows
 \begin{itemize}
  \item if $l$ is a controller(perturbator) location then $\sm(l,\val) = (l,\val)$ as 
  all controller locations of $\rtg$  become player 1 locations in the dwell-time PTG $\ptg$, and 
  all the perturbator locations of $\rtg$  become player 2 locations in the dwell-time PTG $\ptg$;
\item Recall that the RPTG had states of the form $((l, \nu),t,e)$ corresponding to 
perturbator states (after controller chose a time delay and edge,  perturbator decides the perturbation). Recall also that 
for every controller location $l$, and corresponding 
 edge choice $e$ made in the RPTG $\rtg$, we had the urgent player 2 location $(l,e)$ 
 immediately following the player 1 location $l$ in  the dwell-time-PTG $\ptg$ constructed.
   That is,  $\sm((l,\val),t,e) = (\urg{l,e},\val+t-\delta)$
 \end{itemize}
\end{definition}
Note that $\sm(s)$ is a unique state in $\ptg$. 

   \begin{figure}[h]
 \begin{center}
\begin{tikzpicture}[->,>=stealth',shorten >=1pt,auto,node distance=1cm,  semithick,scale=0.9]

  \node[min] at (-3,0) (A) {$k$};
  \node[locLabel] () [above of =A] {$A$};
  \node[cdel] () [below of=A] {$t$};

  \node[min] at (0,0) (B) {$k'$};
  \node[locLabel] () [above of =B] {$B$};

  \draw[trans] (A) -- (B) node[midway,above] {$e$} node[midway,below] {$g,r$};
  \node at (-1.5,-1) {RPTG $\rtg$};

  \node at (4,-1) {dwell time PTG $\ptg$};
\node[min] at (3,0) (Ap) {$k$};
  \node[locLabel] () [above of =Ap] {$A$};
  \node[cdel] () [below of=Ap] {$t-\delta$};

  \node[max] at (5,0) (mu) {$0$};
  \node[locLabel] () [above of =mu] {$\urg{A,e}$};
  \node[del] () [below of=mu] {$0$};

  \node[max] at (7,-1) (mp) {$k$};
  \node[locLabel] () [above of =mp] {$\posP{(A,e)}$};
  \node[del] () [below of=mp] {$[\delta,2\delta]$};

  \node[max] at (7,1) (mn) {$k$};
  \node[locLabel] () [above of =mn] {$\negP{(A,e)}$};
  \node[del] () [below of=mn] {$[0,\delta]$};

  \node[min] at (9,0) (Bp) {$k'$};
  \node[locLabel] () [above of =Bp] {$B$};

  \draw[trans] (Ap)--(mu) node[midway,above] {$g'$};
    \draw[trans] (mu)--(mp) node[midway,above] {$ep$};
  \draw[trans] (mu)--(mn) node[midway,above] {$en$};
  \draw[trans] (mp)--(Bp) node[midway,above] {$epo$};
  \draw[trans] (mp)--(Bp) node[midway,below] {$r$};
  \draw[trans] (mn)--(Bp) node[midway,below] {$r$};
  \draw[trans] (mn)--(Bp) node[midway,above] {$eno$};
%
%
%
%
%
%
    
\end{tikzpicture}
 \end{center}
 \caption{Transitions of RPTG $\rtg$ mapped to transitions in the constructed dwell-time PTG $\ptg$}
 \label{rta-ptg}
 \end{figure}
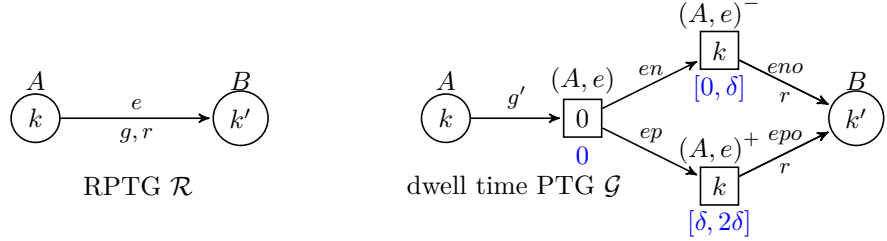

\begin{lemma}
 \label{lem:path_map}
 Given a path $\rho$ in $\rtg$ from $s$ to $s'$, there exists a unique 
 path $\rho'$ in $\ptg$ from $\sm(s)$ to $\sm(s')$. Additionally, $\costp(\rho) = \costp(\rho')$. 
\end{lemma}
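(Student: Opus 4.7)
The plan is to prove the lemma by induction on the length of the path $\rho$ in $\rtg$, with the main work concentrated in the inductive step, which splits into two cases according to whether the current transition originates from a controller state or from a perturbator state of $\rtg$.

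For the base case (length zero), we have $\rho = \seq{s}$ and the empty path at $\sm(s)$ in $\ptg$ trivially satisfies the claim, with both sides having cost $0$. For the inductive step, suppose the claim holds for all paths of length $k$ and consider a path $\rho$ of length $k+1$. I would split the last transition into the following two cases. If the last transition of $\rho$ is a perturbator transition originating from a state of the form $(l,\nu)$ with $l\in L_2$, then by construction of $\ptg$ this edge is retained verbatim: $\sm(l,\nu)=(l,\nu)$ in $\ptg$ and the same transition is available with the identical guard, reset, delay, and price $\prices(l)$, so I simply prepend this one step to the path provided by the induction hypothesis.

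The harder case is when the last transition(s) of $\rho$ come from a controller state $(l,\nu)\in\states_1$, which in $\rtg$ consists of the pair $(l,\nu)\xrightarrow{(t,e)}((l,\nu),t,e)\xrightarrow{\varepsilon}(l',\nu')$ with $\varepsilon\in[-\delta,\delta]$. Here I unfold this into a four-step gadget path in $\ptg$: elapse $t-\delta$ at the player~1 location $l$ (satisfying the transformed guard $g'$), take the urgent transition to $\urg{l,e}$, and then route through $\negP{(l,e)}$ if $\varepsilon\le 0$ or $\posP{(l,e)}$ if $\varepsilon>0$, choosing the dwell time $d=\varepsilon+\delta$ in the corresponding dwell-time interval ($[0,\delta]$ or $[\delta,2\delta]$ respectively), before applying the reset $r$ on the transition to $l'$. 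A direct computation then gives
\[
(\nu+(t-\delta)+d)[r{:=}0] \;=\; (\nu+t+\varepsilon)[r{:=}0] \;=\; \nu',
\]
matching the endpoint of the RPTG transition, and confirming that we end at $\sm(l',\nu')=(l',\nu')$ in $\ptg$. Uniqueness follows from the fact that the sign of $\varepsilon$ pins down which of $\negP{(l,e)}$ or $\posP{(l,e)}$ is used (with the boundary $\varepsilon=0$ resolved by the standing convention attached to the gadget).

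For the cost equality I would separately verify that the price accumulated in the gadget equals $\prices(l)\cdot(t+\varepsilon)$: the location $l$ in $\ptg$ retains the price of $l$ in $\rtg$, $\urg{l,e}$ carries price $0$, and both $\negP{(l,e)}$ and $\posP{(l,e)}$ inherit the price of $l$, so the total contribution is $\prices(l)\cdot(t-\delta)+\prices(l)\cdot d=\prices(l)\cdot(t+\varepsilon)$, which matches the cost $\costp(t+\varepsilon,e)$ of the RPTG steps. Combining this with the induction hypothesis yields $\costp(\rho)=\costp(\rho')$. The main obstacle I anticipate is not any single step but rather the book-keeping required to handle all guard shapes uniformly under the transformation $g \mapsto g'$ (equality, strict, and mixed constraints), since one must check in each case that $\nu+t\in\sem{g}$ if and only if $\nu+(t-\delta)\in\sem{g'}$, so that the RPTG transition is enabled exactly when the corresponding gadget transition in $\ptg$ is enabled; this amounts to a short but finicky case analysis on the five constraint shapes listed in Section~\ref{sec:rta_dwell_ptg}.
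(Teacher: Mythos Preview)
Your proof is correct and is exactly the natural elaboration of the paper's argument, which consists of the single line ``the proof is quite straightforward and follows from the structure and the state map defined above.'' The case split you give---perturbator-location transitions carried over verbatim, and controller-plus-perturbation pairs unfolded into the gadget $l \to \urg{l,e} \to \{\negP{(l,e)},\posP{(l,e)}\} \to l'$ with accumulated cost $\prices(l)(t-\delta)+0+\prices(l)(\delta+\varepsilon)=\prices(l)(t+\varepsilon)$---is precisely what the construction and the strategy mapping in Appendix~\ref{app:tran1} spell out.

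One minor point you may wish to make explicit: by bundling the controller move $(l,\nu)\to((l,\nu),t,e)$ together with the subsequent perturbation step, your induction tacitly assumes $\rho$ ends at a location-state $(l',\nu')$. If $s'=((l,\nu),t,e)$ the cost equality does \emph{not} hold at that intermediate stage (the $\rtg$ controller transition has cost $0$ while its $\ptg$ image already carries cost $\prices(l)(t-\delta)$); the discrepancy only cancels once the perturbation step is taken. The paper ignores this edge case as well, and it is irrelevant for the downstream uses (Lemmas~\ref{lem:cost_trans} and~\ref{lem:cost_strategy}), but it is worth being aware that the cost correspondence holds at the granularity of complete controller--perturbation blocks rather than individual $\rtg$ transitions.
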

The proof is quite straight forward and follows from the structure and the state map defined above. 

Next, given a strategy $\strat_1$ in $\rtg$, we shall define an equivalent strategy $\strat_1'$ in $\ptg$ in terms of the moves proposed.  Let $e$ be the edge from $l$ to $l'$ in $\rtg$. We map
$\strat(\rho.s) = (t,e)$ to $\strat'(\rho'.\sm(s)) = (t',e')$ as follows 
\begin{enumerate}
 \item \textbf{Controller strategy mapped to Player~1 strategy} : \newline 
 The strategy $\strat_1(\rho.(l,\val)) = (t,e)$ in $\rtg$ leads to the state $((l,\val),t,e)$. 
 Corresponding to this, we have 
  $\strat_1'(\rho'.\sm(l,\val)) = (t',e')$ such that $t' = t-\delta$ \footnote{$t\geq \delta$ in the $\rtg$ due to robust semantics} and the player 1 location $l$ moves into the urgent player 2 location   
  $\urg{l,e}$. This leads to $(\urg{l,e},\val+t-\delta)$.
  $e'$ is the edge in $\ptg$ between $l$ and $\urg{l,e}$.
    Recall also 
  that the time delay $t$ in $\rtg$ has been mapped to the time delay $t-\delta$ in the 
  constructed PTG $\ptg$.
 
 \item \textbf{Perturbator strategy mapped to Player~2 strategy for perturbator locations}: \newline
 A strategy $\strat_2(\rho.(l,\val)) = (t,e)$ in $\rtg$ leads to $(l',\val+t[r:=0])$. 
 Correspondingly,  we have in $\ptg$, $\strat_2'(\rho'.(l,\val)) = (t,e)$, 
 giving the state $(l',\val+t[r:=0])$ in $\ptg$.

 \item \textbf{Perturbator strategy mapped to Player~2 strategy for new locations} : \newline
 Recall that we have  $\sm((l,\val),t,e) = (\urg{l,e},\val+t-\delta)$. \newline
 If we have the strategy $\strat_2(\rho.((l,\val),t,e)) = \epsilon \in [-\delta,\delta]$ in $\rtg$ such that 
 \begin{itemize}
  \item if \boldmath{$0 \leq \epsilon \leq \delta$}\unboldmath ~then $\strat_2'(\rho'.(\urg{l,e},\val+t-\delta)) = (0,\ep)$ which results in $(\posP{(l,e)},\val+t-\delta)$ and $\strat_2'(\rho'.(\posP{(l,e)},\val+t-\delta)) = (\delta+\epsilon, \epo)$ resulting in $(l', \val + t  - \delta + \delta + \epsilon [r:=0])$. 
  
  \item if \boldmath${-\delta \leq \epsilon < 0}$\unboldmath, let $\epsilon' = -\epsilon$ then $\strat_2'(\rho'.(\urg{l,e},\val+t-\delta)) = (0,\en)$ which results in $\rho'.(\urg{l,e},\val+t-\delta) \xrightarrow{0,\en} (\negP{(l,e)},\val+t - \delta)$ and $\strat_2'(\rho'.(\negP{(l,e)},\val+t - \delta)) = (\delta - \epsilon', \eno)$ resulting in $\rho'.(\negP{(l,e)},\val+t - \delta) \xrightarrow{\delta - \epsilon', \eno} (l',\val+t-\delta+\delta - \epsilon'[r:=0])$.
 Note that on entering $\negP{(l,e)}$ with a value $\val+t-\delta$, a time 
 in $\epsilon \in [0, \delta]$ is spent at $\negP{(l,e)}$, obtaining a valuation 
 $\val+t-\delta+\epsilon$. This corresponds to altering the time $t$ spent by controller in $\rtg$
 to a value $t-\delta+\epsilon \in [t-\delta, t]$.  
      \end{itemize}
\end{enumerate}

Similarly, given a strategy $\strat'$ in $\ptg$, we shall construct the equivalent strategy 
$\strat$ in $\rtg$ as follows. 
\begin{enumerate}
 \item \textbf{Player~1 strategy to controller strategy}
  If $\strat_1'(s)$ proposes a delay $t$ then $\strat_1(f^{(-1)}(s))$ proposes a delay $t+\delta$. 
  
 \item \textbf{Player~2 strategy to perturbator strategy in perturbator locations}
  If $\strat_2'(s)$ proposes a delay $t$ then $\strat_2(f^{(-1)}(s))$ also proposes $t$. 
 
 \item \textbf{Player~2 strategy to perturbator strategy in controller locations} 
  Suppose $\strat_2'(\urg{l,e},\val+t)$ suggests the path $(\posP{(l,e)},\val,\cost) \xrightarrow{\delta+\epsilon} (l',\val')$. Then, $\strat_2((l,\val),t+\delta,e) \xrightarrow{\epsilon}(l',\val')$.
\end{enumerate}

\begin{lemma}
 \label{lem:clockRange_target}
  In the RPTG $\rtg$ given in Figure \ref{rta-ptg}, if $g$ is $0<x<1$ then $B$ is reached with $x \in [0,1+\delta]$. 
  In the corresponding PTG $\ptg$ too, $B$ is reached with $x \in [0,1+\delta]$. We can establish the same for other possible guards too.
\end{lemma}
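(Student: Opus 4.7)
The plan is to carry out a direct case analysis on the shape of the guard $g$ on the transition $e=(A,g,r,B)$ and, for each case, to compute the set of reachable valuations of $x$ on entry to $B$ in both $\rtg$ and $\ptg$ and check they agree. I would structure the argument by (i) fixing an arbitrary entry valuation $\nu$ of $A$, (ii) computing the RPTG-side range using the excess semantics (guard checked before perturbation, clock updated after), and (iii) computing the PTG-side range by chaining the player~1 delay at $A$, the urgent move to $\urg{A,e}$, and the dwell at $\posP{(A,e)}$ or $\negP{(A,e)}$.

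For the stated case $g = 0 < x < 1$: in $\rtg$ the controller picks $t$ with $\nu+t \in (0,1)$ subject to the robust dwell $t\geq\delta$, so $\nu+t \in [\nu+\delta,1)$; the perturbator then adds $\varepsilon\in[-\delta,\delta]$, giving $\nu+t+\varepsilon\in[\nu,1+\delta)\subseteq[0,1+\delta)$, or $0$ under a reset. In $\ptg$ the guard is replaced by $g' = 0 \leq x < 1-\delta$, so on reaching $\urg{A,e}$ the clock lies in $[0,1-\delta)$; then the dwell intervals $[\delta,2\delta]$ at $\posP{(A,e)}$ and $[0,\delta]$ at $\negP{(A,e)}$ produce at $B$ the ranges $[\delta,1+\delta)$ and $[0,1)$ respectively, whose union is exactly $[0,1+\delta)$, and again $0$ under a reset. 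The two ranges match.

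For the remaining guards the same recipe applies. If $g$ is $x=H$, then in $\rtg$ the delay is forced to $H-\nu$ and the perturbation yields $x\in[H-\delta,H+\delta]$ at $B$; in $\ptg$ the edge guard $g'$ is $x=H-\delta$, so on arriving at $\urg{A,e}$ we have $x=H-\delta$, and the two dwell branches $[0,\delta]$ and $[\delta,2\delta]$ contribute exactly $[H-\delta,H]$ and $[H,H+\delta]$, covering $[H-\delta,H+\delta]$. If $g$ is $H<x<H+1$ with $H>0$, an analogous calculation with $g' = H-\delta < x < H+1-\delta$ shows both sides yield $(H-\delta, H+1+\delta)$; the reset cases trivially give $x=0$ on both sides.

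The main obstacle is keeping the boundary conditions straight: the RPTG has a strict lower bound ($x>0$) which is rescued from negativity only by the robust $t\geq \delta$ restriction, whereas $g'$ is deliberately relaxed to $0\leq x$ to let $\negP{(A,e)}$ fire with zero dwell. Verifying that the $[\delta,2\delta]$ and $[0,\delta]$ windows glue precisely onto the shifted guard $g'$ so as to reproduce all and only the post-perturbation valuations of $\rtg$ (including the boundary points at $0$ and $1+\delta$ appearing as limits) is the only place a careful endpoint audit is needed; once done in the $0<x<1$ case, the other guard shapes are immediate by the same interval arithmetic.
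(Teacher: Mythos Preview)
Your proposal is correct and is essentially the approach the paper intends: the paper does not give a detailed argument for this lemma, stating only that it (together with Lemma~\ref{lem:cost_trans}) ``follow[s] from the definition of $\prices'$ and the delays adjusted over $l$, $\negP{(l,e)}$ and $\posP{(l,e)}$ in the PTG $\ptg$.'' Your case analysis on the guard shapes is exactly the unwinding of that sentence, and your interval arithmetic checks out; the only minor remark is that for the open guard $0<x<1$ you correctly obtain the half-open range $[0,1+\delta)$, which is of course contained in the closed interval stated in the lemma.
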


\begin{lemma}
\label{lem:cost_trans}
$\costp(s \xrightarrow{a} s') = \costp(\sm(s) \stackrel{a'\cdots}{\rightsquigarrow} \sm(s'))$. 
That is, the cost of a  transition from $s$ to $s'$ in the RPTG $\rtg$ is the same as the cost 
of going from $\sm(s)$ to $\sm(s')$ in the dwell-time PTG $\ptg$. However, we need multiple transitions 
to reach from $\sm(s)$ to $\sm(s')$. 
\end{lemma}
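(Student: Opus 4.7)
\medskip

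\noindent\textbf{Proof Plan for Lemma~\ref{lem:cost_trans}.} The plan is to perform a case analysis on the type of transition $s \xrightarrow{a} s'$ in $\rtg$, using the three flavours of RPTG transitions identified in the semantics: (i) a perturbator transition from a state $(l,\val)$ with $l \in L_2$, (ii) a controller transition from $(l,\val)$ with $l \in L_1$ to the auxiliary state $((l,\val),t,e)$, and (iii) a perturbation-resolution transition from $((l,\val),t,e)$ to $(l',\val')$. Cases (ii) and (iii) will be handled together as a single ``compound'' semantic step, since the cost $0$ of the controller transition in $\rtg$ is accounted for by partial time elapse at $l$ in $\ptg$, while the cost of the subsequent perturbation transition in $\rtg$ is split between the pre-perturbation elapse at $l$ and the post-perturbation dwell at either $\posP{(l,e)}$ or $\negP{(l,e)}$ in $\ptg$.

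For Case~(i), the mapping $\sm$ acts as the identity on both endpoints, and the corresponding single transition in $\ptg$ preserves guards, resets and prices by construction. Both transitions cost $t \cdot \prices(l)$, and the new valuation $(\val+t)[r{:=}0]$ is identical on both sides.

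For the compound controller step, fix $s = (l,\val)$ and the outgoing transition $e=(l,g,r,l')$, controller delay $t$, and perturbation $\epsilon \in [-\delta,\delta]$; the target in $\rtg$ is $(l',\val')$ with $\val' = (\val + t + \epsilon)[r{:=}0]$, and the total cost is $0 + (t+\epsilon)\cdot\prices(l) = (t+\epsilon)\cdot\prices(l)$. In $\ptg$, the path $\sm(l,\val) = (l,\val)$ to $\sm(l',\val') = (l',\val')$ decomposes as: first, elapse $t-\delta$ at $l$ (permitted since by the robust restriction $t \geq \delta$) and take the transformed guard $g'$ to the urgent location $\urg{l,e}$, contributing $(t-\delta)\cdot\prices(l)$; second, take an urgent 0-cost step to $\posP{(l,e)}$ when $\epsilon \geq 0$ and to $\negP{(l,e)}$ when $\epsilon < 0$; finally, dwell for $\delta+\epsilon$ at $\posP{(l,e)} \in [\delta,2\delta]$ (or equivalently for $\delta - (-\epsilon) \in [0,\delta]$ at $\negP{(l,e)}$) and apply the reset $r$ on the last edge. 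Summing yields
\[
(t-\delta)\cdot\prices(l) \; + \; 0 \; + \; (\delta+\epsilon)\cdot\prices(l) \; = \; (t+\epsilon)\cdot\prices(l),
\]
matching the cost in $\rtg$; the valuation on reaching $l'$ is $(\val + (t-\delta) + (\delta + \epsilon))[r{:=}0] = (\val + t + \epsilon)[r{:=}0] = \val'$ in both semantics, confirming the state equality.

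The main obstacle, and where care is needed, is verifying that every perturbation $\epsilon \in [-\delta,\delta]$ allowed in $\rtg$ remains executable in $\ptg$ and vice versa. This amounts to checking the four shapes of rectangular guard $g$ used in the transformation ($x=H$, $H<x<H+1$, $0<x<K$, and $x=0$), confirming that the shifted guard $g'$ permits $t-\delta$ exactly when the original guard permits $t$, and confirming that the dwell intervals $[\delta,2\delta]$ and $[0,\delta]$ on $\posP{(l,e)}$ and $\negP{(l,e)}$ realise precisely the range of final valuations $\val + t + \epsilon$ that arise in $\rtg$. This boundary analysis is routine but must be done once for each guard shape, and is the place where one invokes Lemma~\ref{lem:clockRange_target} to ensure no additional clock values become reachable in either direction.
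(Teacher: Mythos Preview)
Your proposal is correct and follows exactly the route the paper has in mind; in fact the paper's own ``proof'' is a single sentence (``follows from the definition of $\prices'$ and the delays adjusted over $l$, $\negP{(l,e)}$ and $\posP{(l,e)}$''), so your case analysis simply spells out what the authors leave implicit.

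One point worth making explicit in your write-up: you are right to treat (ii) and (iii) as a single compound step rather than two separate transitions. Taken literally, the lemma does \emph{not} hold for a single controller transition $(l,\val)\xrightarrow{(t,e)}((l,\val),t,e)$ in isolation: that transition has cost $0$ in $\rtg$, while the image $\sm(l,\val)\to\sm((l,\val),t,e)=(\urg{l,e},\val+t-\delta)$ in $\ptg$ costs $(t-\delta)\cdot\prices(l)$. The equality only holds after the perturbation step is appended, because the $\delta$ subtracted in the first leg is exactly restored (plus~$\epsilon$) in the dwell at $\posP{(l,e)}$ or $\negP{(l,e)}$. Your grouping is therefore not a stylistic choice but a necessity, and it is the correct reading of the lemma for its later use in Lemma~\ref{lem:cost_strategy}. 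The final paragraph on guard shapes and invoking Lemma~\ref{lem:clockRange_target} is strictly speaking about well-definedness of the mapped path rather than cost equality, but it is a sensible sanity check to include.
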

Both the above lemmas follow from the definition of $\prices'$ and the delays adjusted over $l$, $\negP{(l,e)}$ and $\posP{(l,e)}$ in the PTG $\ptg$.

\begin{lemma}
 \label{lem:cost_strategy}
 Given a strategy $\strat_1$ in $\rtg$ and the corresponding strategy $\strat_1'$ in $\ptg$, 
 for every state $s$ in $\rtg$, $\costp(s,\strat_1) = \costp(\sm(s),\strat_1')$.
\end{lemma}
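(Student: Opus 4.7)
The plan is to unfold the definition $\costp_{\rtg}^{\delta}(s,\strat_1) = \sup_{\strat_2} \costp_{\rtg}^{\delta}(\outcome(s,\strat_1,\strat_2))$ on both sides and exploit the bijection between perturbator strategies in $\rtg$ and Player~2 strategies in $\ptg$ that is implicit in the construction preceding the lemma. Concretely, the text already exhibits two maps: one sending each perturbator strategy $\strat_2$ to a Player~2 strategy $\strat_2'$, and one in the reverse direction. The first step is to verify that these two maps are mutually inverse modulo the state map $\sm$, so that $\strat_2 \mapsto \strat_2'$ is a bijection between the two strategy sets.

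Next, I will fix an arbitrary $\strat_2$ and its image $\strat_2'$ and argue by induction on the length of finite prefixes that $\outcome(s,\strat_1,\strat_2)$ and $\outcome(\sm(s),\strat_1',\strat_2')$ correspond under $\sm$. The base case is $\sm(s) = \sm(s)$, trivially. For the inductive step, each single RPTG transition $s \xrightarrow{a} s'$ is matched, according to the construction of $\strat_1'$ and $\strat_2'$ given in the three cases (controller move, perturbator move at a perturbator location, perturbator move at a controller state $((l,\val),t,e)$), by a short sequence of transitions $\sm(s) \rightsquigarrow \sm(s')$ in $\ptg$ that produces exactly the state dictated by the induction hypothesis. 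Lemma~\ref{lem:path_map} then guarantees that such a path exists and is unique, while Lemma~\ref{lem:cost_trans} gives $\costp(s \xrightarrow{a} s') = \costp(\sm(s) \rightsquigarrow \sm(s'))$.

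Summing the per-step equalities over the (possibly infinite) play yields
\[
  \costp_{\rtg}^{\delta}(\outcome(s,\strat_1,\strat_2))
  \;=\;
  \costp_{\ptg}(\outcome(\sm(s),\strat_1',\strat_2')),
\]
treating both sides as $+\infty$ simultaneously when no target is reached, since the last transition into a target location also carries the external cost $f_{goal}(\ell)(\nu)$ and $\sm$ preserves both $\ell$ and $\nu$ on target states by construction (targets are neither duplicated nor shifted by Transformation~1). Taking the supremum over $\strat_2$ on the left and using the bijection to rewrite it as a supremum over $\strat_2'$ on the right yields the claimed equality $\costp_{\rtg}^{\delta}(s,\strat_1) = \costp_{\ptg}(\sm(s),\strat_1')$.

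The main obstacle I anticipate is the bookkeeping for the ``controller-state'' case of the perturbator: an RPTG edge of the form $(l,\val) \xrightarrow{(t,e)} ((l,\val),t,e) \xrightarrow{\varepsilon} (l',\val')$ has to be matched against a three-edge path in $\ptg$ going through $\urg{l,e}$ and then through either $\posP{(l,e)}$ or $\negP{(l,e)}$; one must check carefully that the chosen delays $t-\delta$ at $l$ and $\delta+\varepsilon$ (or $\delta-\varepsilon'$) in the perturbator copy produce the same accumulated price $\prices(l)\cdot(t+\varepsilon)$ as in $\rtg$, and that the reset $r$ is applied to the same valuation in both games. Since the price at the urgent location $\urg{l,e}$ is $0$ and the prices at $\posP{(l,e)}, \negP{(l,e)}$ are both $\prices(l)$, the arithmetic works out to $\prices(l)(t-\delta) + \prices(l)(\delta+\varepsilon) = \prices(l)(t+\varepsilon)$, exactly matching Lemma~\ref{lem:cost_trans}. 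Once this case is nailed down, the remaining cases are routine.
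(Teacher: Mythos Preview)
Your approach is essentially the same as the paper's: both proofs unfold the supremum over Player~2 strategies and use the two strategy translations (one in each direction), together with Lemma~\ref{lem:cost_trans}, to match the costs of corresponding outcomes. Your per-step cost bookkeeping for the controller-state case is exactly the computation underlying Lemma~\ref{lem:cost_trans}.

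One small over-claim: the maps $\strat_2 \mapsto \strat_2'$ and its reverse are \emph{not} mutual inverses, so the bijection step does not go through as stated. For instance, the perturbation $\varepsilon = 0$ in $\rtg$ can be realised in $\ptg$ either via $\negP{(l,e)}$ with delay $\delta$ or via $\posP{(l,e)}$ with delay $\delta$; both Player~2 strategies in $\ptg$ yield the same successor state and cost, yet they are distinct strategies. The paper sidesteps this by proving the two inequalities $\costp_{\rtg}(s,\strat_1) \leq \costp_{\ptg}(\sm(s),\strat_1')$ and $\costp_{\ptg}(\sm(s),\strat_1') \leq \costp_{\rtg}(s,\strat_1)$ separately: for each direction one only needs that the relevant translation is cost-preserving on outcomes, not that the two translations compose to the identity. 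Your argument is easily repaired the same way, replacing the bijection sentence by these two inequalities.
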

\begin{proof}
Recall that $\costp_{\rtg}(s,\strat_1) = sup_{\strat_2 \in \maxstrats{\rtg}}(\costp_{\rtg}(\outcome(s,\strat_1,\strat_2)))$.\\
\textbf{Part 1:} \boldmath $\costp_{\rtg}(s,\strat_1) \leq \costp_{\ptg}(\sm(s),\strat_1')$ \unboldmath\\
 Consider a strategy $\strat_2$ in $\rtg$. We can construct a strategy $\strat_2'$ 
in $\ptg$ as outlined above. From Lemma \ref{lem:cost_trans}, it is clear that 
the $\costp_{\rtg}(\outcome(s,\strat_1,\strat_2)) \leq \costp_{\ptg}(\outcome(\sm(s),\strat_1',\strat_2'))$. 

\noindent \textbf{Part 2:} \boldmath $\costp_{\ptg}(\sm(s),\strat_1') \leq \costp_{\rtg}(s,\strat_1)$ \unboldmath\\
 Consider a strategy $\strat_2'$ in $\ptg$. We can construct a strategy $\strat_2$ 
in $\ptg$ as outlined above. The selected semantics of $\ptg$ 
and Lemma \ref{lem:clockRange_target} ensure that all of $\strat_2'$ proposed 
delays can be emulated in $\rtg$ too.
\end{proof}

Along the same lines as the lemma above, we could also prove that $\costp(s,\strat_2) = \costp(\sm(s),\strat_2')$. These two results pave the way for  relating the optimal costs for 
states in the two games. We shall establish $\optcostd{\rtg}(s) = \optcost{\ptg}(\sm(s))$ by proving two inequalities \\
(1) $\optcostd{\rtg}(s) \leq \optcost{\ptg}(\sm(s))$ and \\
(2) $\optcost{\ptg}(\sm(s)) \leq \optcostd{\rtg}(s)$

%
%
%
%
%
%
%

\subsection*{\boldmath $\optcostd{\rtg}(s) \leq \optcost{\ptg}(\sm(s))$ \unboldmath} 
%
Consider a strategy $\strat_1$ in $\rtg$ and construct an equivalent strategy $\strat_1'$ 
in $\ptg$ (this is possible, Lemma \ref{lem:path_map}). Now  we shall prove that 
$$ sup_{\strat_2 \in \maxstrats{\rtg}} (\costp(\outcome(s,\strat_1,\strat_2)))  = 
sup_{\strat'_2 \in \maxstrats{\ptg}} (\costp(\outcome(\sm(s),\strat'_1,\strat'_2))). $$ To this end, let us consider a perturbator strategy $\strat_2$ in $\rtg$.  Then we can construct an equivalent Player~2 strategy $\strat_2'$ such that $\costp(\outcome(s,\strat_1,\strat_2))= \costp(\outcome(s,\strat'_1,\strat'_2))$ (follows from Lemma \ref{lem:cost_trans}). Thus, we have shown that the set of strategies in $\ptg$ is at least as large as  those in $\rtg$ and whatever costs are achieved in $\rtg$ can be achieved in $\ptg$ too.\\

\subsection*{\boldmath $\optcost{\ptg}(s) \leq \optcostd{\rtg}(s)$ \unboldmath} 

 We shall now construct strategies in $\rtg$ from strategies in 
$\ptg$. If $\strat_1'(s)$ proposes a delay $t$ then $\strat_1(f^{(-1)}(s))$ proposes $t+\delta$. 
Lemma \ref{lem:clockRange_target} ensures that $t+\delta$ will satisfy the guard. For example, if the guard was $0<x<1$ in $\rtg$ then 
 the delay chosen by $\strat_1'$ is $< 1 - \val(x) - \delta$. 

Similarly, we construct strategy $\strat_2$ from $\strat_2'$ as specified above.  
If $\strat_2'(\urg{l,e},\val+t)$ suggests the path $(\posP{(l,e)},\val) \xrightarrow{\delta+\epsilon} (l',\val')$. Then, $\strat_2((l,\val),t+\delta,e) \xrightarrow{\epsilon}(l',\val'')$. We know that, $\val''=\val'$. Once again, Lemma \ref{lem:clockRange_target} ensures that if $\val'$ is in an interval $I$ then $\val'' \in I$. For example, for the guard $0<x<1$, $\val', \val'' \in  [0,1+\delta]$. 

Once we have mapped the strategies, the proof of $\optcost{\ptg}(s) \leq \optcostd{\rtg}(s)$ 
is along the same lines as the previous case.

\begin{lemma}
 if $\strat_1$ in $\rtg$ is $(\epsilon,N)-$acceptable then 
$\strat_1'$ in $\ptg$ is also $(\epsilon,N)-$acceptable.
\end{lemma}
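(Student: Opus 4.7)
The plan is to verify each of the three defining conditions of an $(\epsilon, N)$-acceptable strategy separately, exploiting the fact that the state map $\sm$ and the strategy translation are defined pointwise and the cost-equality $\costp_{\rtg}^{\delta}(s, \strat_1) = \costp_{\ptg}(\sm(s), \strat_1')$ of Lemma~\ref{lem:cost_strategy} is already available.

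\textbf{Memorylessness.} First I would observe that the construction of $\strat_1'$ from $\strat_1$ in Section~\ref{sec:rta_dwell_ptg} is state-based: at a player~1 state $(l, \val)$ in $\ptg$ we set $\strat_1'(\rho' . (l,\val)) = (t-\delta, e')$ where $(t,e) = \strat_1((l,\val))$. Since $\strat_1$ is memoryless, its value depends only on $(l, \val)$, so $\strat_1'$ depends only on $\sm(l,\val) = (l,\val)$ and hence is memoryless.

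\textbf{$\epsilon$-optimality.} Next I would combine Lemma~\ref{lem:rtg_to_ptg} (which gives $\optcostd{\rtg}(s) = \optcost{\ptg}(\sm(s))$) with Lemma~\ref{lem:cost_strategy} (which gives $\costp_{\rtg}^{\delta}(s,\strat_1) = \costp_{\ptg}(\sm(s), \strat_1')$). Then $\epsilon$-optimality of $\strat_1$ in $\rtg$ reads
\[
\optcostd{\rtg}(s) \leq \costp_{\rtg}^{\delta}(s,\strat_1) < \optcostd{\rtg}(s) + \epsilon,
\]
which upon substitution becomes exactly $\epsilon$-optimality of $\strat_1'$ at $\sm(s)$ in $\ptg$.

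\textbf{Piecewise-affine structure on $N$ intervals.} This is the main obstacle and requires checking that the same partition $(I_i)_{1 \leq i \leq N}$ of $[0,1]$ witnesses acceptability in $\ptg$. I would take the same $N$ intervals. For a player~1 location $l$, the translation maps $\strat_1(l, \nu(x)) = (t(\nu), e)$ to $\strat_1'(l, \nu(x)) = (t(\nu)-\delta, e')$ where $e'$ is the uniquely determined outgoing edge of $l$ in $\ptg$ corresponding to $e$ (going to the urgent location $\urg{l,e}$). Since $t(\nu)$ is affine (in fact, constant as an action-choice) over $\alpha + I_i$ and the edge choice is constant, the same holds for $t(\nu)-\delta$ and $e'$; thus $\strat_1'$ is ``constant'' on $\alpha + I_i$ in the same sense. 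For the affine dependence of the strategy cost, I would invoke Lemma~\ref{lem:cost_strategy} pointwise: the map $\nu(x) \mapsto \costp_{\ptg}(\sm(l,\nu(x)), \strat_1')$ coincides with $\nu(x) \mapsto \costp_{\rtg}^{\delta}((l,\nu(x)), \strat_1)$, which by hypothesis is affine on each $\alpha + I_i$. Finally, for the newly introduced player~2 locations of $\ptg$ (of type $\urg{l,e}$, $\posP{(l,e)}$, $\negP{(l,e)}$) there is nothing to check since $\strat_1'$ is not defined there. This yields an $(\epsilon, N)$-acceptable strategy in $\ptg$ with the same parameters $\epsilon$ and $N$.
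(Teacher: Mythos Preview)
Your proof plan is correct and follows essentially the same three-step structure as the paper's own argument: verify memorylessness from the pointwise definition of the translation, derive $\epsilon$-optimality from the cost and optimal-cost equalities (Lemmas~\ref{lem:cost_strategy} and~\ref{lem:rtg_to_ptg}), and argue that the interval structure is preserved because the translated action at $(l,\nu)$ is obtained from $\strat_1(l,\nu)$ by a fixed modification (subtract $\delta$ from the delay, replace $e$ by the unique edge into $\urg{l,e}$).

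One small point of comparison: the paper asserts that the interval endpoints get ``shifted by $\delta$'' because the delays are shifted, whereas you argue that the \emph{same} intervals $(I_i)$ work. Your version is the cleaner formulation: the intervals partition clock values $\nu(x)$, and since $\sm(l,\nu)=(l,\nu)$ and the translated action depends only on $\strat_1(l,\nu)$, constancy of $\strat_1$ on $\alpha+I_i$ immediately gives constancy of $\strat_1'$ on the same $\alpha+I_i$. The paper's $\delta$-shift remark is at best imprecise. On the other hand, your dismissal of the new player~2 locations $\urg{l,e}$, $\negP{(l,e)}$, $\posP{(l,e)}$ is slightly too quick: the constancy-of-strategy clause is indeed vacuous there, but the affine-cost clause in the definition of $(\epsilon,N)$-acceptability in principle applies to every location. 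The paper does not address this either, and it is not hard to fill in (the cost at these intermediate locations is an affine reparametrisation of the cost at the surrounding $L_1\cup L_2$ locations, possibly refining the partition by a bounded factor), but strictly speaking neither proof is complete on this point.
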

A strategy in $\rtg$ is said to be $(\epsilon,N)-$acceptable if (1) it is memoryless, (2) is $\epsilon-$optimal 
for every state and (3) partitions $[0,1+\delta]$ into at most $N$ intervals. \\
From the definition of equivalent strategy $\strat_1'$, it is easy to see that if $\strat_1$ is memoryless then so is $\strat_1'$. Additionally, if $\strat_1$ has $n$ intervals then 
$\strat_1'$ would also have $n$ intervals except that the intervals' end points would be shifted by $\delta$ as the delay prescribed by $\strat_1'$ are $t-\delta$ when $\strat_1$ suggests $t$. Finally, we can claim that $\epsilon-$optimality is preserved from Lemma \ref{lem:cost_strategy}.

\section{Dwell time PTG to Dwell time FRPTG}
\label{app:transf2}
\begin{figure}[h]
\begin{center}
\begin{tikzpicture}[->,>=stealth',shorten >=1pt,auto,node distance=1cm,  semithick,scale=0.9]
 
  \node[min] at (1.7,0) (Ap) {$k$};
  \node[locLabel] () [above of =Ap] {$A_b$};
  \node[cdel] () [below of=Ap] {$t-\delta$};

  \node[max] at (5,0) (mu) {$0$};
  \node[locLabel] () [above of =mu] {$\urg{A,e}_b$};
  \node[del] () [below of=mu] {$0$};


  \node[max] at (7,-1) (mp) {$k$};
  \node[locLabel] () [above of =mp] {$\posP{(A,e)}_b$};
  \node[del] () [below of=mp] {$[\delta,2\delta]$};

  \node[max] at (7,-3) (mp1) {$k$};
  \node[locLabel] () [left of =mp1,node distance = 9.2mm] {$\zeroP{(A,e)}_{b+1}$};
  \node[del] () [below of=mp1] {$0$};

  \node[max] at (7,1) (mn) {$k$};
  \node[locLabel] () [above of =mn] {$\negP{(A,e)}_b$};
  \node[del] () [below of=mn] {$[0,\delta]$};

  \node[min] at (9,0) (Bp) {$k'$};
  \node[locLabel] () [right of =Bp] {$B_b$};

  \node[min] at (9,-3) (Bp1) {$k'$};
  \node[locLabel] () [node distance =8mm,right of =Bp1] {$B_{b+1}$};

  \draw[trans] (Ap)--(mu) node[midway,above] {$g$} node[midway,below] {$a$};
  \draw[trans] (mu)--(mp) node[midway,above] {$ $};
  \draw[trans] (mu)--(mn) node[midway,above] {$ $};
  \draw[trans] (mn)--(Bp) node[midway,below] {$r$};
  
  \draw[trans] (mp)--(Bp) node[midway,below] {$r$};
  \draw[trans] (mp)--(Bp) node[midway,above] {$ $};
  \draw[trans] (mp1)--(Bp1) node[midway,above] {$r$};
  \draw[trans] (mn)--(Bp) node[midway,above] {$ $};
  \draw[trans] (mp)--(mp1) node[midway,left] {$x{\geq} 1, \fr{x}$};
  \draw[trans] (Bp) -- (Bp1) node[midway,right] {$x{=}1,\set{x}$};


\end{tikzpicture}

\caption{FRPTG}
\label{bound}
\end{center}
\end{figure}
We have already defined in section \ref{app:tran2}, the mapping between 
the states of the dwell-time PTG $\ptg$ and the constructed dwell-time FRPTG $\frptg$.
The mapping is defined in such a way that a state $(l,\val)$ in $\ptg$ 
is mapped to the state $(l_b,\val-b)$, whenever $\val \in [b,b+1]$. the integral part 
of the clock valuation is remembered in the state itself, while the valuation always stays 
in [0,1]. The only exception to this is the location 
$\posP{(l,e)_b}$, where the clock valuation can go up to 
$1+\delta$.The state 
$(\posP{(l,e)_b},\val)$ in $\frptg$ is the mapping of the state 
$(\posP{(l,e)}, b+\val)$ in $\ptg$.

\begin{lemma}
 \label{lem:path_map2}
 Given a path $\rho$ in $\ptg$ from $s$ to $s'$, there exists a unique 
 path $\rho'$ in $\frptg$ from $\sm(s)$ to $\sm(s')$. Additionally, $\costp(\rho) = \costp(\rho')$. 
\end{lemma}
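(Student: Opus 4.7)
My plan is to proceed by induction on the length $n$ of the path $\rho$ in $\ptg$. The base case $n=0$ is immediate: the empty path at $s$ maps to the empty path at $\sm(s)$, and both have cost $0$. For the inductive step, it suffices to handle a single step $s \xrightarrow{a} s''$ in $\ptg$, and then concatenate the resulting path fragment in $\frptg$ with the inductively constructed path from $\sm(s'')$ to $\sm(s')$.

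For the single-step analysis I would split on the type of move in $\ptg$. A \emph{delay move} at a location $l$ with valuation $\val \in [b, b+1]$ that advances the clock to $\val+t$ must be simulated in $\frptg$ starting from $\sm(l,\val) = (l_b, \val - b)$. If $\val + t$ still lies in $[b, b+1]$, a single delay of $t$ in $l_b$ suffices. Otherwise I would decompose the delay into pieces that move the clock exactly to successive integer points, interleaved with the mandatory copy-changing transitions $l_i \xrightarrow{x=1, \{x\}} l_{i+1}$ given by the construction; the special case $l \in L'$ with $\val > 1$ corresponds to the location $\posP{(A,e)}$ and is handled by the dedicated edge $\posP{(A,e)}_b \xrightarrow{x \geq 1, \fr{x}} \zeroP{(A,e)}_{b+1}$, which is the only place where a fractional (rather than integral) reset is required in order to preserve the fractional part $\delta$ reflecting the positive perturbation. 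A \emph{discrete move} $l \xrightarrow{g} l'$ in $\ptg$ taken at valuation $\val \in [b, b+1]$ is simulated by the transition $l_b \xrightarrow{(g-b) \cap 0\leq x < 1} m$ of $\frptg$, where $m = l'_b$ if the edge has no reset and $m = l'_0$ if it resets $x$; the shifted guard $g - b$ is exactly what makes $\val - b \in \sem{g - b}$ whenever $\val \in \sem{g}$, by construction. In each sub-case the resulting last state of the $\frptg$ fragment coincides with $\sm(s'')$ as prescribed by the mapping $f$ of Section~\ref{app:tran2}.

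\emph{Uniqueness} of $\rho'$ will follow from the fact that the transformation is deterministic: the integral part $b$ of $\val$ determines which copy is entered, the copy-changing transitions $l_i \to l_{i+1}$ are forced (guard $x{=}1$ with urgent semantics), and the choice between copy-preserving and resetting discrete edges is dictated by whether the original edge in $\ptg$ resets $x$. \emph{Cost preservation} is a straightforward check: prices in $\frptg$ are carried over verbatim into every copy, delays are preserved piece-by-piece across the decomposition, the copy-changing edges $l_i \to l_{i+1}$ (as well as the urgent edges $\zeroP{(A,e)}_{b+1}$) are instantaneous and therefore contribute zero cost, and discrete edges contribute no running cost in either game.

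The only delicate point—and the step I would take most care with—is the $\posP{(A,e)}$ case, where the clock in $\ptg$ may legitimately cross the integer boundary $b+1$ while still inside the same location (due to dwell-time $[\delta, 2\delta]$). This is precisely the reason fractional resets were introduced, so I would verify that: (i) reaching $B_b$ requires $\val - b < 1$, (ii) crossing to $\val - b \in [1, 1+\delta]$ forces the path through $\posP{(A,e)}_b \to \zeroP{(A,e)}_{b+1}$, and (iii) the fractional reset leaves the clock at exactly the fractional part $\val - b - 1 = \val - (b+1)$, matching $f(\posP{(A,e)}, \val) = (\posP{(A,e)}_b, \val - b)$ before the reset and the corresponding state after. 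With this case handled, the induction closes and both claims of the lemma follow.
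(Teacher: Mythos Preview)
Your proposal is correct and follows essentially the same approach as the paper: decompose delays at integer boundaries, use the copy-changing transitions $l_i \to l_{i+1}$ (and the fractional-reset edge for $\posP{(A,e)}$), and observe that prices are carried over so costs are preserved. In fact your induction is considerably more detailed than the paper's own argument, which is just a short informal sketch of the same idea.
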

The proof of Lemma \ref{lem:path_map2} is straightforward, given the mapping 
$f$. Any time elapse of 1 in any one state $(l, \val)$ 
in $\ptg$ is captured by
 starting from some $(l_b, \val-b)$, and moving to $(l_{b+1}, (\val+1)-(b+1))$ in $\frptg$ and so on.
 Whenever the clock value reaches an integral value in $\ptg$, 
 correspondingly in the $\frptg$, the state is updated by remembering the new integral part, 
 and updating the clock valuation to 0. Every path in $\ptg$ corresponds to a path in $\frptg$, where 
 the constraints on the path are shifted by an appropriate integer, depending 
 on the integral value remembered in the current state. 
 
 This also gives a mapping between the strategies 
 of $\ptg$ and $\frptg$.  Also, the costs 
 are preserved across paths : any path in $\ptg$ is mapped to a longer path 
 in $\frptg$ so that the individual time delays in $\frptg$ never exceed 1. Since the 
 prices of states are preserved by the mapping, the costs will add up to be the same. 
 It is easy to see that a {\it{copy-cat}} strategy works between $\ptg$ and $\frptg$, and hence, 
 costs, optimal costs are preserved. 
 Since strategies are copy-cat, all properties like $(\epsilon,N)$-acceptability are also preserved across games.

\section{FRPTG to Reset-free FRPTG}
\label{app:resetfree}

 Given a one clock $\FRPTG$ 
 $\frptg=(\locations_1,\locations_2,\set{x},\trans,\prices,\goals)$ with $n$ resets (including fractional resets), 
 we define a reset-free $\FRPTG$ 
 as follows : 
 $\frptg' = (\locations_1',\locations_2',\set{x},\trans',\prices',\goals')$
 where
 \begin{itemize}
  \item For $l \in \locations_1$ and $0 \leq j <n$, we have $l^j \in \locations_1'$;
  \item For $l \in \locations_2$ and $0 \leq j <n$, we have $l^j \in \locations_2'$;
     \item $S \not \in \locations_1 \cup \locations_2$ is a sink location such that $S \in \locations_2'$;
  \item $\trans'$ has the following transitions.
  \begin{itemize}   
   \item $l^j \xrightarrow{g}l'^{j}$ if $l \xrightarrow{g} l' \in \trans$;
   \item $l^j \xrightarrow{g,r}l'^{j+1}$ if $l \xrightarrow{g,r} l' \in \trans$, $j<n$ and $r$ is either $\set{x}$ or $\fr{x}$;
   \item $l^n \xrightarrow{g,r}S$ if $l \xrightarrow{g,r} l' \in \trans$ and $r$ is either $\set{x}$ or $\fr{x}$;
   \item $S \xrightarrow{} S$;
   \end{itemize}
 \item $\prices'(l^j) = \prices(l)$
    and $l^j \in \goals'$ if $l \in \goals$.
 \end{itemize}
 
We illustrate the construction of a resetfree FRPTG 
in  Figure \ref{fig:resetfree_frptg}, 
  corresponding to the FRPTG in Figure \ref{bound}.
 Note that the locations in the upper rectangle form the the first copy $\frptg$-0 and while the lower rectangle forms 
the second copy $\frptg$-1. A copy $\frptg$-$i$ indicates the number of resets seen so far from the initial location $l_0^0$ 
of the first copy $\frptg$-0.
\begin{figure}[h]
\begin{center}
\begin{tikzpicture}[->,>=stealth',shorten >=1pt,auto,node distance=1cm,  semithick,scale=0.9]
 
  \node[min] at (1.7,0) (Ap) {$k$};
  \node[locLabel] () [above of =Ap] {$A_b$};
  \node[cdel] () [below of=Ap] {$t-\delta$};

  \node[max] at (5,0) (mu) {$0$};
  \node[locLabel] () [above of =mu] {$\urg{A,e}_b$};
  \node[del] () [below of=mu] {$0$};


  \node[max] at (7,-1) (mp) {$k$};
  \node[locLabel] () [above of =mp] {$\posP{(A,e)}_b$};
  \node[del] () [below of=mp] {$[\delta,2\delta]$};

  \node[max] at (7,-3) (mp1) {$k$};
  \node[locLabel] () [left of =mp1,node distance = 9.2mm] {$\zeroP{(A,e)}_{b+1}$};
  \node[del] () [below of=mp1] {$0$};

  \node[max] at (7,1) (mn) {$k$};
  \node[locLabel] () [above of =mn] {$\negP{(A,e)}_b$};
  \node[del] () [below of=mn] {$[0,\delta]$};

  \node[min] at (9,0) (Bp) {$k'$};
  \node[locLabel] () [right of =Bp] {$B_b$};

  \node[min] at (9,-3) (Bp1) {$k'$};
  \node[locLabel] () [node distance =8mm,right of =Bp1] {$B_{b+1}$};

  \draw[trans] (Ap)--(mu) node[midway,above] {$g$} node[midway,below] {$a$};
  \draw[trans] (mu)--(mp) node[midway,above] {$ $};
  \draw[trans] (mu)--(mn) node[midway,above] {$ $};
  \draw[trans] (mn)--(Bp) node[midway,below] {$r$};
  
  \draw[trans] (mp)--(Bp) node[midway,below] {$r$};
  \draw[trans] (mp)--(Bp) node[midway,above] {$ $};
  \draw[trans] (mp1)--(Bp1) node[midway,above] {$r$};
  \draw[trans] (mn)--(Bp) node[midway,above] {$ $};
  \draw[trans] (mp)--(mp1) node[midway,left] {$x{\geq} 1, \fr{x}$};
  \draw[trans] (Bp) -- (Bp1) node[midway,right] {$x{=}1,\set{x}$};


\end{tikzpicture}

\caption{FRPTG}
\label{bound}
\end{center}
\end{figure}
\begin{figure}[h]
\begin{center}
\begin{tikzpicture}[->,>=stealth',shorten >=1pt,auto,node distance=1cm,  semithick,scale=0.9]

    \draw[dashed,draw=gray,rounded corners=10pt] (1,2) rectangle (13,-2);
    \node at (1.8, -1) (g0) {$\mathbf{\frptg-0}$};
    \draw[dashed,draw=gray,rounded corners=10pt] (1,-4) rectangle (13,-6);
    \node at (1.8, -5) (g1) {$\mathbf{\frptg-1}$};
 
  \node[min] at (2,0) (Ap) {$k$};
  \node[locLabel] () [above of =Ap] {$[A_0]^0$};
  \node[cdel] () [below of=Ap] {$t-\delta$};

  \node[max] at (6,0) (mu) {$0$};
  \node[locLabel] () [above of =mu] {$[\urg{A,e}_0]^0$};
  \node[del] () [below of=mu] {$0$};


  \node[max] at (8.5,-1) (mp) {$k$};
  \node[locLabel] () [above of =mp] {$[\posP{(A,e)}_{0}]^0$};
  \node[del] () [below of=mp] {$[\delta,2\delta]$};

  \node[max] at (8.5,-5) (mp1) {$k$};
  \node[locLabel] () [node distance=7mm,left of =mp1] {$[\posP{A,e}_{1}]^1$};
  \node[del] () [below of=mp1] {$0$};

  \node[max] at (8.5,1) (mn) {$k$};
  \node[locLabel] () [above of =mn] {$[\negP{(A,e)}_{0}]^0$};
  \node[del] () [below of=mn] {$[0,\delta]$};

  \node[min] at (11,0) (Bp) {$k'$};
  \node[locLabel] () [above of =Bp] {$[B_0]^0$};

  \node[min] at (11,-5) (Bp1) {$k'$};
  \node[locLabel] () [below of =Bp1] {$[B_1]^1$};

  \draw[trans] (Ap)--(mu) node[midway,above] {$g \cap (0{\leq} x {<}1)$} node[midway,below] {$a,c$};
  \draw[trans] (mu)--(mp) node[midway,above] {};
  \draw[trans] (mu)--(mn) node[midway,above] {};
\draw[trans,sloped] (mn)--(Bp) node[midway,below] {$0 \leq x <1$};
 \draw[trans] (mn)--(Bp) node[midway,above] {$r$};
  
  \draw[trans] (mp)--(Bp) node[midway,above] {$r$};
  \draw[trans,sloped] (mp)--(Bp) node[midway,below] {$0 \leq x < 1$};
  \draw[trans] (mp1)--(Bp1) node[midway,above] {$r$};
  \draw[trans] (mp1)--(Bp1) node[midway,below] {};
  
  \draw[trans] (mp)--(mp1) node[midway,left] {$x{\geq} 1, \fr{x}$};
  \draw[trans] (Bp) -- (Bp1) node[midway,right] {$x{=}1,\set{x}$};

%
%
%
%
%
%
%
%
%
%
\node at (8, -7) (n) {$Resetfree~\FRPTG$};
  
\end{tikzpicture}

\caption{Resetfree $\FRPTG$ - two copies $\frptg-0$ and $\frptg-1$ correspoding to the number of resets encounterd so far 
i.e; $\frptg-0$ indicates that 0 resets have been seen so far and $\frptg-1$ indicates 1 (fractional) reset has been seen.} 
\label{fig:resetfree_frptg}
\end{center}
\end{figure}

\subsection{Proof of Lemma \ref{lem:resetfree_frptg}}

\begin{proof}
Consider any state $(l, \val)$ in $\frptg$.  The reduction from 
the FRPTG $\frptg$ to the reset-free 
FRPTG $\frptg'$  
creates a new component (or copy) for each new reset, including fractional resets.
Given that there are a total of $n$ resets in the FRPTG, 
$\frptg$, $n+1$ reset-free components are created in the reset-free FRPTG 
$\frptg'$, and the last component goes to a location with cost $+\infty$. 
By assumption, the cycles in each reset-free component 
are non-negative.  
Any cycle in the FRPTG $\frptg$ involving a reset 
is mapped to a path in the reset-free FRPTG $\frptg'$ ending at the location $S$ with cost $+\infty$, 
while any reset-free cycle in $\frptg$ is mapped to  a  cycle 
in one of the $n+1$ reset-free components of the reset-free FRPTG $\frptg'$. 
Clearly, for every strategy $\sigma$ of player 1, 2 in $\frptg$, there is a corresponding strategy $\sigma'$ in the 
$\frptg'$ and vice-versa,  obtained using the above mapping of paths 
between $\frptg$ and $\frptg'$.  Given that the prices of locations are preserved between 
$\frptg$ and $\frptg'$, the optimal cost from $(l, \val)$ in 
$\frptg$ is the same as the optimal cost from $(l^0, \val)$ in $\frptg'$.

Consider a $(\epsilon,N)$-acceptable strategy $\sigma'$ in $\frptg'$. Consider a winning state $(l^0,\nu)$. Let $i$ be the minimum number of resets 
from state $(l^0,\nu)$ along any path compatible with $\sigma'$. That is, the player can win from $(l^{n-i},\nu)$ but not from $(l^{n-i+1},\nu)$. If $(l,\nu)$ is not winning then we take $i=n+1$. We denote by $\sigma'_{n-i}$ the suggestions made by $\sigma'$ in the $n-i^{th}$ copy in $\frptg'$. We  then assign $\sigma(l,\nu) = \sigma'_{n-i}(l^{n-i},\nu)$. Thus, we obtain that $\costp_{\frptg}((l,\nu),\sigma) = \costp_{\frptg'}((l^{n-i},\nu),\sigma')$.
Since $(l^0, \nu)$ and $(l^{n-i}, \nu)$ have the same outgoing transitions, we know that 
the strategy $\sigma'$ from $(l^0, \nu)$ will be atleast as costly as 
$\optcost{\frptg'}(l^{n-i},\val)$. That is, 
\begin{equation}\label{eqn1}
 \costp_{\frptg'}((l^0,\val),\sigma') \geq \optcost{\frptg'}(l^{n-i},\val)
\end{equation}
Now, if $\costp_{\frptg'}((l^{n-i},\nu),\sigma') ~> ~\costp_{\frptg'}((l^0,\nu),\sigma') + \epsilon$, then by Equation \ref{eqn1} we 
have $\costp_{\frptg'}((l^{n-i},\val),\sigma')>\optcost{\frptg'}(l^{n-i},\val)+\epsilon$ which means $\sigma'$ is not $\epsilon$-optimal. Thus we have, \\
$\costp_{\frptg'}((l^{n-i},\nu),\sigma') ~\leq ~
\costp_{\frptg'}((l^0,\nu),\sigma') + \epsilon$
\begin{equation*}
\costp_{\frptg}((l,\nu),\sigma)= \costp_{\frptg'}((l^{n-i},\nu),\sigma')
    \begin{cases}
		\leq ~\costp_{\frptg'}((l^0,\nu),\sigma') + \epsilon \\    
		\leq~\optcost{\frptg'}(l^0,\nu) + \epsilon + \epsilon \\
		\leq~\optcost{\frptg}(l,\nu) + 2\epsilon
    \end{cases}
\end{equation*}
\end{proof}

We shall now focus on informally explaining why fractional resets would not cause a problem. 
In a PTG without fractional resets, a resetting transition $e$ (say $l \xrightarrow{x=1, x:=0} m$) taken twice takes us back to the same state $(m,x=0)$ twice. This crucial property is the back bone of the transformation which removes resets in \cite{BouLar06}. The correctness proof is by constructing optcost preserving strategies for Player 1 in both $\ptg$ and its resetfree equivalent $\ptg'$. Given a winning strategy for Player 1 in PTG $\ptg$, a strategy in $\ptg'$ is constructed so as to ensure each resetting transition is taken atmost once. This is possible because a resetting transition $e$ appearing the second time, results in the same state $(m,0)$ and hence the transitions possible (and the optcost achievable) from the second resultant state $(m,0)$ can be applied the first time this state occurs itself. In other words, the second occurrences of the transition are replacable as they result in the same unique state $(m,0)$. It should be the case that a path exists such as to avoid the second 
occurrence of the resetting transition as the strategy is winning for Player 1.

Now, a similar reasoning will not work for fractional resets $e'$ (say $l' \xrightarrow{x\geq 1, [x]:=0} m'$)  as the resulting state $(m',x)$ after a fractional reset transition is not unique (as the clock $x \in [0,\delta]$) and thus we can not adopt this argument directly.  Firstly, the player 2 location $\posP{(l,e)}_i$ is entered with $x \leq 1-\delta$ (see Transformation 2) and  a delay $d$ makes $x \in [1,1+\delta]$.  This delay happens entirely in this location and is chosen entirely by Player 2. From  $\posP{(l,e)}_i$, if player 2 moves to a  
 $\zeroP{(l,e)_{i+1}}$ location, then the value of $x$ is in $[0, \delta]$. Note that 
 the value of $x$, say $\zeta$  in $\zeroP{(l,e)_{i+1}}$ is indeed the perturbation 
 that happened in the RPTG $\rtg$ : in the FRPTG, at 
 $\posP{(l,e)}_i$, player 2 elapses $\delta+\zeta$. 
 Recall that if in $\rtg$, a location $l$ was entered with value of $x$ being $\nu$, then 
 in the FRPTG, we enter 
 $(l,e)_i$ and  
 $\posP{(l,e)}_i$ with $\nu-i-\delta$. Player 2 
 at $\posP{(l,e)}_i$ makes this value to be $\nu-i+\zeta$, which is exactly same as the perturbed value 
 of $x$ in $\rtg$ when perturbator chooses a positive perturbation.
 The point to note is that whevener player 2 returns to $\posP{(l,e)}_i$, 
 the control of perturbation is his; thus, any $\zeta$ that is achieved the 
 $k$th time can be achieved the first time itself by 
 player 2. Moreover, if player 2 has a strategy 
 to revisit $\posP{(l,e)}_i$, then clearly, 
 player 1 will lose, since after $n+1$ times, the control reaches the target with cost $\infty$.
 Note also that in Algorithm 1, while 
 we solve for $\posP{(l,e)}_i$, we 
 will have the optcost function computed for 
 $\zeroP{(l,e)_{i+1}}$. Player 2 will choose to delay $\delta+\zeta$
 for that $\zeta$ where the cost is maximal 
 in the optcost of 
 $\zeroP{(l,e)_{i+1}}$.


\section{Example : Solve Reset-Free FRPTG}
\label{app:eg}
 We shall first look at how normal resets are handled. 
As detailed by the resetfree construction, each copy of the FRPTG is an SCC and there are $n+1$ copies when the FRPTG has $n$ resets. The $i+1$th copy is solved and its optcost functions are used as outside cost functions while solving the $i$th copy. 
This is depicted clearly in the figure below. \\
\begin{tikzpicture}[->]
 \node[cloud, draw,cloud puffs=10,cloud puff arc=120, aspect=2, inner ysep=1em,fill=black!20] at (0,0) {$\frptg_i$};
 \node[cloud, draw,cloud puffs=10,cloud puff arc=120, aspect=2, inner ysep=1em,fill=black!20] at (0,-3) {$\frptg_{i+1}$};
  \node[min,fill=purple!20] at (1,0) (A) {$L$};
\node[triangle] at (4,0)(F){$f$};

\node[min,fill=purple!20] at (1,-3) (B) {$M$};

 \draw[trans] (A)--(F) node[midway,above] {$x {=}1$} node[midway,below] {$x:=0$};
 \draw[trans] (A)--(B) node[midway,left] {$x {=}1$} node[midway,right] {$x:=0$};
 
   \draw [<->,thick] (5,1) node (yaxis) [above] {}
        |- (7,-1) node (xaxis) [right] {$x$};
        \node[cf] at (5.6,0.5) (f2) {$f$};
    \draw[-|,cf] (5,0) coordinate (b_1) -- (6.5,0) coordinate (b_2);
    \node[node distance = 2mm,left of=b_1] {$1$};
    \coordinate (b_2p) at (6.5,-1);
    \node[node distance = 2mm,below of=b_2p] {$1{+}\delta$};
     \fill[red] (b_1) circle (2pt);
   \draw [<->,thick] (2,-2) node (yaxis) [above] {}
        |- (4,-4) node (xaxis) [right] {$x$};
        \node[cf] at (3.5,-2.5) (f2) {$OptCost~of~M$};
    \draw[-|,cf] (2,-3) coordinate (c_1) -- (3.5,-4) coordinate (c_2);
    \node[node distance = 2mm,left of=c_1] {$1.2$};
     \node[node distance = 2mm,below of=c_2] {$1{+}\delta$};
      \fill[red] (c_1) circle (2pt);
\end{tikzpicture}
\\
Location $L$ in $\frptg_i$ has two transitions with resets - one to a target and another to a location $M$ in $\frptg_{i+1}$ 
whose optcost function has already been computed. Due to the clock reset, the target or $M$ are entered with clock value $x=0$ and hence the only values of interest are : the cost function $f$ of the target at $x=0$ i.e; $f(0)=1$ and optcost for location $M$ at $x=0$ i.e; $1.2$.  Since $L$ is a Player 1 location, the lower of these two values is picked and the corresponding transition is selected. 
\\
~\\

Now let us now consider fractional resets. The following figure depicts the previously conidered example with normal reset replaced with fractional reset.

 \begin{tikzpicture}[->]
 \node[cloud, draw,cloud puffs=10,cloud puff arc=120, aspect=2, inner ysep=1.6em,fill=black!20] at (0,0) {$\frptg_i$};
 \node[cloud, draw,cloud puffs=10,cloud puff arc=120, aspect=2, inner ysep=1.6em,fill=black!20] at (0,-3) {$\frptg_{i+1}$};
  \node[max,fill=purple!20] at (1.3,0) (A) {$1$};
   \node[locLabel] () [above of =A] {$\posP{(A,e)}_{b}$};
\node[triangle] at (4,0)(F){$B_b$};

\node[max,fill=purple!20] at (1.3,-3) (B) {$1$};
\node[locLabel] () [below of =B] {$\zeroP{(A,e)}_{b+1}$};

 \draw[trans] (A)--(F) node[midway,above] {$e_2,x {<}1$} node[midway,below] {};
 \draw[trans] (A)--(B) node[midway,left] {$e_1,x {\geq}1$} node[midway,right] {$[x]:=0$};
 
   \draw [<->,thick] (5,1) node (yaxis) [above] {}
        |- (7,-1) node (xaxis) [right] {$x$};
        \node[cf] at (5.6,0.5) (f2) {$f$};
    \draw[-|,cf] (5,0) coordinate (b_1) -- (6.5,0) coordinate (b_2);
    \node[node distance = 2mm,left of=b_1] {$1$};
    \coordinate (b_2p) at (6.5,-1);
    \node[node distance = 2mm,below of=b_2p] {$1{+}\delta$};
   \draw [<->,thick] (3,-2) node (yaxis) [above] {}
        |- (5,-4) node (xaxis) [right] {$x$};
        \node at (4,-5) (f2) {$OptCost~of~\zeroP{(A,e)}_{b+1}$};
    \draw[-|,fun1] (3,-3) coordinate (c_1) -- (4.5,-4) coordinate (c_2);
    \node[node distance = 2mm,left of=c_1] {$1.2$};
     \node[node distance = 2mm,below of=c_2] {$1{+}\delta$};
    \draw[-|,red,dashed] (3.2,-2) -- (3.2,-4) coordinate (c_3);
     \node[node distance = 2mm,below of=c_3] {{$\delta{=}0.2$}};

   \draw [<->,thick] (6,-1.5) node (yaxis) [above] {}
        |- (9,-4) node (xaxis) [right] {$x$};
        \node at (7.5,-5) (f2) {$\begin{array}{c}Cost~of~taking ~e_1 \\ from~\posP{(A,e)}_{b}\end{array}$};
    \draw[-,fun1] (7.5,-3) coordinate (c_1) -- (7.75,-3.5) coordinate (c_2);
     \draw[-,dashedl] (c_1) -- (6,-3) coordinate (c_4); 
    \node[node distance = 2mm,left of=c_4] {$1.2$};
     \draw[-,dashedl] (c_2) -- (7.75,-4) coordinate (c_3);
     \node[node distance = 2mm,below of=c_3] {$1{+}\delta$};
    \draw[-,fun2] (6,-2) coordinate (c_5) -- (c_1);
    \node[node distance = 2mm,left of=c_5] {$2.2$};
\end{tikzpicture}

       Recall from Transformation 2 (Dwell-time PTG $\ptg$ to FRPTG  $\frptg$) that fractional resets 
occur only along transitions from $\posP{(A,e)}_b$ to $\zeroP{(A,e)}_{b+1}$. Lets call this transition $e_1$. From the construction of \FRPTG, we also know that the only other transition possible from $\posP{(A,e)}_b$ is to location $B_b$, corresponding to the transition from $A$ to $B$ in the RPTG.  Let us denote this transition as $e_2$. by construction 
of the reset-free FRPTG,
the constraint on  $\posP{(A,e)}_b \xrightarrow{} B_b$ is $x<1$. 
 Figuring out which part of the cost functions of $B_b$ and $\zeroP{(A,e)}_{b+1}$ to consider for the optcost computation of $\posP{(A,e)}_{b}$ is a little different from the normal reset case. Here the guards on transitions can be considered as $0\leq x < 1$ for $e_2$ and $1 \leq x \leq 1+\delta$ for $e_1$.

 Hence we should consider the entire cost function of $B_b$, while taking only the $x\in[0,\delta]$  part from the function of $\zeroP{(A,e)}_{b+1}$. Recall that fractional resets removed the integer part of $x$, thereby  taking $x$ from $[1,1+\delta]$ to $[0,\delta]$. Thus the cost function of taking the transition to $\zeroP{(A,e)}_{b+1}$ is equal to
  (delay of waiting at $\posP{(A,e)}_{b}$ till $x = v \in [1,1+\delta]$) $+$ 
(optcost of $\zeroP{(A,e)}_{b+1}$ at $1-v$).
 We compute this cost as outlined in Figure \ref{fig:superimpose_diffGuard}. It is easy to see that since the  price of $\posP{(A,e)}_{b}$ is 1,  and the slope of the optcost function of $\zeroP{(A,e)}_{b+1}$ is $-1.2$, 
 it is more profitable to reach $\zeroP{(A,e)}_{b+1}$ at $x=0$ than at $x=\delta$ i.e; the transition $e_1$ when $x=1$, thus reaching $\zeroP{(A,e)}_{b+1}$ at $x=0$ after the fractional reset, rather than wait at $\posP{(A,e)}_{b}$ till $x=1+\delta$ and then reach $\zeroP{(A,e)}_{b+1}$ at $x=\delta$ yielding only $2.16$ (wait till $x=1+\delta$ incurring $1.2$ and then optcost of $\zeroP{(A,e)}_{b+1}$ at $x=\delta$ is 0.96). 
 
 We consider this cost function of taking the transition $e_1$ and the cost function of $B_b$ to compute the optcost function of $\posP{(A,e)}_{b}$. In this example, it is clearly better for Player 2 to take $e_1$ at $x=1$ and hence the cost function of taking $e_1$ is the optcost of $\posP{(A,e)}_{b}$.
\\
\\

\section{Algorithm for OptCost Computation : $l$ is a player 1 location}
\label{app:pl1}
We first prove Lemma \ref{lem:replaceFun}. 
\begin{proof}
The optcost computation for a location $\lmin$ 
 is done using the already computed optcosts 
 of all successors of $\lmin$, which we now treat as outside cost functions. 
  The Steps 1 and 2 in Algorithm \ref{algo:optcost_compute}, 
 superimpose the outside cost functions corresponding to $\lmin$ and take the interior. 
  Recall that step 3 is applied right to left :
  we start the selective replacement from $\nu  = 1+\delta$ and 
 proceed towards $0$.
    We know that up to $v_i$, for all $\nu \geq v_i$, $\optcost{}(\lmin,\nu) = f(v_i)$.
    
       Now we have to compute the optcost for $\nu \in [u_i,v_i]$. 
 As we have taken the interior of the superimposed function in Step 2, we know that 
 $g_i$ is the best (lowest) possible cost if we do not delay at $\lmin$. 
Let us determine if delaying at $\lmin$ is more profitable than following $g_i$. 
 The two options we have are : 
 \begin{enumerate}
  \item Follow $g_i$ whose slope is $-m$.
  The line segment $g_i$ is given by 
 $y = -mx + c$ where $c = f(v_i) + mv_i$, since $y=g_i(v_i) = f(v_i)$ at $x=v_i$;
     ($f$ is continuous, and is composed of $g_1, \dots, g_m$) and 
\item delay at $\lmin$ till $x = v_i$ and 
 exit at $v_i$. 
  The line segment corresponding to the delay at $\lmin$ is $y = -\prices(\lmin)*x + c'$ where $c' = f(v_i) + \prices(\lmin)*v_i$ as we delay 
 at $\lmin$ until $x=v_i$ and follow $f$, thus obtaining $f(v_i)$ at $x=v_i$
 \end{enumerate}
  Now comparing these two equations we get the following. 
\begin{align*}
 -mx+c~&\sim~-\prices(\lmin)*x + c' \\
 -mx+f(v_i) + mv_i~&\sim~-\prices(\lmin)*x + f(v_i) + \prices(\lmin)*v_i \\
 -mx+mv_i~&\sim~-\prices(\lmin)*x + \prices*v_i \\
 m*(v_i - x) ~&\sim~\prices(\lmin) * (v_i - x) \\
 \mbox{If $x \leq v_i$ and $\prices(\lmin) \leq m$},~ & \mbox{then we conclude $\sim$ is $\geq$} 
 \end{align*}
 Thus, we observe that delaying at $\lmin$ is better. The above discussion is for Player~1 
 but can be easily adapted to Player~2. 
 In a similar fashion, we can argue that delaying at $\lmin$ till $ x \leq v_i' < v_i$ is 
 worse than delaying till $x\leq v_i$ i.e; Player~1 prefers to 
 wait until $v_i$ instead of  exiting and following $g_i$ at 
 some point $v_i' < v_i$.  
\end{proof}

\subsection{OptCost Computation for All Constraints}
\label{app:other}
In the computation in Algorithm \ref{algo:optcost_compute}, we have 
assumed that all the transitions from $l \xrightarrow{e} l'$ have a guard $0 \leq x \leq 1$. We shall now illustrate how to compute optcost of $l$ if the guards on the outgoing transitions are different. 

While optimal strategies are possible with closed constraints, 
it is known that optimal strategies need not exist with open constraints.  

\paragraph*{Constraints on all the outgoing edges are  $0<x<1$}
 We shall illustrate how to obtain $\epsilon-$optimal strategies 
with open constraints. Consider the Figure \ref{fig:solve_openapp}.
Here the guard is $0<x<1$ and clearly the $\optcost{\ptg}(l,0) = 2$ and there is no strategy to achieve that. Hence 
we want to find the $\epsilon$-optimal strategy achieving $<2+\epsilon$. Pick $t = \frac{\epsilon}{m_{max}+1}$ where 
$m_{max}$ is the slope  with the largest absolute value seen among the outside cost functions. Here $m_{max} = 3$ ($\optcost{(A)}$ is 
$y=-3x+3$).  Let $\epsilon = 0.1$. Then $t = 0.025$. Now lets fix the strategy to wait at $l$ till $x < 1-t$ and go to $A$ at $x=1-t$\footnote{If there are $n$ transitions in the longest path from source to target then $t = \frac{\epsilon}{n* (m_{max}+1)}$.}. Then $\optcost{\ptg}(l,0) = 2*(1-t)+f(1-t)$ where $f$ given by $y=-3x+3$ is the optcost function of $A$. Thus $\optcost{\ptg}(l,0,0) = 2.025 < 2 + 0.1$. 
Extending this to several successors of $l$ is simple and follows all the steps of Algorithm \ref{algo:optcost_compute}. At  $1-t$,  take the transition to the location prescribed by $f'$ in Step 4. Note that this method would work 
for $0<x<1-\delta$ by simply replacing $1$ with $1-\delta$ in the discussion above. 

\begin{figure}
\begin{center}
 
\begin{tabular}{c c c}
\begin{tikzpicture}[->,>=stealth',shorten >=1pt,auto,node distance=1cm,  semithick,scale=0.9]

  \node[min] at (0,1) (l) {$2$};
    \node[locLabel] () [above of = l]{$l$};
  
  \node[min] at (2,1) (A) {$A$};
  
  \draw[trans] (l) -- (A) node[sloped,midway,above] {$0{<}x{<}1$};  
  \node at (2,-2) {};
\end{tikzpicture}
& 
\hspace{1cm}
\begin{tikzpicture}[scale=0.5]
\tikzstyle{every node}=[font=\scriptsize]

    \draw [<->,thick] (0,4) node (yaxis) [above] {$y$}
        |- (5,0) node (xaxis) [right] {$x$};
    \node at (2.5,-1.5) {$\optcost{}(A)$};

    
    \draw[fun2] (0,3) coordinate (b_1) -- (4,0) coordinate (b_2);
    \node[node distance = 2mm,left of=b_1] {$3$};
    \node[node distance = 2mm,below of=b_2] {$1$};
\end{tikzpicture} 
&
 \begin{tikzpicture}[scale=0.5]
\tikzstyle{every node}=[font=\scriptsize]

    \draw [<->,thick] (0,4) node (yaxis) [above] {$y$}
        |- (5,0) node (xaxis) [right] {$x$};

        \node at (2.5,-1.6) {Selectively Replace};
    
    \draw[fun2] (0,3) coordinate (b_1) -- (4,0) coordinate (b_2);
    \node[node distance = 3mm,left of=b_1] {$3$};    
    \node[node distance = 4mm,below of=b_2] {$1$};

    \draw[rep] (0,2) coordinate (d1) -- (b_2);
     \node[node distance = 3mm,left of=d1] {$2$};
     
\end{tikzpicture}
 \end{tabular}
 
\end{center}

\caption{Optcost Computation for guard $0<x<1$}
\label{fig:solve_openapp}
\end{figure}
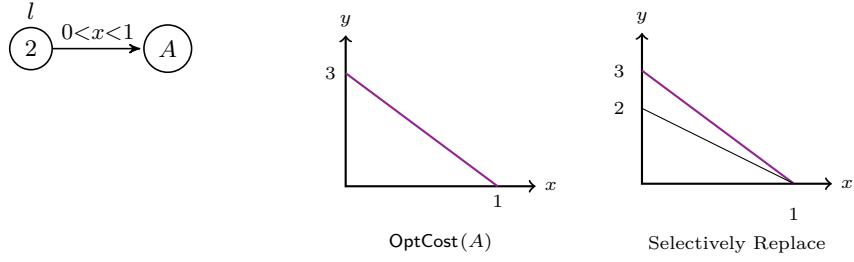

\paragraph*{Constraint $1-\delta<x<1$}
In transformation 1 from RPTG to dwell-time PTG, we replaced the constraint $H<x<H+1$ by $H-\delta <x <H+1-\delta$.
Such a constraint would correspond 
  in the resetfree $\FRPTG$   
to $1-\delta < x <1$ or $0 \leq x < 1-\delta$. 
We have already dealt with the constraint $0<x<1-\delta$.
 Now, we shall highlight the difference   to make it work for $1-\delta < x <1$.
  We shall compute as usual, by applying the steps of Algorithm\ref{algo:optcost_compute}, and also get the prescribed strategy out of the final function $f'$. Now if the computed strategy $\strat$ for $l$ 
suggests to take a transition in the interval $[0,1-\delta]$ then instead of this transition we prescribe waiting at $l$. This is because the guard on the outgoing transition(s) is 
$1-\delta<x<1$. The rest of the strategy prescribed 
by $f'$  over $(1-\delta,1]$ is retained as is.

\paragraph*{Constraints on outgoing edges are $x=0$, $0 \leq x \leq 1$, $x=1$}
Figure \ref{fig:superimpose_diffGuard} explains how to solve for optcost if the outgoing transitions 
have different guards. 

\begin{figure}
\begin{center}
 
\begin{tabular}{c c}
\begin{tikzpicture}[->,>=stealth',shorten >=1pt,auto,node distance=1cm,  semithick,scale=0.9]

  \node[min] at (0,1) (l) {$2$};
    \node[locLabel] () [above of = l]{$l$};

  \node[min] at (2,1) (A) {$A$};
  \node[max] at (2,0) (B) {$B$};
  \node[min,accepting] at (2,2) (C) {$C$};
  
  \draw[trans] (l) -- (A) node[sloped,midway,above] {$x=0$};
  \draw[trans] (l) -- (B) node[sloped,midway,below] {$0{\leq}x{\leq}1$};
  \draw[trans] (l) -- (C) node[sloped,midway,above] {$x{=}1$};
  
  \node at (2,-1) {transitions from $l$, $\prices(l)=2$};
  \node at (2,-2) {};
\end{tikzpicture}
& 
\hspace{1cm}
\begin{tikzpicture}[scale=0.5]
\tikzstyle{every node}=[font=\scriptsize]
    \draw [<->,thick] (0,5) node (yaxis) [above] {$y$}
        |- (5,0) node (xaxis) [right] {$x$};
    \node at (2.5,-1.5) {$\optcost{}(A)$};

    \draw[fun2] (0,3) coordinate (b_1) -- (4,0) coordinate (b_2);
    \node[node distance = 2mm,left of=b_1] {$3$};
    \node[node distance = 2mm,below of=b_2] {$1$};


\draw [<->,thick] (6,5) node (yaxis) [above] {$y$}
        |- (11,0) node (xaxis) [right] {$x$};
    
      \node at (8.5,-1.5) {$\optcost{}(B)$};
        
    \draw[fun1] (6,4) coordinate (a_1) -- (8,0.5) coordinate (a_2) -- (9,3) coordinate (a_3) -- (10,0) coordinate (a_4);
    
    \node[node distance = 2mm,left of=a_1] {$4$};
    
    \coordinate (a_2y) at (6,0.5);
    \node[node distance = 2mm,left of=a_2y] {$0.5$};
    \draw[dashedl] (a_2y) -- (a_2);
    \coordinate (a_2x) at (8,0);
    \node[node distance = 2mm,below of=a_2x] {\rotatebox{90}{0.5}};
    \draw[dashedl] (a_2x) -- (a_2);

    \coordinate (a_3y) at (6,3);
    \node[node distance = 2mm,left of=a_3y] {$3$};
    \draw[dashedl] (a_3y) -- (a_3);
    \coordinate (a_3x) at (9,0);
    \node[node distance = 3mm,below of=a_3x] {\rotatebox{90}{$0.75$}};
    \draw[dashedl] (a_3x) -- (a_3);
    
    \node[node distance = 2mm,below of=a_4] {$1$};    
    

\draw [<->,thick] (12,5) node (yaxis) [above] {$y$}
        |- (17,0) node (xaxis) [right] {$x$};
    
      \node at (14.5,-1.5) {$\optcost{}(C)$};
      
    \draw[fun3] (12,0.5) coordinate (c_1) -- (16,0.5) coordinate (c_2);  
    \node[node distance = 2mm,left of=c_1] {$0.5$};
    \draw[dashedl] (c_2) -- (16,0) node[below] {1}; 

\end{tikzpicture} 
 \end{tabular} 
\end{center}
\begin{tikzpicture}[scale=0.5]
\tikzstyle{every node}=[font=\scriptsize]
    \draw [<->,thick] (0,5) node (yaxis) [above] {$y$}
        |- (5,0) node (xaxis) [right] {$x$};
    \node at (2.5,-2) {$\begin{array}{l l}Select~only~\\ \optcost{}(A)~ at~ x=0\end{array}$};

    \draw[fun2,diminish] (0,3) coordinate (b_1) -- (4,0) coordinate (b_2);
    \node[node distance = 2mm,left of=b_1] {$3$};
    \node[node distance = 2mm,below of=b_2] {$1$};
    
    \fill[red] (b_1) circle (4pt);


\draw [<->,thick] (8,5) node (yaxis) [above] {$y$}
        |- (13,0) node (xaxis) [right] {$x$};
    
      \node at (10.5,-2) {$All~of~\optcost{}(B)$};
        
    \draw[fun1] (8,4) coordinate (a_1) -- (10,0.5) coordinate (a_2) -- (11,3) coordinate (a_3) -- (12,0) coordinate (a_4);
    
    \node[node distance = 2mm,left of=a_1] {$4$};
    
    \coordinate (a_2y) at (8,0.5);
    \node[node distance = 2mm,left of=a_2y] {$0.5$};
    \draw[dashedl] (a_2y) -- (a_2);
    \coordinate (a_2x) at (10,0);
    \node[node distance = 2mm,below of=a_2x] {\rotatebox{90}{0.5}};
    \draw[dashedl] (a_2x) -- (a_2);

    \coordinate (a_3y) at (8,3);
    \node[node distance = 2mm,left of=a_3y] {$3$};
    \draw[dashedl] (a_3y) -- (a_3);
    \coordinate (a_3x) at (11,0);
    \node[node distance = 3mm,below of=a_3x] {\rotatebox{90}{$0.75$}};
    \draw[dashedl] (a_3x) -- (a_3);
    
    \node[node distance = 2mm,below of=a_4] {$1$};    
    

\draw [<->,thick] (16,5) node (yaxis) [above] {$y$}
        |- (21,0) node (xaxis) [right] {$x$};
    
      \node at (18.5,-2) {$\begin{array}{l l}Delay~at~l~till~x=1\\then~go~to~C\end{array}$};
      
    \draw[fun3,diminish] (16,0.5) coordinate (c_1) -- (20,0.5) coordinate (c_2);  
    \draw[rep] (16,2.5) coordinate (d_1) -- (20,0.5) coordinate (d_2);
     
    \node[node distance = 2mm,left of=c_1] {$0.5$};
    \draw[dashedl] (c_2) -- (20,0) node[below] {1};
    \node[node distance = 2mm,left of=d_1] {$2.5$};

\end{tikzpicture}
\caption{Optcost Computation for different guards}
\label{fig:superimpose_diffGuard}
\end{figure}
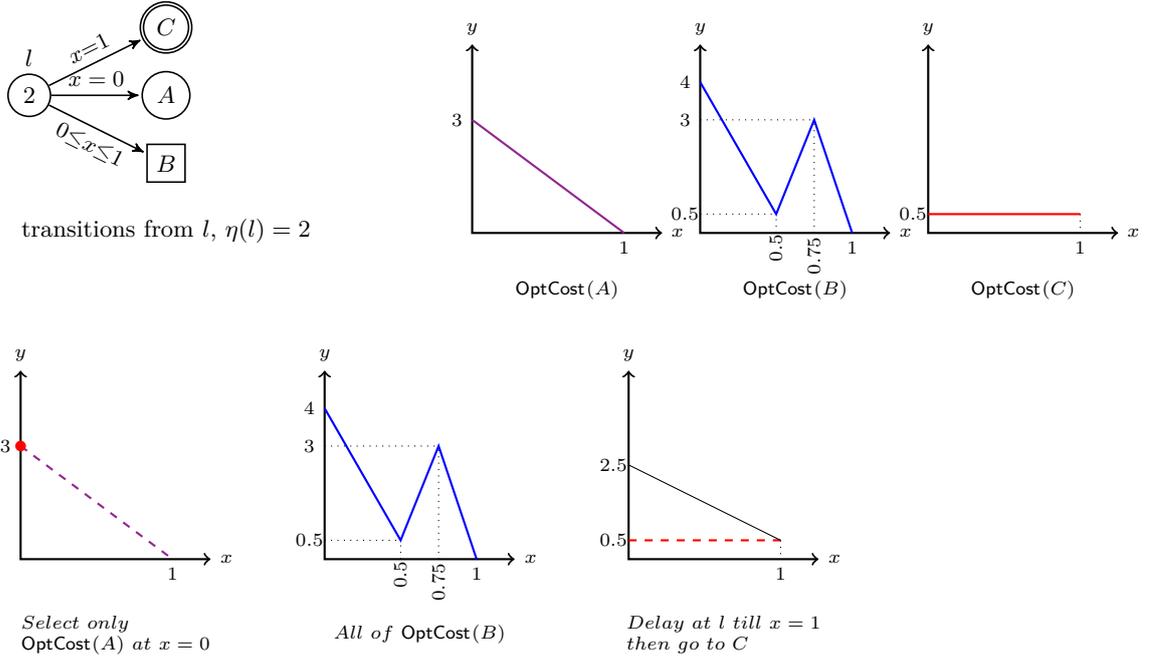

As Player~1 can go to $A$ only if $x=0$, we need to consider only that point of $\optcost{}(A)$ while 
computing the optcost of $l$. Similarly, player 1 can go to $C$ only when $x=1$. Thus the function to consider, 
for taking the transition to $C$ is the 
cost of the path (or action) of delaying in $l$ till $0\leq x<1$ and going to $C$ at $x=1$. 
Upon reaching $C$ at $x=1$, the cost incurred will be $\optcost{}(C,1) = 0.5$. 
Delaying at $l$ at the rate of $\prices(l)=2$ yields a function with slope $-2$ passing through 
the point $(1,0.5)$ (corresponding to going to $C$ at $x=1$).

\section{$\lmin$ is a player 2 location}
\label{app:pl2}
Here we discuss in detail how to take care of the dwell-time requirements 
while running Algorithm \ref{algo:optcost_compute}. 
Recall that there are three kinds of player 2 locations : urgent, those with dwell-time requirements 
$[0, \delta]$ and those with dwell-time requirements $[\delta,2\delta]$. 

\begin{enumerate}
 \item \textbf{Urgent location} : superimpose and take exterior (only Steps 1 and 2).
 
  \item \textbf{$[0,\delta]$-delay location} : From Lemma \ref{lem:replaceFun}, we know that Player~2 will want to spend as much time as possible at a location $\lmin$ while keeping $x \leq v_i$ whenever 
  there is a function $g_i$ over $[u_i, v_i]$ whose slope is $> -\eta(\lmin)$. 
  Note that we proved Lemma \ref{lem:replaceFun} for player 1, however,
  an analogous result works when $\lmin$ is a player 2 location.

  Thus, if $\lmin$ is entered at $x=\val \in[u_i,v_i - \delta]$, then  player 2 spends $\delta$ time and exits (as $\delta$ is the maximum delay permitted in $\lmin$ by the dwell-time restriction) at $\nu + \delta$ to the successor as prescribed by $f$ at $x=\val+\delta$. If $\lmin$ is entered at $x=\val \in [v_i - \delta,v_i]$, then player 2 spends $v_i - \val$ at $\lmin$ and exits at $v_i$ to the successor as prescribed by $f$ at $v_i$. 
  In the superimposed optcost function $f$, a function $g_i:y = -mx+c$ having domain $[u_i,v_i]$ with slope less than $-\prices(l)$ is replaced as follows : alter $g_i$ from  $y = -mx+c$ to 
 $y = -m(x+\delta) + c + \prices(l) * \delta = -mx + c + (\prices(l) - m)*\delta$ for $x \in [u_i,v_i-\delta]$. Let us denote the new function as $h_i$ over the domain $[u_i, v_i-\delta]$. This corresponds to spending $\delta$ time until $x \leq v_i$.
 
 When $x \in [v_i - \delta, v_i]$, then Player~2  spends the time $v_i-x$ at $\lmin$ before proceeding,
  as prescribed by $f$ from $v_i$ onwards. Thus the function obtained by replacing $g_i$ for this range $[v_i - \delta, v_i]$, 
   $h'_i$ is $y = -\prices(\lmin)x + c'$,  and passes  through the point $(v_i,f(v_i))$.  
   However, $h'_i$ should intersect with $h_i$ at $v_i-\delta$ to make the resulting improved optcost function continuous (and thus usable by the predecessors of $\lmin$). We shall show that the line passing through the two points $(v_i-\delta,h_i(v_i - \delta))$ and $(v_i,f(v_i))$  has a slope $-m'=-\prices(l)$. 
 
 We have $g_i$, the original cost function, and $h_i$, 
  and we know that from $v_i$ onwards, Player~2 has to continue with the optcost as dictated by $f$ (the superimposed function). Thus we know that from the point $(v_i - \delta, h_i(v_i-\delta))$  of the new  function $h_i$, the optcost will proceed towards the point $(v_i,f(v_i))$ (recall that $g_i(v_i) = f(v_i)$). Thus given these two points, we find the line $h'_i = -m'x + c'$ as follows. 

 \begin{align*}
  -m' &=~ \frac{f(v_i) -h_i(v_i-\delta)}{v_i-(v_i-\delta)}\\
  ~&= ~\frac{[-m*v_i+c] - [-m(v_i - \delta) + c + (\prices(l) - m)*\delta]}{\delta}\\
   ~&=~\frac{-\prices(l) * \delta}{\delta}\\
   ~&=~ -\prices(l)  
  \end{align*}
 
 Similarly, we also find $c'$ by using $h'_i = -m'x + c'$ where slope is $-m' = -\prices(l)$ and this line passes through the point 
 $(v_i,f(v_i))$. 
 \begin{align*}
  f(v_i) ~&=~ -\prices(l) * v_i + c' \\
    -m* v_i + c ~&=~ -\prices(l) * v_i + c' \\
   c' ~&=~ c + (\prices(l) - m)*v_i
  \end{align*}

\item \textbf{$[\delta,2\delta]-$delay location} : For every function $g_i$ in $f$ (the superimposed function) of Step 2, we first apply the modification of always spending $\delta$ delay at $\lmin$. This is achieved by changing it from $y = -mx+c$ to $y = -m(x+\delta) + c + \prices(l) * \delta$. The domain of $g_i$ also changes from $[u_i,v_i]$ to $[u_i - \delta, v_i - \delta]$. Thus the entire superimposed function $f$ has been modified to $f'$ (lets call it the adjusted superimposed function). 
After this, proceed with $\lmin$ as though it were a $[0,\delta]$-delay location while taking $f'$ to be its adjusted superimposed function. 
\end{enumerate}

\subsection{Complexity and Termination when $\lmin$ is a player 2 location}
 \noindent{\it {\bf Computing Almost Optimal Strategies}}: The strategy corresponding to  
computed optcost when $\lmin$ is a player 2 location is derived as follows. 
\begin{enumerate}
\item $\lmin$ is urgent. 
In this case, we simply do steps 1,2 of the algorithm, superimpose and 
take exterior obtaining the function $f$. 
For $x \in [u_k, v_k]$, the strategy will dictate moving to 
location $l_k$, since $g_k$ is the optcost function 
over the domain $[u_k,v_k]$ of the successor $l_k$ of $\lmin$.

\item $\lmin$ is a $[0, \delta]$-dwell time location.
If $x \in [u_i, v_i-\delta]$ and the function is $h_i$, 
the strategy will prescribe waiting at $\lmin$
for $\delta$ amount of time and then proceed to $l_i$ 
whose cost function is $g_i$, the one replaced by $h_i$. 
If $x \in [u_i, v_i-\delta]$ and the function is $g_i$ (not replaced at Step 3), 
then the strategy suggests going immediately to $l_i$ whose cost function is $g_i$. 
Finally, if $x \in [v_i-\delta, v_i]$ for functions $h'_i$, we prescribe waiting at $\lmin$ till $v_i-x$. 

\item $\lmin$ is a $[\delta,2\delta]$-dwell time location.
The strategy prescribes waiting for $\delta$ time at $\lmin$, and 
then uses the strategy prescribed above for 
$[0, \delta]$-dwell time locations.
\end{enumerate}

We have already discussed the complexity of Algorithm 1 
in computing the optcost function for $\lmin$, and the almost optimal strategies 
when $\lmin$ is a player 1 location. Now we discuss the case when $\lmin$ is a player 2 
location. 

Assume $\lmin$ is a player 2 location. 
Let $\alpha(m,p)$ denote the total number of 
affine segments appearing in cost functions 
across all locations. 
We handle this case by making $l_{min}$ urgent and 
 solve the modified PTG $\ptg'$ (where $l_{min}$ is urgent) 
 which has one location less and then uses the 
 computed optcost functions as outside cost functions to solve for $l_{min}$ itself. 
 This can be  repeated as the optcost computed in $\ptg'$ could get updated when the optcost 
 cost of $l_{min}$ itself is computed. This process  gets repeated as many times as the number of 
 segments we started with i.e $p$. Thus the equation is $\alpha(m,p) \leq p . (1 + \alpha(m-1,p+1))$ where 
 $\alpha(m-1,p+1)$ is the number of segments used for solving $\ptg'$. 
 $1 + \alpha(m-1,p+1)$ is the number of segments used for solving for $l_{min}$ 
 and $p(1+\alpha(m-1,p+1))$  are the repetitions. 
  Solving this, one can easily check that $\alpha(m,p)$ is at most triply exponential 
in the number of locations $m$ of the resetfree component $\frptg$.
Obtaining a bound of the number of affine segments, it is easy to see that Algorithm 1 terminates; the time taken to compute almost optimal strategies and optcost functions is triply exponential.

\end{document}